\documentclass[9pt]{article}
\usepackage[utf8]{inputenc}
\usepackage{standalone}
\usepackage{float}
\pdfoutput=1
\usepackage[margin=1.1in,letterpaper]{geometry}
\usepackage{physics}
\usepackage{amsmath,amssymb,amsthm}
\usepackage{mathtools}
\usepackage{bm}
\usepackage{enumitem}
\usepackage{graphicx}
\usepackage{multirow}
\usepackage[numbers, square, sort&compress]{natbib}
\DeclareMathOperator\sign{sgn}
\usepackage{subcaption}
\usepackage[colorlinks=true, urlcolor=blue, citecolor=black]{hyperref}
\usepackage{authblk}
\usepackage[dvipsnames]{xcolor}
\usepackage{algorithm}
\usepackage{algpseudocode}

\newtheorem{theorem}{Theorem}[section]
\newtheorem{lemma}[theorem]{Lemma}
\newtheorem{corollary}[theorem]{Corollary}
\newtheorem{proposition}[theorem]{Proposition}

\theoremstyle{definition}
\newtheorem{definition}[theorem]{Definition}
\newtheorem{remark}[theorem]{Remark}

\numberwithin{equation}{section}

\newcommand{\add}[1]{\textcolor{black}{#1}}

\newcommand{\name}{
                    GEMINI
                    }
\newcommand{\Name}{
                    GEMINI
                    }

\title{\Name: Generalized Ensnarlment Measure from Incomplete-linkage of Network-network Interactions
        }
\author[1,2,3]{Yu Tian}%
\author[1,2,3]{Chinmayi Subramanya}
\author[1,2,3]{Carl D. Modes}
\affil[1]{\textit{Max-Planck Institute of Molecular Cell Biology and Genetics,
01307 Dresden, Germany}}
\affil[2]{\textit{Center for Systems Biology Dresden, 01307 Dresden, Germany}}
\affil[3]{\textit{Cluster of Excellence, Physics of Life, TU Dresden, 01307 Dresden, Germany}}

\date{}

\begin{document}

\maketitle

\begin{abstract}
    Spatially embedded networks are central to many physical and biological systems, where geometry and connectivity jointly shape structure and function. 
    Examples abound across the scales of biological organization, from network-like membrane-bound organelles in the cell to mesoscale tissue organization of multiple distinct flow networks in organs and beyond.
    In each of these cases, the complexity of the architectures has heretofore frustrated our ability to link mechanism or regulation of these structures to reduced modeling or even relevant characterization, putting structure-function relationships largely out of reach.
    Complex, functional spatial networks can be decomposed into tree-like and cyclic substructures, but we still lack both an understanding of how these elements intertwine to give rise to function, and the tools to holistically quantify both the topological and geometric aspects of these features in their full network context.
    To close this gap, we here introduce \textit{\Name}, a topology and geometry aware operator that directly characterizes incomplete linking and more general spatial associations between edges in spatially embedded network architectures.
    \Name contains information on edge-edge association through an incomplete version of the Gauss linking integral which simultaneously endows it with topological sensitivity when collections of edges form linked assemblages.
     Validation on both synthetic lattices and on mouse brain vasculature data demonstrates that \Name systematically captures and classifies the complexity of structural organizations. 
    Our results provide a general approach for analyzing spatial networks in realistic data, where topology and geometry together determine function, thus opening the door to a more complete understanding of structure-function relationships across a broad set of biological examples where complex network organization is key. 

\end{abstract}


\section{Introduction}
Spatially embedded networks arise ubiquitously across the natural and engineering sciences, where geometry and topology jointly constrain system-level organization and function \cite{barthelemy2022spatial}.  
Examples range from transportation and infrastructure systems to biological tissues, vascular networks, and neural architectures \cite{oconnor2022vascular,sporns2010brain,stiso2018spatial,seguin2023neuronet}. 
In such systems, spatial embedding is not a secondary attribute but a defining feature: distances, orientations, and other geometric constraints shape how interactions form, propagate, are transduced, and further adapt over time \cite{stiso2018spatial}. 
Understanding how spatial structure influences network organization is, therefore, central to uncovering the principles governing these systems' behavior. 

Since the late 1990s, research in complex networks has largely re-framed networks as abstract graphs, prioritizing topological descriptors over explicit geometric embeddings \cite{newman2018networks,barabasi2016networksci}.
Seminal models such as the Erd\"{o}s-R\'{e}nyi random graphs formalized connectivity as independent probabilistic events on a node set, with no metric notion of distance \cite{erdos1959random,Bollobas_2001}.
The subsequent small-world model and preferential attachment model reinforced the idea that short path lengths and heavy-tailed degrees can be understood as broadly universal features, without reference to physical space, in most cases \cite{watts1998model,barabasia1999model}.
These efforts enabled analytically solvable null models and scalable statistics, and matched the at-the-time absent or incomplete spatial metadata \cite{newman2003review}.
However, in genuinely spatial systems, abstracting away geometry can erase, for example, planarity constraints, wiring costs, and geometric bottlenecks, and purely topological analysis may be systematically biased \cite{barthelemy2022spatial}.
These tensions motivated a more recent shift towards explicitly spatial models and methods, where geometry is treated as a fundamental, co-determining variable rather than optional metadata.

Much of the existing research on spatially constrained networks is predominantly node-centric. 
In network science, spatial information is typically incorporated through node locations, node densities, or distance-based null and generative models, whereas edges are still treated primarily as abstract adjacency relations \cite{barthelemy2022spatial,penrose2003rgg,expert2011commuspatial,sarzynska2016null}. 
This perspective overlooks the fact that, in many real systems, edges themselves are physical entities as well, such as roads, vessels, fibers, or conduits, whose lengths, orientations, and mutual arrangements encode essential structural and functional information \cite{dehmamy2018physnet,oconnor2022vascular}.
Recent advances in data science and machine learning have likewise incorporated geometric information into their pipelines, such as geometry-aware graph Laplacians and learning architectures that integrate spatial or symmetry priors \cite{coifman2006diffusion,bronstein2017geometric}.
While powerful, these approaches still predominantly operate on node-level representations, leaving under-characterized key aspects of spatial organization associated to the architecture of the edges, where many physical and biological processes actually take place. 

Among previous efforts to capture edge-level organization in spatial networks, much attention has been directed toward higher-order structures, such as paths, cycles, cliques, and related topological descriptors \cite{katifori2010loop,corson2010ot,modes2016hierarchies,liu2021isotopy,pascucci2011topological,glover2024acn,shen2024kda}. 
Cycles, in particular, have been shown to play an important role in transport efficiency, robustness, and redundancy, particularly in the presence of damage and fluctuating loads, whereas tree-like structures often emerge as efficient solutions under stationary transport demands and developmental constraints \cite{katifori2010loop,corson2010ot}. 
More generally, the balance between cycles, branching, and disorder has been linked to functional trade-offs in both physical and biological systems \cite{dahl-jensen2018pancreas}. 

\begin{figure}[t]
    \centering
    \includegraphics[width=\textwidth]{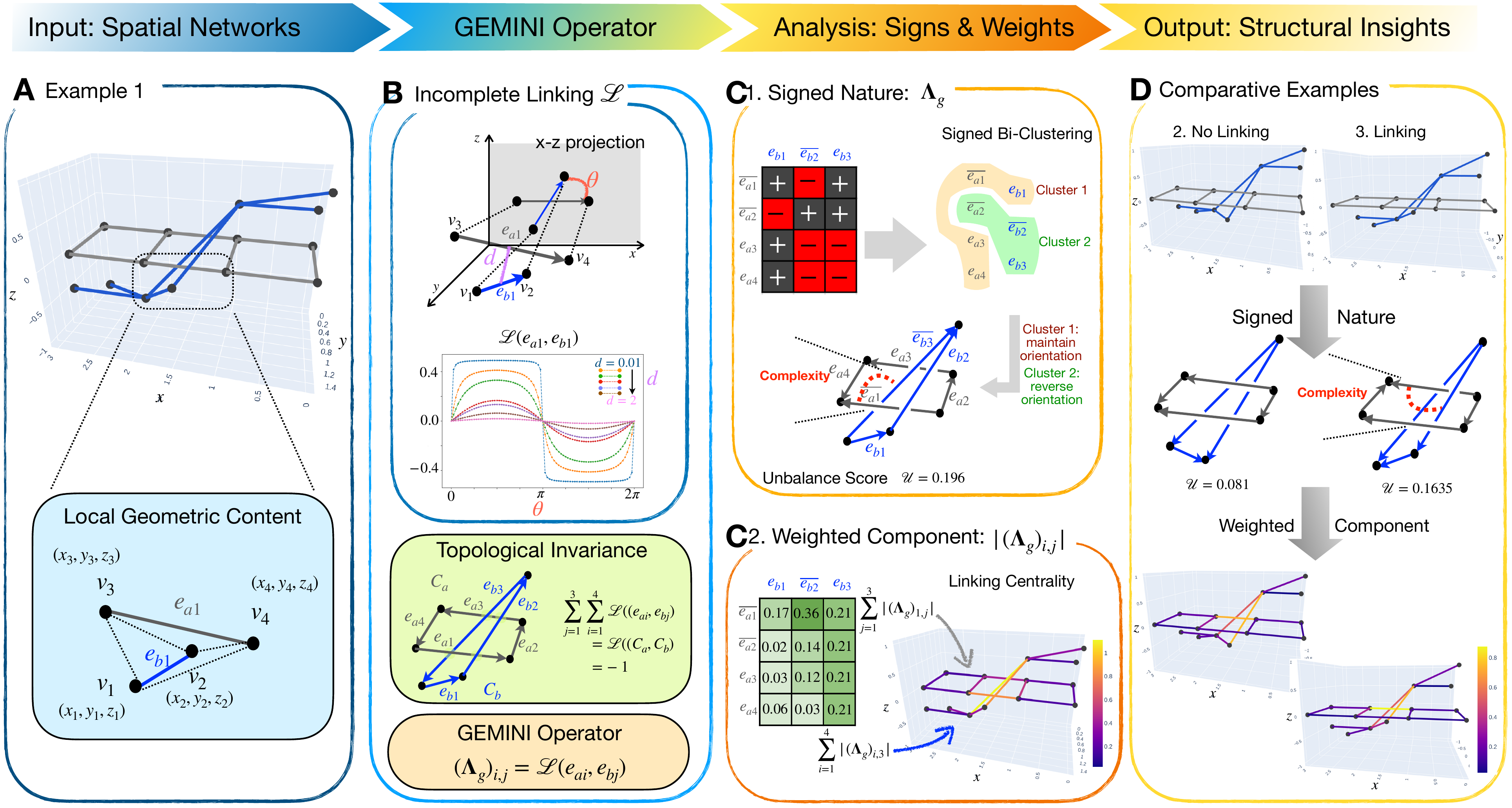}
    \caption{Illustration of the GEMINI operator.}
    \label{fig:schematic}
\end{figure}

Beyond the structure of a single network, complexity may also arise from interactions between two spatial networks sharing the same embedding space. A natural question is whether cycles from the two networks mutually constrain one another in the space, particularly when crossings are costly or forbidden. 
In this spirit, Kramer and Modes introduced \textit{ensnarlment} as a mesoscopic notion of inter-network intertwining: two spatially embedded network components are ``ensnarled'' if they cannot be separated without removing a subset of edges \cite{kramer2023ensnarl}. 
They quantified this notion through the linking number between cycles derived from the graphs' cycle bases, computed via the Gauss linking integral \cite{ricca2011gauss}. In this way, ensnarlment provides a principled description of complexity emerging between two spatial networks, revealing inter-network spatial constraints that are invisible to representations based solely on abstract adjacency. 

However, approaches based on explicit cycle detection face three fundamental challenges. 
First, identifying and enumerating cycle bases is computationally expensive and scales poorly with network size and spatial complexity. Second, for empirical systems, real-world spatial networks are often incompletely observed: missing, uncertain, or occluded links are common due to measurement limitations, noise, or biological variability. In such settings, cycle-based characterizations can become unstable or ill-defined. Third, and most critically, many real-world networks may simply not contain cycles. For example, as the pancreas develops, it builds and then transitions through cyclic structures, later simplifying into a tree topology \cite{dahl-jensen2018pancreas}.
Together, these challenges reveal a broader gap in spatial network analysis: the lack of tools that capture the interwoven spatial organization without relying on cycles or fully observed connectivity.

In this paper, we address this gap by introducing \textit{\Name}, an edge-centric, geometry-aware operator that directly quantifies incomplete linking between edges in spatial networks (Fig.~\ref{fig:schematic}) \footnote{\Name can be found at \url{https://github.com/yutian-research/EnsnarlNet}.}.
By employing an incomplete open-curve form of the Gauss linking integral, this operator bypasses explicit cycle enumeration and remains well-defined even when connectivity information is partial. 
A distinctive feature of the operator is that it naturally gives rise to both positive and negative interactions, representing the relative orientation of edges in space.
We show that, despite local orientation dependence, the operator's spectral properties constitute intrinsic characteristics of the spatial network and are invariant under changes in individual edge orientation. 

Building upon this perspective, we define an associated unbalance score that provides a principled measure of edge-level complexity in ensnarled spatial networks. 
We further develop an algorithm to detect patterns of intrinsic negative linking and identify edges and regions that play distinct functional roles. 
Furthermore, by exploring the associated \textit{linking space}, we propose a linking centrality that characterizes the importance of individual edges to the overall geometric and topological architecture of the system.
We validate our framework on both synthetic regular lattices and real spatial networks, including mouse brain vasculature, where it captures regional differences in complexity.
In particular, \Name reveals that in the cerebellum and medulla, flow alignment is prioritized by the vascular architecture, while in the bridging region including the midbrain, delivery between specific points is emphasized.
Overall, our results establish a general and scalable approach for edge-centric characterization of spatial networks, opening the door to a more complete understanding of spatial architecture in 3D network systems. 

\section{The \Name operator for spatial networks}
\label{sec:Gauss_linking_integral}
In order to build towards the \Name operator, we first relax the classical Gauss linking integral to characterize not only closed curves, but also open ones, the \textit{incomplete Gauss linking integral}. We show its invariance properties subject to isometry transformations, and discuss a particularly interesting special case when the underlying open curves are line segments. 

The geometric linking between closed curves in three-dimensional space can be quantified by the \textit{Gauss linking integral}, a classical topological invariant measuring the number of times the curves wind around each other \cite{ricca2011gauss}. 
Originally introduced in electromagnetism and later developed in knot theory, the linking number has found applications ranging from magnetic helicity in fluid dynamics \cite{Moffatt1969fluid} to the topology of polymers and DNA \cite{klenin2000glicomp}. Recently, it has also been applied to characterize spatial organization of networks through quantifying the linking between cycles \cite{kramer2023ensnarl}. However, such approaches implicitly rely on well-defined cycles and complete connectivity information, assumptions that can be violated in empirical networks. 
These motivate the development of a more general framework capable of capturing the incomplete linking within and between spatial networks, through the \textit{incomplete Gauss linking integral}: 

\begin{equation}
 \mathfrak{L}(\gamma_1, \gamma_2) 
        = \frac{1}{4\pi}\int_{0}^1\int_{0}^1({\gamma'}_1(s)\times {\gamma'}_2(t)) \cdot \frac{(\gamma_1(s) - \gamma_2(t))}{\norm{\gamma_1(s) - \gamma_2(t)}^3} ds \, dt
        \label{eq:incomplete gli}
\end{equation}
where $l_1, l_2$ are disjoint (open or closed) curves parametrized as $\gamma_1(s), \gamma_2(t)$, respectively, where $\gamma_1, \gamma_2: [0, 1]\to \mathbb{R}^3$. Note that the incomplete Gauss linking integral is still a symmetric, bilinear mapping, and further that composition of multiple curve segments can be done by linear superposition. The classical Gauss linking integral may therefore be recovered by composing multiple open curve segments into a pair of closed curves that are (or are not) linked with each other. See Appendix \ref{app:incomplete_GLI} for further details of the basic properties such as bilinearity and superposition and Appendix \ref{app:incomplete_GLI-invariance} for details of the useful isometries and invariance properties exhibited by the incomplete Gauss linking integral.

\label{sec:operator_analysis}

We now introduce the concept of \textit{generalized ensnarlment}, which is not restricted to the ensnarled state but a global feature that summarizes the ``intertwining'' between all pairs of edges in the two spatial networks.
We quantify the intertwining between edges through the incomplete linking, measured by the incomplete Gauss linking integral, and propose the \textit{GEMINI operator}: Generalized Ensnarlment Measure from Incomplete-linkage of Network-network Interactions. 
Note that our framework can also be applied to one spatial network instead of two, but for illustrative purposes, we discuss exclusively the case of two spatial networks in this section.

Specifically, for two spatial networks with the corresponding edge sets $E_1, E_2$, the \textit{\name operator}, denoted by $\mathbf{\Lambda}_g\in \mathbb{R}^{m_1\times m_2}$, is defined as
\begin{align}
    (\mathbf{\Lambda}_g)_{i,j} = \mathfrak{L}(e_i, e_j),
    \label{eq:gemini operator}
\end{align}
where $m_1 = \abs{E_1}$, $m_2 = \abs{E_2}$ are the sizes of the two edge sets, the function $\mathfrak{L}$ is the incomplete linking as defined in \eqref{eq:incomplete gli}, and for each edge, we fix one choice of direction, where we consider each edge $e=(i,j)$ as a line segment from node $i$ to $j$ ($i<j$). 
We will show later that, although the individual entries depend on the arbitrary choice of edge orientation, changing the orientation of one edge reverses the sign of its incomplete linking with all edges in the other network, the operator possesses well-defined global structures, with key spectral properties remaining invariant under such orientation changes. In principle, one should consider both possible choices of direction for each edge, but we argue that the \name operator we consider here already contains all relevant information (see Appendix \ref{app:gemini-property}).

It is further interesting to note that the \name operator derives a number of useful features from underlying bipartite structures. Specifically, there exists an explicit bipartite representation of the \name operator itself, as well as further bipartite structure emerging from the signed nature of the edge weights. The \name operator $\Lambda_g$ admits a direct graph theoretic interpretation. Given two spatial networks with edge sets $E_1$ and $E_2$, the \name operator assigns a real value to every pair of edges $(e_i, e_j) \in E_1 \times E_2$, giving rise to a natural weighted bipartite structure. Define a bipartite graph $G_b = (V_b, E_b)$ with bipartition $V_b = E_1 \cup E_2$, such that each node in the first partition corresponds to an edge $e_i \in E_1$ and each node in the second partition corresponds to an edge $e_j \in E_2$. For every pair $(e_i, e_j)$, a bipartite edge is introduced with weight $w_{ij} = (\Lambda_g)_{i,j}$ unless $(\Lambda_g)_{i,j} = 0$ in which case the edge is excluded.
Under this construction, the \name operator $\Lambda_g \in \mathbb{R}^{m_1 \times m_2}$ is precisely the weighted biadjacency matrix of the bipartite graph as defined in \eqref{eq:gemini operator}.
Equivalently, the weighted adjacency matrix of the full bipartite graph $G_b$ is 
\begin{align}
    \mathbf{W}_b = \begin{pmatrix}
        \mathbf{0} & \mathbf{\Lambda}_g \\ 
        \mathbf{\Lambda}_g^T & \mathbf{0}
    \end{pmatrix}.
\end{align}
This formulation explicitly shows that the \name operator encodes a weighted bipartite graph whose nodes are spatial edges and whose edge weights measure pairwise geometric ensnarlment. 

One prominent feature of the \name operator is that it can take both positive and negative values, and a negative value can become positive by flipping the orientation of one of the corresponding edges in the original spatial networks. We refer to these edges as ``spatial edges'' hereafter. One may thus exploit the signed nature of the bipartite graph underlying the \name operator. 
Specifically, we will show that even though the individual edge weight in the bipartite signed graph has an arbitrary sign, the eigenvalues of the weight matrix $\mathbf{W}_b$, and thus the \name operator itself, remain invariant. 

To achieve this goal, consider \textit{switching} and \textit{switching equivalence}, important concepts in classifying signed graphs. \textit{Switching} is the operation of reversing the signs of all edges connecting a subset $S\subseteq V$ and its complement $V\backslash S$, \add{where $V$ is the node set}. Two signed configurations of the edges are \textit{switching equivalent} if one configuration can be reached from the other by a switching operation. The corresponding equivalence classes are called switching classes.  An important feature of switching equivalence is that the spectra of signed graphs remains invariant in the same switching equivalence class \cite{Atay_signedCheeger_2020,zaslavsky1982signed}.

Among the switching equivalence classes, the one corresponding to \textit{structural balance} is of particular interest. Introduced in the 1940s \cite{heider_1946_psychology} and primarily motivated by social and economic networks, structural balance is a fundamental notion in the study of signed graphs \cite{harary_1953_balance,cartwrightharary_1956_gbalance}. 
A signed graph is structurally balanced if all cycles in the graph have an even number of negative edges, rendering the cycles ``positive''. Structurally balanced signed graphs exhibit desirable properties and have stimulated new methods in various contexts, including social networks \cite{Kunegis_2009_Zoo,Symeonidis_2013_multiway,Wu_2012_eigenspace}, economic systems \cite{tian_2021_role,Bartesaghi2025finance}, and biological networks \cite{Symeonidis_2013_biology}.
In particular, a signed graph is structurally balanced if and only if it is switching equivalent to the signed graph with all edge signs positive. 

In the context of the \name operator, the structural balance of the bipartite signed graph indicates that we can find a global orientation of the spatial edges such that the incomplete linking between each pair of them is nonnegative, i.e., their relative orientations are arranged in an aligned manner with how they (potentially) cross each other.
Therefore, structural balance imposes strong conditions on the relative (geometric and topological) alignments between the edges in the spatial networks: whether the bipartite signed graph associated with the \name operator is structurally balanced, and if not, how far the spatial edges can be oriented to approach structural balance, provide a natural measure to quantify the complexity of the underlying system. 
For instance, the \name operator can be interpreted within the framework of spin systems by associating each spatial edge with a spin variable and treating the operator entries as coupling strengths between spins. In this analogy, structural balance corresponds to the existence of a spin configuration in which all interactions are satisfied, resulting in a frustration-free state. The degree of frustration, commonly used to quantify \textit{unbalance} in signed networks, i.e., the opposite to structural balance, therefore provides a natural measure of the complexity of the corresponding spin system.
In our framework, we quantify the level of complexity, or frustration, induced by the \name operator of the spatial networks through the following \textit{unbalance score}: 
\begin{align}
    \lambda_{min} = \min\{\lambda: \mathbf{L}_{rw}\mathbf{x} = \lambda\mathbf{x},\, \mathbf{x}\ne \mathbf{0}\},
\end{align}
i.e., the smallest eigenvalue $\lambda_{min}$ of the ``signed random-walk Laplacian'' $\mathbf{L}_{rw} = \mathbf{I} -  \mathbf{D}_b^{-1}\mathbf{W}_b$, where $\mathbf{D}_b = \mathbf{Diag}(\mathbf{d})$, $\mathbf{d} = (d_i)$, and the degree $d_i = \sum_{j}\abs{(\mathbf{W}_b)_{ij}}$ (cf. \cite{tian2024sign}). 
Specifically, $\lambda_{\min} = 0$ if and only if the signed graph is structurally balanced. 
Note that there are many other walk-based measures proposed to measure the level of balance of the signed graph, 
and we choose the eigenvalue-based one inspired by invariance under edge orientation as shown in Theorem \ref{the:switching-eigvals}. 

If the associated bipartite signed graph is not structurally balanced, then a natural follow-up question is whether we can \textit{find a global orientation of the spatial edges that minimizes the negative edges in the associated bipartite signed graph}. 
This question is equivalent to the \textit{signed bi-clustering} problem, i.e., to find a bipartition of the signed graph such that the number of positive edges outside and the number of negative edges inside are minimized, since the negative edges outside will become positive if we perform a switching operation on one part \add{of the bipartition}.
The exact solution to this problem is equivalent to solving the frustration index or line index of balance, i.e., the least number of edges one needs to flip sign for the signed graph to become structurally balanced \cite{abelson1958frustration,harary1959frustration}. The latter is NP-hard and equivalent to the ground state calculation of the Sherrington-Kirkpatrick spin glass model \cite{barahona1982computational,sherrington1975solvable}. 
However, signed spectral clustering can provide a reasonably good candidate, especially when the corresponding signed graph is not far from being structurally balanced \cite{Atay_signedCheeger_2020, Kunegis_signspect_2010} (error bound from the signed Cheeger inequality in terms of the smallest eigenvalue of the signed Laplacian). 
We give the exact algorithm we use in our framework in Appendix \ref{app:signed_analysis}.

Finally, the \name operator allows us to construct a vector space over the embedded network architectures, which we refer to as the \textit{linking space}. Each row of the \name operator induces a feature vector for each edge in $E_1$, where each dimension encodes the incomplete linking with the corresponding edge in $E_2$. We can consider these feature vectors as embeddings in the corresponding space. Similarly, each column induces a feature vector for each edge in $E_2$, where each dimension characterizes the incomplete linking with the corresponding edge in $E_1$. The linking space is then where these feature vectors lie.

\add{To eliminate the effect of orientation, we consider edges in both directions. Specifically, we introduce the operator that contains the incomplete linking between each pair of edges in both directions, the full operator $\mathbf{\Lambda}_f \in \mathbb{R}^{2m_1\times 2m_2}$, where the first $m_1$ rows correspond to all edges in $E_1$ along the direction in the rows of $\mathbf{\Lambda}_g$ and the following $m_1$ rows correspond to all edges in the same order but along the opposite direction, and the columns are aligned in the same way but for edges in $E_2$.}
The feature vector for an edge $e_i\in E_1$ is then the $i$-th row vector of the full operator, $(\mathbf{\Lambda}_f)_{i:} = ((\mathbf{\Lambda}_f)_{i1}, (\mathbf{\Lambda}_f)_{i2}, \cdots)$, and the feature vector for an edge $e_j\in E_2$ is then the $j$-th column vector of the full operator $(\mathbf{\Lambda}_f)_{:j} = ((\mathbf{\Lambda}_f)_{1j}, (\mathbf{\Lambda}_f)_{2j}, \cdots)$. 

From the feature vectors, we propose a novel centrality measure of the spatial edges, characterizing their importance from the perspective of incomplete linking. Specifically, we define the centrality of an edge $e_i\in E_1$ as the sum of the absolute value (i.e. the 1-norm) of its feature vector, subject to a normalization:
\begin{align}
    c(e_i) = \frac{1}{2}\sum_{j}\abs{(\mathbf{\Lambda}_f)_{ij}} = \sum_{j}\abs{(\mathbf{\Lambda}_g)_{ij}} = \sum_{e_j\in E_2}\abs{\mathfrak{L}(e_i, e_j)}, 
\end{align}
and for an edge $e_j\in E_2$, similarly, we define the centrality as
\begin{align}
    c(e_j) = \frac{1}{2}\sum_{i}\abs{(\mathbf{\Lambda}_f)_{ij}} = \sum_{i}\abs{(\mathbf{\Lambda}_g)_{ij}} = \sum_{e_i\in E_1}\abs{\mathfrak{L}(e_i, e_j)}, 
\end{align}
Therefore, any contribution to the linking, regardless of being positive or negative, is taken into account in the centrality measure. 
Furthermore, to compare the linking centrality of edges in one set versus the other, we assume that the linking between edges inside the same set is negligible. Summing over the linking between all edges in both sets can then be well approximated by the linking between the edges in the other set. 

The linking centrality of an edge is larger if it has a higher absolute value of incomplete linking with other edges. In many applications, it may be important to identify the most ensnarled part, or to determine how to disentangle the structure by removing edges. With the linking centrality, we can propose simple, straightforward heuristics to solve these problems, e.g.: identify edges with the highest linking centrality for the former, and then remove edges according to their linking centrality for the latter. 

\section{\Name on simple examples}

\subsection{Ensnarled symmetric step lattices}
We begin our demonstration of the \Name operator by examining a simple example of regular lattices that are ensnarled with each other. Specifically, we consider the ``step" graphs, and construct the \name operator accordingly; see Fig.~\ref{fig:eg-step-p1}. Specifically, these lattices have the following unit: (i) one ladder lattice of two squares and folded in a way such that one square lies in the plane perpendicular to the other, and (ii) one copy of such a ladder lattice but rotated along the shared edge of the two squares by angle $\pi/2$ and translated along the direction such that the two edges perpendicular to the shared edges going through the centers of the two squares of the other ladder lattice; see Fig.~\ref{fig:eg-step-p1}A. We refer to lattices with such a unit as ensnarled step lattices, and refer to edges in one ladder lattice as one set of edges, $E_1$, and edges in the other ladder lattice as the other set of edges $E_2$. 

\begin{figure}[htbp]
    \centering
    \includegraphics[width=.98\textwidth]{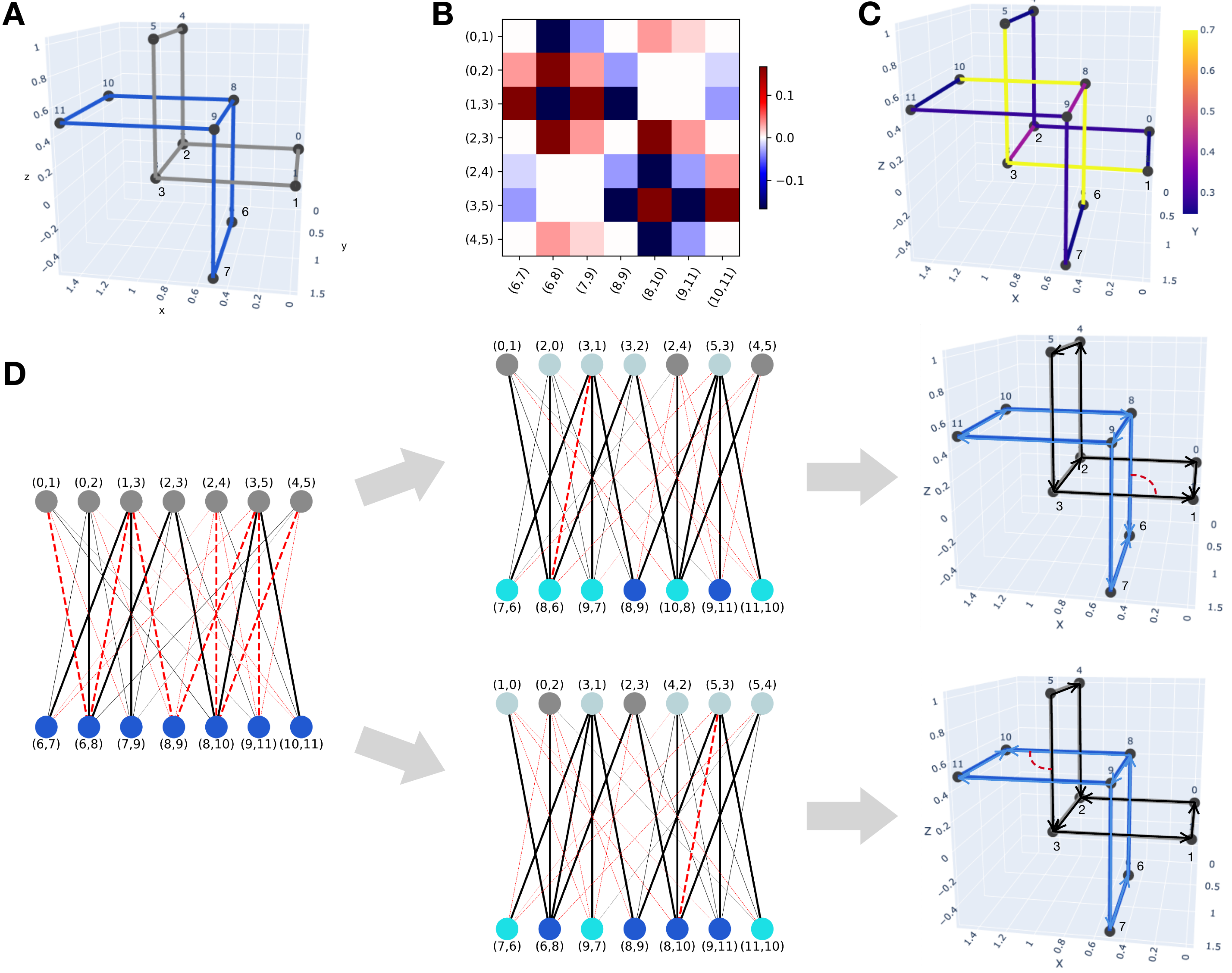}
    \caption{Example of the ensnarled step lattices with period 1. \textbf{A}. Visualization of the spatial network. \textbf{B}. \name operator from edges in gray to edges in blue. \textbf{C}. Linking centrality of edges in A. \textbf{D}. Bipartite signed graphs: each spatial edge is oriented (left) from the node with a smaller label to the node with a larger label, and then (middle) according to the optimal orientations that minimize the negative incomplete linking, with spatial edges whose orientations have been reversed marked in a lighter color, where the top and bottom figures correspond to two equivalent solutions. (right) The corresponding orientation of edges in \textbf{A} is shown on the right, where a red dashed line connects the edge pair with the most negative linking. }
    \label{fig:eg-step-p1}
\end{figure}   
For the repeat units of the ensnarled symmetric step lattices, there are two prominent symmetries: the first one is inside each edge set, and the second one is between the two edge sets. On the one hand, the unit is symmetric with respect to the plane where the shared edges of the two squares in the two sets of edges lie: for each specific edge $e$, its corresponding symmetric edge is in the same edge set as $e$'s. We note that each individual edge set is symmetric with respect to this plane. On the other hand, the unit is also symmetric with respect to the midpoint between the midpoints of the two shared edges. With the latter, for each specific edge $e$, we can find another edge $e'$ in the other edge set such that $e,e'$ are symmetric with respect to the point. We note that the two edge sets have the same size. These results can be extended to ensnarled step lattices of arbitrary sizes.  
 
For the bipartite signed graph derived from the \name operator of the ensnarled step lattice, the level of unbalance can be characterized by $\lambda_{\min} = 0.288$;  see Fig.~\ref{fig:eg-step-p1}D. 
Furthermore, corresponding to the prominent eigen-property of the \name operator, for the signed graph, its characteristic matrices, including the signed weight matrix, the signed Laplacian, and the signed normalized Laplacian, also have the property that \textit{each eigenvalue has multiplicity $2$}. The proof follows in a similar way to the \name operator, where we need to rearrange the edges in $E_1$ and $E_2$, but simultaneously, to construct the other eigenvector associated with the same eigenvalues.

The fact that there are two linearly independent eigenvectors corresponding to the smallest eigenvalue of the signed (normalized) Laplacian could cause a problem for the vanilla signed spectral bi-clustering algorithm because any linear combination of the two vectors provides an equivalently optimal solution to the relaxed signed bi-clustering problem. Therefore, we introduce an extra step to the vanilla signed spectral bi-clustering algorithm: to evaluate the objective function with different combinations of the eigenvectors and select the one(s) that returns the optimal value; see Appendix \ref{app:signed_analysis}.
With the updated signed bi-clustering, the algorithm returns two optimal solutions; see Fig.~\ref{fig:eg-step-p1}D, where the edges shown are after the corresponding switching operations. 

\begin{figure}[htbp]
    \centering
    \includegraphics[width=\textwidth]{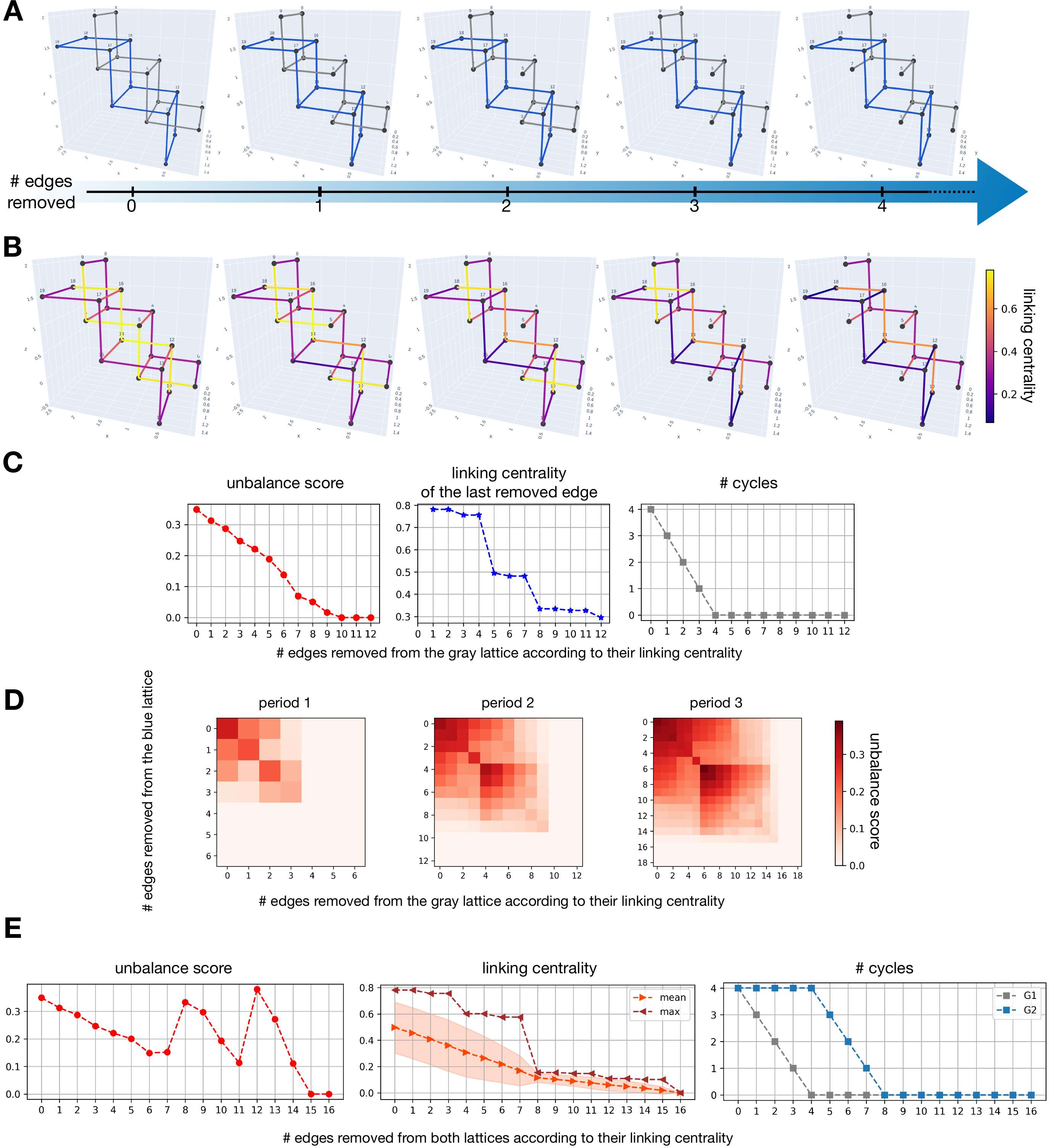}
    \caption{Example of removing edges from the ensnarled step lattices. \textbf{A}. Visualization of the remaining spatial network after removing the edges from one of the two ensnarled step lattices of period 2 in order. \textbf{B}. Results from the removal process in A: unbalance score (left), centrality of the recently removed edges (middle), and the size of the minimum cycle basis (right), where the x-axis is the number of removed edges. \textbf{C}. Unbalance score from removing edges from both of the two ensnarled step lattices (x-, y-axes) of the highest linking centrality (in lattices of periods 1 to 3, from left to right). \textbf{D}. Linking centrality of edges from the removal process in A. \textbf{E}. Unbalance score (left), mean and max linking centrality of edges (middle), and the sizes of the minimum cycle basis in the two lattices (right), where the x-axis is the number of removed edges. }
    \label{fig:eg-remv}
\end{figure}

The existence of these two optimal solutions are to be expected from the underlying symmetries of the example, and indeed, these symmetries are prominent in the two optimal solutions. 
Specifically, after switching one part in an optimal solution, there is either a strong intrinsic negative edge between $(3,1)$ and $(8,6)$ or between $(5,3)$ and $(8,10)$, which are symmetric to each other according to either of the two symmetries. 
These edges in the ensnarled step lattices are particularly important because they are the ones that actually go through the corresponding cycle in the other set of the network, and hence where the complexity of these two sets of edges originates.
The importance of these edges is also emphasized in the linking centrality; see Fig.~\ref{fig:eg-step-p1}C. Therefore, the signed graph perspective, together with the newly proposed linking centrality, can provide important information characterizing the complexity of the two sets of edges.

\subsection{Disruption of the symmetric ensnarled states}

In the previous subsection, there are well-defined cycles in both lattices; while in many real-world scenarios, cycles may not be present in one or both of the networks. 
With our proposed \name operator, we can still characterize the generalized ensnarlment between such networks. To demonstrate this, we will remove edges from ensnarled lattices to break the cyclic structure in either of the networks, and then both of them. For the former, the \name operator characterizes the generalized ensnarlment between cyclic and tree-like structures, while for the latter, the \name operator describes the features between two tree-like structures. 

With extra symmetry, we observe the same behavior from either of the ensnarled lattices as removing the edges from the other; see Fig.~\ref{fig:eg-remv}. Specifically, after removing $4$ edges from one lattice, it becomes a tree, but there is still more than half of the original unbalance between the remaining tree structure and the other lattice, emphasizing the nontrivial incomplete linking between the two and the complexity in the system. We also note that the first four edges removed have similar centrality values, and removing them leads to a similar decrease in the unbalance score. 

When removing edges from both lattices simultaneously, we observe even more interesting patterns: after we break all cycles in both lattices, a new regime starts with high unbalance, which again reduces as we further remove edges; see Fig.~\ref{fig:eg-remv}. 
Specifically, the second regime starts when we remove (i) $2$ edges from both lattices when period $=1$, (ii) $4$ edges from both lattices when period $=2$, and (iii) $6$ edges from both lattices when period $=3$. 
At this point, we reach a new stage where the two intrinsic symmetries within the step lattices reoccur, the plane symmetry and the point symmetry, which contribute to the overall complexity of the system. 
This pattern is consistent as we vary the number of periods in constructing the lattice. 

\section{\Name analysis of brain vasculature data}
We now demonstrate \Name's functionality and capabilities by turning to a complex, real data set, specifically the mouse whole brain vessel graph (VesselGraph) \cite{paetzold_whole_2021}. We first introduce the null models considered for our analysis, and then illustrate the results from the real data. 

\subsection{Mouse brain vasculature}
The VesselGraph dataset comprises high-resolution spatial graphs representing the complete cerebral vasculature of the mouse brain. These graphs are derived from recently advanced whole-brain imaging and segmentation protocols that enable visualization and reconstruction of the entire microvascular network at mesoscopic scales. Vascular organization, particularly regarding vessel sizes and the number of capillary links, varies substantially between brain regions. Moreover, the organization and changes to the structure are early signs for the development of specific diseases, such as Alzheimer's disease \cite{farkas2000pathological,bennett2018vessel}. In particular, we use the dataset corresponding to ``CD1-E-1'', of a total of $3,645,963$ nodes (junctions) and $5,791,309$ edges (vascular tubes); see Fig.~\ref{fig:real-all} for a visualization. 

\begin{figure}[htbp]
    \centering
    \includegraphics[width=.9\textwidth]{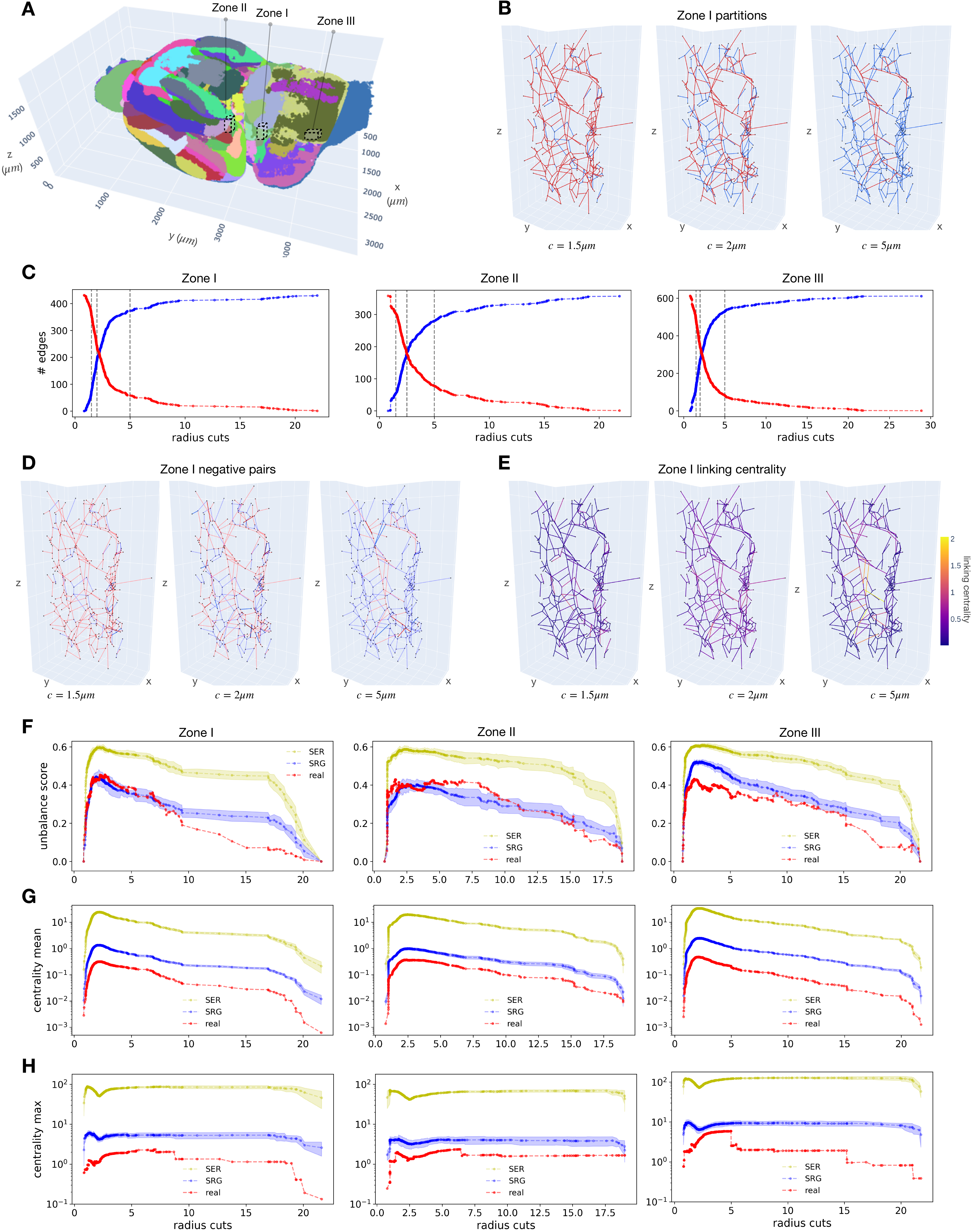}
    \caption{Mouse brain vasculature visualization and results. \textbf{A}. Junctions of the mouse whole brain vessel graph analyzed in this section, i.e., nodes of the spatial network, where color indicates the Allen mouse brain atlas regions. \textbf{B}. Spatial networks of the brain vessel in Zone I, where the edges are separated by three different values, $c$, for the threshold on the radius (1.5, 2, and 5$\mu m$), with the edges above the threshold in red and those below in blue. The displayed box spans $x\in [1030, 1140]$, $y\in [3140, 3300]$ and $z\in [820,1120]$. \textbf{C}. Counts of the edges with radii above (red) and below (blue) all thresholds for Zones I, II, and III. 
    \textbf{D}. Pairs of edges in Zone I with negative linking after an optimal orientation assigned (dark red/blue if incomplete linking $< -0.1$, slightly lighter red/blue if $<-0.05$, and the lightest otherwise). \textbf{E}. Linking centrality of edges in Zone I. \textbf{F}-\textbf{H}. Change of unbalance scores (\textbf{F}), mean (\textbf{G}) and max (\textbf{H}) linking centrality of all edges with respect to all thresholds on the radius for Zones I, II, and III (mean: dashed line; std: shade of all samples).}
    \label{fig:real-all}
\end{figure}

Blood vessels of different radii can correspond to different types, such as arteries, veins, and capillaries, and how they interact with blood vessels of the same or different types can indicate important functional information about the system.
Here, we characterize such interactions through the lens of \name operator. Specifically, we consider cuboids of a fixed volume, which we refer to as zones, in different positions of the mouse brain, and in each zone, we filter vascular tubes, i.e., edges in the spatial networks, into two subsets according to their radii. We refer to the edges with radius greater than the threshold as ``high-radius'' edges and those with radius smaller than or equal to the threshold as ``low-radius'' edges. 
We then examine how the generalized ensnarlment changes as the threshold varies and across different regions, and how these features relate to the underlying functions.  
In this section, we present the results from three different zones; see Fig.~\ref{fig:real-all} for examples of the partitions in Zone I and the partition vessel count statistics for all three zones. We refer the reader to Appendix \ref{app:brain_vesculature_analysis} for more details.

\subsubsection{Null models}
We employ two different null models that prove useful for the downstream analysis: the Node-Spatial Edge-Erd\"{o}s-R\'{e}myi Model, and the Node-Spatial Edge-Random-Geometric Model. In the Node-Spatial Edge-ER Model (SER), the model has two parameters: the number of nodes $n$ and the number of edges $m$, with the space given. The model samples the positions of the $n$ nodes uniformly at random in the given space, and for each pair of nodes, they have the same probability $p=m/\binom{n}{2}$ of being connected by an edge, independent of their distance. The Node-Spatial Edge-Random-Geometric Model (SRG), on the other hand, more directly incorporates information about the spatial embedding of the model in its network sampling. The model also has two parameters: the number of nodes $n$ and the number of edges $m$, with the space given. The model also samples the positions of the $n$ nodes uniformly at random in the given space, but for each pair of nodes, in contrast to the SER model, the probability of an edge being present is proportional to the distance between the two end nodes, in such a way that the expected number of edges is equal to $m$. We also briefly explore a vessel edge radius shuffling null model that preserves the network structure in the appendix (see Fig.~\ref{fig:vessel-null-radius}).

\subsubsection{Results of brain vasculature}
We begin from the bipartite signed graph perspective, where we first quantify the system complexity by the unbalance score and then identify the part of the spatial networks where the complexity originates by the signed bi-clustering. 
Then, we further characterize the importance of edges by incorporating the information from the linking centrality.  
For visualization purposes, we select three representative radius threshold values for each zone: the first one with many more high-radius edges than low-radius ones, the second one with the two sets of relatively similar sizes, and the third one with many fewer high-radius edges than low-radius edges; see the black lines in Fig.~\ref{fig:real-all}C.  

When the threshold becomes too large, the set of high-radius edges becomes too few to form any cycles, thus exhibiting tree-like structure. \Name provides a systematic analysis of the two sets of edges as the threshold varies, regardless of the presence of cycles in either set of edges. 
In the results of intrinsic negative edges (Fig.~\ref{fig:real-all}D), we only show the edge pairs with highly negative values for visualization purposes: dark blue/red for values less than $-0.1$ and light blue/red for those less than $-0.05$. The region information contained in the data is from the Allen Brain Atlas \cite{sunkin2012atls}.

We track the change of complexity as the radius threshold $c$ varies from the smallest to the largest possible values, where the complexity is quantified by the unbalance score $\lambda_{min}$. 
We observe distinct features in the three different zones; see Fig.~\ref{fig:real-all}F. Specifically, Zone I is featured by a rapid decrease in the unbalance score after approximately $c=3$, Zone II is characterized by a plateau from around $c=2$ to $9$, and Zone III has first a slow decrease regime from about $c=2$ to $15$ and then a rapid decrease afterwards. 
Compared to the two null models, the above features are confirmed to be significant. 

For Zone I, the unbalance score is close to the values from the SRG model when $c$ is small, but becomes significantly smaller for $c > 9$; see Fig.~\ref{fig:real-all}F, (left). 
Meanwhile, the max linking centrality also exhibits a significant decrease at about $c=9$, and the mean linking centrality becomes further from the expected values of the SRG model afterwards; see Figs.~\ref{fig:real-all}G-H, (left).
Thus, after certain radius thresholds, simple distance constraints alone are not enough to explain the complexity in the system, and the system adopts a more efficient architecture on these scales, reducing the complexity originating from incomplete linking. 
These features indicate the possibility that there are two types of blood vessels in this brain zone, with their architecture fine-tuned in such a way that the linking complexity between the two is smaller than simple distance constraints would yield on their own. 
Because the related regions control life-sustaining functions and precise motor timing, we speculate this detected ``extra structure'' could be a signature of robustness and reliability.  

In Zone II, the unbalance score is close to the values from the SRG model when $c$ is either small or large, but goes above and forms a plateau in an intermediate regime when $5\le c \le 10$, toward the purely random SER model; see Fig.~\ref{fig:real-all}F, (middle). 
Consistently, the max linking centrality exhibits a significant increasing trend from $c=5$, followed by a decrease between $6$ and $7$, from the two null models, while the mean linking centrality is closer to the SRG model at the beginning and the end; see Figs.~\ref{fig:real-all}G-H, (middle).
This indicates that for intermediate radius thresholds, apart from the distance constraints, prominent features from the SER model, such as long-distance connections and other architectural motifs, start to play a role.
We note that in contrast to Zone I, this brain zone contains more heterogeneous regions, including the Basal Ganglia that is the gatekeeper of movement and reward, the Hippocampus Formation and the Ammon's Horn that are spatial navigation, and roots that are major vascular trunks or entry points that feed deeper structures, and the span of brain regions in this zone suggests a crossroads network that bridges distinct functions. 
Therefore, the cost of building longer, non-proximal edges is likely to be paid to ensure that regions like the Basal Ganglia and Hippocampus receive specialized and independent blood supplies that are not mixed with the metabolic demands of the intervening Midbrain tissue. 

For Zone III, the unbalance score is consistently below the values from the SRG model across all radius thresholds, Fig.~\ref{fig:real-all}F, (right). 
Meanwhile, the max linking centrality exhibits a significant increasing trend when $c$ is small, then a significant decrease at about $5$, followed by another significant decrease at $c=10$, while the mean linking centrality is consistently further below the expected values of the SRG model than the other two zones; see Figs.~\ref{fig:real-all}G-H, (right).
This emphasizes that the minimal wiring cost underlying the SRG model is overruled by a higher organizational principle, where the edges are carefully placed such that the complexity from the incomplete linking between them, no matter which radius threshold is chosen, becomes much smaller than expected. 
This high degree of organization is consistent with the highly specialized brain region in this zone, exclusively the Medulla. 
In contrast to the cerebellum, the medulla is centered on the brain's life support system, and the concentration of nodes in motor-related and behavioral state-related regions suggests that the vascular network may be optimized for extreme reliability and rapid response to systemic physiological changes, consistent with the detected highly organized features. 

%

\section{Discussion}
In this work, we introduced \Name, an edge-centric, geometry-aware framework for characterizing spatial networks. 
By shifting the analytical focus from nodes to edges, \Name is explicitly complementary to many existing spatial network methodologies, while also addressing a fundamental limitation of these methodologies which typically rely on cycle-based descriptors that can become unstable in realistic data regimes, assume reliable connectivity, or simply not apply if there are no cycles in the network. 

A central feature of \Name is the emergence of signed interactions driven by the relative orientation of edges in space. This feature is locally arbitrary, since the sign can flip by reversing the orientation of one edge, but globally intrinsic, since there may always be negative interactions.
In particular, the spectral properties remain invariant under edge orientation. 
The proposed unbalance score offers an interpretable measure of edge-level complexity, capturing how spatial organization departs from simple alignments. 
Through signed bi-clustering, \Name further enables the identification of edges and regions where the complexity of the organization originates, providing insights beyond what existing node-centric or purely topological methods can reveal. 

The validation in both synthetic lattices, with or without cycles, and real biological networks demonstrates the ability of \name to reveal latent organization, even capturing effective ensnarlment between cycle-free structures.
In the mouse brain vasculature, \Name revealed that in the cerebellum and medulla, flow alignment is prioritized in the network architecture, while in the bridging region including the midbrain, delivery between specific points is emphasized, highlighting how system-level complexity can be consistently identified in empirical systems. 

More broadly, this work opens several directions for future research. 
From a theoretical perspective, the operator provides a foundation for developing geometry-aware, edge-based dynamics and multiscale descriptors that extend beyond cycles or node-based views. 
The analysis also motivates further research towards an effective ``writhe-twist" decomposition of spatial networks, capturing the global coiling of the curve and the local rotation along the curve, respectively \cite{kamien2002geometry}. 
From a methodological standpoint, the framework can be naturally integrated with learning-based pipelines, enabling representation learning and inference on spatial networks, also in the case of incomplete connectivity. 
Finally, from the application point of view, the approach is broadly applicable to physical, biological, and engineering systems, including transport infrastructure, material structures, and tissue organizations, where geometry is measurable and relevant apart from topology. 

To summarize, our results establish incomplete linking as a meaningful and defining feature of spatial networks and provide a general, edge-centric framework for their characterization. 
By bridging geometric, signed, and spectral analysis, this work contributes a new perspective on spatial network organization that complements existing node-centric and topological approaches, opening up a robust pathway for analyzing complex spatial systems in realistic data settings.

\bibliography{refs}

\appendix

\section{Appendix}

\subsection{Incomplete Gauss Linking Integral}
\label{app:incomplete_GLI}

\begin{definition}[incomplete Gauss linking integral]
    Given two disjoint open or closed curves $l_1, l_2$, parametrized as $\gamma_1(s), \gamma_2(s)$, respectively, where $\gamma_1, \gamma_2: [0, 1]\to \mathbb{R}^3$, the following double integral gives the incomplete Gauss linking integral that characterizes the incomplete linking between $l_1$ and $l_2$:
    \begin{align}
        \label{eq:incomplete gli-outer}
        \mathfrak{L}(\gamma_1, \gamma_2) 
        &= \frac{1}{4\pi}\int_{0}^1\int_{0}^1({\gamma'}_1(s)\times {\gamma'}_2(t)) \cdot \frac{(\gamma_1(s) - \gamma_2(t))}{\norm{\gamma_1(s) - \gamma_2(t)}^3} ds dt\\
        &= \frac{1}{4\pi}\int_{0}^1\int_{0}^1\frac{\text{det}({\gamma'}_1(s), {\gamma'}_2(t), \gamma_1(s) - \gamma_2(t))}{\norm{\gamma_1(s) - \gamma_2(t)}^3} ds dt,
        \label{eq:incomplete gli-app}
    \end{align}
    where ${\gamma'}_1(s)$ and ${\gamma'}_2(t)$ are the derivatives of $\gamma_1(s)$ and $\gamma(t)$, respectively. 
    \label{def:incomplete GLI}
\end{definition}
Definition \ref{def:incomplete GLI} follows the same formula as the Gauss linking integral in the literature, e.g., in \cite{ricca2011gauss}, but relaxes the conditions for the underlying curves under consideration to not only closed but open ones. 

Note that the incomplete Gauss linking integral is still a symmetric, bilinear mapping. Therefore, (i) a flip in the orientation of one curve from $\gamma_i$ to $-\gamma_i$ will result in a sign change in the incomplete linking as 
\begin{align*}
    \mathfrak{L}(-\gamma_1, \gamma_2) = - \mathfrak{L}(\gamma_1, \gamma_2) = \mathfrak{L}(\gamma_1, -\gamma_2). 
\end{align*}
Furthermore, (ii) if we can rewrite any curves as the composition of ``smaller'' curves $\gamma_1 = \cup_{i=0}^k\sigma_{h_i}\gamma_{h_i}$ with $\sigma_{h_i} = \pm 1$ being the suborientation of $\gamma_{h_i}$ in $\gamma_1$, then the incomplete linking can be rewritten as 
\begin{align*}
    \mathfrak{L}(\gamma_1, \gamma_2) = \mathfrak{L}(\bigcup_{i=0}^k\sigma_{h_i}\gamma_{h_i}, \gamma_2) = \sum_{i=1}^k\sigma_{h_i}\mathfrak{L}(\gamma_{h_i}, \gamma_2). 
\end{align*}
Hence, as a prominent property of the incomplete linking, the linking number between any cycles can be retrieved from the incomplete linking between their composed open curves, as in Corollary \ref{cor:gli-decompose}. Therefore, the topological information is maintained among the curves. 
\begin{corollary}
    Suppose $\gamma_1, \gamma_2$ are two closed curves, and $\gamma_1 = \cup_{i=0}^{k_1}\gamma_{1i}, \gamma_2 = \cup_{j=0}^{k_2}\gamma_{1j}$, then the linking number between $\gamma_1, \gamma_2$, denoted by $l(\gamma_1, \gamma_2)$, can be obtained from the sum of incomplete linking: 
    \begin{align*}
        \mathfrak{L}(\gamma_1, \gamma_2) = \sum_{j=1}^{k_2}\sum_{i=1}^{k_1}\mathfrak{L}(\gamma_{1i}, \gamma_{2j}).
    \end{align*}
    \label{cor:gli-decompose}
\end{corollary}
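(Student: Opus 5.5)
The argument is to write the classical Gauss linking integral of the two closed curves as a double integral over a product of parameter domains, and then split that domain according to the decompositions $\gamma_1=\cup_i\gamma_{1i}$ and $\gamma_2=\cup_j\gamma_{2j}$.

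First we fix the conventions the statement leaves implicit. The pieces $\gamma_{1i}$ are taken to be consecutive arcs covering $\gamma_1$. They overlap only at endpoints and are each oriented consistently with $\gamma_1$; in the notation of property (ii) this means all $\sigma_{h_i}=+1$. The same holds for the $\gamma_{2j}$; the second index family in the statement should read $\gamma_{2j}$. Concretely, we choose a partition $0=s_0<s_1<\dots<s_{k_1+1}=1$ and let $\gamma_{1i}$ be $\gamma_1$ restricted to $[s_i,s_{i+1}]$, reparametrized affinely onto $[0,1]$. We do the same with $t_j$ for $\gamma_2$.

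The first step is to check that the incomplete Gauss linking integral of Definition~\ref{def:incomplete GLI} is invariant under orientation-preserving reparametrization. Under $s=\phi(u)$ with $\phi'>0$, the factor $\gamma_1'(\phi(u))\phi'(u)$ absorbs the Jacobian. The integrand is linear in $\gamma_1'$ and otherwise depends only on the point $\gamma_1(s)$, so the integral over $[0,1]$ of the reparametrized curve equals the integral over $[s_i,s_{i+1}]$ of the original. This shows that each $\mathfrak{L}(\gamma_{1i},\gamma_{2j})$ equals the integral of the original integrand over the rectangle $[s_i,s_{i+1}]\times[t_j,t_{j+1}]$.

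The second step is additivity of the double integral. The rectangles tile $[0,1]^2$ and meet only on measure-zero boundaries. The integrand is continuous and bounded on the compact square, because the curves are disjoint and so $\norm{\gamma_1(s)-\gamma_2(t)}$ is bounded below by a positive constant. Hence the integral over $[0,1]^2$ is the sum over all rectangles. This gives $\mathfrak{L}(\gamma_1,\gamma_2)=\sum_{i,j}\mathfrak{L}(\gamma_{1i},\gamma_{2j})$; equivalently, it is property (ii) applied twice, once in each argument by symmetry.

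Finally, for closed disjoint curves, $\mathfrak{L}(\gamma_1,\gamma_2)$ is by definition the classical Gauss integral, which equals the linking number $l(\gamma_1,\gamma_2)$ by the classical theorem \cite{ricca2011gauss}. This identifies the double sum with the linking number.

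The main obstacle is not analytic. It is the bookkeeping of piece orientations. If the decomposition uses pieces with reversed orientation, one must carry the signs $\sigma$ and obtain $\sum_{i,j}\sigma_{1i}\sigma_{2j}\mathfrak{L}(\gamma_{1i},\gamma_{2j})$ instead. The stated formula therefore holds under the convention that every piece inherits the orientation of its closed curve, and we will state that convention explicitly. Piecewise-$C^1$ curves such as polygons of edges are covered, since the argument only requires integrability of each $\gamma'$ on each piece.
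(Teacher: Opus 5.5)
Your proof is correct and takes the same route as the paper. The paper obtains the corollary directly from the superposition property (ii), i.e.\ additivity of $\mathfrak{L}$ over consistently oriented sub-arcs applied in each argument, together with the classical fact that the Gauss integral of two disjoint closed curves is their linking number; your tiling of $[0,1]^2$ into rectangles plus reparametrization invariance is a more careful justification of that same superposition step, and your conventions (all $\sigma_{h_i}=+1$, second family $\gamma_{2j}$, and a sum over all pieces including index $0$, which the statement's summation limits omit) correctly fix the statement's typos.
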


\subsection{Invariance properties}
\label{app:incomplete_GLI-invariance}
Apart from the fundamental properties discussed before, we further show that the incomplete Gauss linking integral exhibits various invariance properties subject to isometric transformations. 

\begin{proposition}
    If $\gamma_1, \gamma_2$ are symmetric in the same way as $\gamma_3, \gamma_4$, where there exists an isometry $\Phi\in \text{Isom}(\mathbb{R}^3)$ such that $\Phi(\gamma_1) = \gamma_2$ and $\Phi(\gamma_3) = \gamma_4$ setwise,
    then $L(\gamma_2, \gamma_4) = \pm L(\gamma_1, \gamma_3)$. 
    \label{pro:gli-symm-1234}
\end{proposition}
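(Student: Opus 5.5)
The plan is to write the isometry explicitly as $\Phi(x)=\mathbf{A}x+\mathbf{b}$ with $\mathbf{A}\in O(3)$ and $\mathbf{b}\in\mathbb{R}^3$, then substitute the transported parametrizations $\tilde\gamma_2:=\Phi\circ\gamma_1$ and $\tilde\gamma_4:=\Phi\circ\gamma_3$ into the determinant form \eqref{eq:incomplete gli-app} of Definition \ref{def:incomplete GLI}. The argument has two parts. First, compute $\mathfrak{L}(\tilde\gamma_2,\tilde\gamma_4)$ exactly. Second, relate $\tilde\gamma_2,\tilde\gamma_4$ to the given parametrizations $\gamma_2,\gamma_4$, which by hypothesis only have the same images setwise.

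For the first part, note that $\tilde\gamma_2'(s)=\mathbf{A}\gamma_1'(s)$ and $\tilde\gamma_4'(t)=\mathbf{A}\gamma_3'(t)$, since the translation drops out. Also $\tilde\gamma_2(s)-\tilde\gamma_4(t)=\mathbf{A}(\gamma_1(s)-\gamma_3(t))$, so the denominator $\norm{\cdot}^3$ is unchanged because $\mathbf{A}$ is orthogonal. Multilinearity of the determinant then gives
\begin{align*}
\det\bigl(\mathbf{A}u,\mathbf{A}v,\mathbf{A}w\bigr)=\det(\mathbf{A})\det(u,v,w),
\end{align*}
with $\det(\mathbf{A})=\pm1$. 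The integrand is therefore multiplied pointwise by $\det\mathbf{A}$, and we obtain $\mathfrak{L}(\tilde\gamma_2,\tilde\gamma_4)=\det(\mathbf{A})\,\mathfrak{L}(\gamma_1,\gamma_3)$. Disjointness of $\gamma_2,\gamma_4$ follows from that of $\gamma_1,\gamma_3$, because $\Phi$ is a bijection, so the integral is well defined.

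For the second part, use that $\Phi(\gamma_1)=\gamma_2$ holds as sets. Assuming, as is implicit throughout the paper, that the curves are simple regular arcs or loops (for edges these are line segments), $\tilde\gamma_2$ and $\gamma_2$ are then two regular parametrizations of the same curve. Hence $\tilde\gamma_2=\gamma_2\circ\phi$ for a diffeomorphism $\phi$ of $[0,1]$, or, for closed curves, of the circle up to a shift of the base point. A change of variables $s\mapsto\phi(s)$ in the double integral shows that $\mathfrak{L}$ is invariant under orientation-preserving reparametrization. When $\phi$ reverses orientation, $\mathfrak{L}$ picks up a factor $-1$; this is consistent with property (i) in Appendix \ref{app:incomplete_GLI}. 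The same holds for $\tilde\gamma_4$ and $\gamma_4$. Combining the two parts gives $\mathfrak{L}(\gamma_2,\gamma_4)=\det(\mathbf{A})\,\sigma_2\sigma_4\,\mathfrak{L}(\gamma_1,\gamma_3)$, where $\sigma_2,\sigma_4\in\{\pm1\}$ are the orientation signs of the reparametrizations. This is the claimed $\pm$ relation, and it also identifies the sign explicitly.

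The main obstacle is the reparametrization step: making precise that a setwise equality of images yields a reparametrization. This needs injectivity and regularity of the curves, with a small extra remark about the base point for closed curves. I would state this assumption explicitly and then do the one-variable change-of-variables computation, including the sign of $\phi'$.
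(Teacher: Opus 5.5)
Your proof is correct and follows the paper's argument. You write $\Phi(\mathbf{x})=\mathbf{A}\mathbf{x}+\mathbf{b}$ and pull a factor $\det(\mathbf{A})$ out of the integrand (the paper does this via $(\mathbf{R}u)\times(\mathbf{R}v)=\det(\mathbf{R})\,\mathbf{R}(u\times v)$, you via the determinant form), then absorb the orientation signs of the reparametrizations to reach the same sign $\det(\mathbf{A})\sigma_2\sigma_4$; your explicit injectivity and regularity assumption fills in the existence of the reparametrizations, which the paper simply asserts.
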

\begin{proof}
    For $\Phi: \mathbb{R}^3\to \mathbb{R}^3$ be an Euclidean isometry, where
    \begin{align*}
        \Phi(\mathbf{x}) = \mathbf{R}\mathbf{x} + \mathbf{v}, 
    \end{align*}
    where $\mathbf{R}$ is an orthogonal matrix ($\mathbf{R}\in O(3)$), and $\mathbf{v}\in\mathbb{R}^3$. Hence, $\det(\mathbf{R})$ is either $1$ for rotation or $-1$ for reflection. Assume $\Phi(\gamma_1) = \gamma_2$ and $\Phi(\gamma_3) = \gamma_4$ setwise. Then, there exist differentiable monontone bijections between the parametrization parameters, where 
    \begin{align*}
        s = \phi_2(\hat{s}),\quad t = \phi_4(\hat{t}),
    \end{align*}
    such that 
    \begin{align*}
        \gamma_2(\hat{s}) = \Phi(\gamma_1(\phi_2(\hat{s}))),\quad \gamma_4(\hat{t}) = \Phi(\gamma_3(\phi_4(\hat{t}))),
    \end{align*}
    provide parametrizations of $\gamma_2$ and $\gamma_4$, repsectively. Let
    \begin{align*}
        \epsilon_2 \coloneqq \sign{(\phi'_2)},\quad \epsilon_4 \coloneqq \sign{(\phi'_4)},
    \end{align*}
    where $\epsilon_2 = 1$ if the induced parametrization of $\gamma_1$ via $\phi_2(\hat{s})$ preserves the curve orientation and $-1$ otherwise, and the same holds for $\epsilon_4$. 
    
    Then for the terms in the integrand of $L(\gamma_2, \gamma_4)$,
     \begin{align*}
         \gamma_2'(\hat{s}) = \mathbf{R}\gamma_1'(\phi_2(\hat{s}))\phi_2'(\hat{s}),\quad \gamma_4'(\hat{t}) = \mathbf{R}\gamma_3'(\phi_4(\hat{t}))\phi_4'(\hat{s}).
     \end{align*}
     Then, 
     \begin{align*}
         \gamma_2'(\hat{s})\times \gamma_4'(\hat{t}) 
         &= (\mathbf{R}\gamma_1'(\phi_2(\hat{s}))\phi_2'(\hat{s}))\times (\mathbf{R}\gamma_3'(\phi_4(\hat{t}))\phi_4'(\hat{t})) \\
         &= \phi_2'(\hat{s})\phi_4'(\hat{t})\det(\mathbf{R})\,\mathbf{R}\,(\gamma_1'(\phi_2(\hat{s}))\times \gamma_3'(\phi_4(\hat{t}))), 
     \end{align*}
     where the last equality is from the property of cross product and also that $\mathbf{R}$ is orthogonal. Therefore, 
     \begin{align*}
          &(\gamma_2'(\hat{s})\times \gamma_4'(\hat{t}))\cdot (\gamma_2(\hat{s})- \gamma_4(\hat{t})) \\
         =\, &\phi_2'(\hat{s})\phi_4'(\hat{t})\det(\mathbf{R})\,(\mathbf{R}\,(\gamma_1'(\phi_2(\hat{s}))\times \gamma_3'(\phi_4(\hat{t})))) \cdot (\mathbf{R}\, (\gamma_1(\phi_2(\hat{s}))- \gamma_3(\phi_4(\hat{t})))) \\
         =\, &\phi_2'(\hat{s})\phi_4'(\hat{t})\det(\mathbf{R})(\gamma_1'(\phi_2(\hat{s}))\times \gamma_3'(\phi_4(\hat{t}))) \cdot (\gamma_1(\phi_2(\hat{s}))- \gamma_3(\phi_4(\hat{t}))).
     \end{align*}

    We can then write 
    \begin{align*}
        L(\gamma_2, \gamma_4) 
        &= \frac{1}{4\pi}\int_{0}^1\int_{0}^1({\gamma_2'}(\hat{s})\times {\gamma_4'}(\hat{t})) \cdot \frac{(\gamma_2(\hat{s}) - \gamma_4(\hat{t})}{\norm{\gamma_2(\hat{s}) - \gamma_4(\hat{t})}^3} d\hat{s} d\hat{t}\\
        &= \frac{1}{4\pi}\int_{0}^1\int_{0}^1\phi_2'(\hat{s})\phi_4'(\hat{t})\det(\mathbf{R})(\gamma_1'(\phi_2(\hat{s}))\times \gamma_3'(\phi_4(\hat{t}))) \cdot \frac{\gamma_1(\phi_2(\hat{s}))- \gamma_3(\phi_4(\hat{t}))}{\norm{\gamma_1(\phi_2(\hat{s}))- \gamma_3(\phi_4(\hat{t}))}^3}d\hat{s} d\hat{t}\\
        &= \det(\mathbf{R})\epsilon_2\epsilon_4\frac{1}{4\pi}\int_{0}^1\int_{0}^1(\gamma_1'(s)\times \gamma_3'(t)) \cdot \frac{\gamma_1(s)- \gamma_3(t)}{\norm{\gamma_1(s)- \gamma_3(t)}^3}ds dt,
    \end{align*}
    where the second equality is also because orthogonal transformations perserve the norm, and $\epsilon_2,\epsilon_4$ are necessary in the third equality to align the orientations. Therefore, $L(\gamma_2, \gamma_4) = \pm L(\gamma_1, \gamma_3)$
\end{proof}

As an example, if $\gamma_2, \gamma_4$ are the mirror images of $\gamma_1, \gamma_3$, respectively, across the plane, with the orientation maintained, then $\det(\mathbf{R}) = -1$ and $\epsilon_2 = \epsilon_4 = 1$, thus $L(\gamma_2, \gamma_4) = - L(\gamma_1, \gamma_3)$. As another example, if $\gamma_2, \gamma_4$ are obtained by a rotation of $\gamma_1, \gamma_3$, respectively, along some direction, with both orientations flipped, then $\det(\mathbf{R}) = 1$ and $\epsilon_2 = \epsilon_4 = -1$, thus $L(\gamma_2, \gamma_4) = L(\gamma_1, \gamma_3)$. 

From Proposition \ref{pro:gli-symm-1234}, simple translations will not change the incomplete Gauss linking integral between two curves, and operations such as rotation and reflection can only change the sign of the incomplete Gauss linking integral between the curves.   
The cases where two curves have the same incomplete Gauss linking integral values with a third curve can be considered as a special case, as in Corollary \ref{col:gli-symm-123}.

\begin{corollary}
    If $\gamma_1, \gamma_2$ are symmetric with respect to a third curve $\gamma_3$, where there exists an ambient isometry $\Phi\in \text{Isom}(\mathbb{R})$ such that $\Phi(\gamma_1) = \gamma_2$ and $\Phi(\gamma_3) = \gamma_3$ setwise,
    then $L(\gamma_1, \gamma_3) = \pm L(\gamma_2, \gamma_3)$. 
    \label{col:gli-symm-123}
\end{corollary}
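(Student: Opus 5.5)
This is a direct specialization of Proposition \ref{pro:gli-symm-1234}. In that proposition we keep the first pair as $(\gamma_1,\gamma_3)$ and take the second pair to be $(\gamma_2,\gamma_3)$, so that the fourth curve is $\gamma_4 \coloneqq \gamma_3$. The hypotheses of the corollary give an isometry $\Phi$ with $\Phi(\gamma_1)=\gamma_2$ and $\Phi(\gamma_3)=\gamma_3=\gamma_4$ setwise, which is exactly what the proposition requires. (Here $\text{Isom}(\mathbb{R})$ in the statement should read $\text{Isom}(\mathbb{R}^3)$.) The proposition then yields $L(\gamma_2,\gamma_3)=\pm L(\gamma_1,\gamma_3)$, which is the claim after rearranging the sign.

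To be careful, I would trace the sign through the proof of Proposition \ref{pro:gli-symm-1234}. Write $\Phi(\mathbf{x})=\mathbf{R}\mathbf{x}+\mathbf{v}$. Reparametrize by $\gamma_2(\hat s)=\Phi(\gamma_1(\phi_2(\hat s)))$ and $\gamma_3(\hat t)=\Phi(\gamma_3(\phi_4(\hat t)))$, where $\phi_4$ is a monotone self-reparametrization of $\gamma_3$.

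The only point needing attention is that $\Phi$ maps $\gamma_3$ to itself setwise. This does not make $\Phi$ the identity on $\gamma_3$: it may act as a nontrivial isometric self-map, for instance reversing a segment. So $\phi_4$ need not be the identity, and it may reverse orientation, giving $\epsilon_4=-1$.

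The computation in the proposition goes through unchanged and gives
\begin{align*}
L(\gamma_2,\gamma_3)=\det(\mathbf{R})\,\epsilon_2\,\epsilon_4\,L(\gamma_1,\gamma_3).
\end{align*}
Every factor on the right is $\pm1$, so the claim follows.

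The main obstacle, which is mild, is justifying the existence of a differentiable monotone bijection $\phi_4$ with $\gamma_3(\hat t)=\Phi(\gamma_3(\phi_4(\hat t)))$ from the setwise condition alone. For a simple regular curve, such as a line segment (the case used for GEMINI), $\Phi|_{\gamma_3}$ is a homeomorphism of an arc onto itself, hence monotone in the parameter. Smoothness follows from $\Phi$ being affine and $\gamma_3$ being regular. I would simply invoke the same assumption already used in the proof of Proposition \ref{pro:gli-symm-1234}.
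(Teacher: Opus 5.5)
Your proposal is correct and follows the paper's route: the paper gives no separate proof and presents this corollary simply as the special case $\gamma_4=\gamma_3$ of Proposition~\ref{pro:gli-symm-1234}. Your extra points are correct and useful clarifications of that specialization: the sign is $\det(\mathbf{R})\,\epsilon_2\epsilon_4$, and $\epsilon_4$ can be $-1$ when $\Phi$ reverses $\gamma_3$; also, $\text{Isom}(\mathbb{R})$ in the statement should read $\text{Isom}(\mathbb{R}^3)$.
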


\begin{remark}
    As an example, if $\gamma_3$ is a straight line, the stabilizer\footnote{The stabilizer, also called the isotropy subgroup, of a curve in an isometry group is a subgroup of all isometries that map the curve to itself (as a set).} of $\gamma_3$ in the Euclidean isometry group consists of: (i) rotation along $\gamma_3$, (ii) reflections in planes containing $\gamma_3$, (iii) reflections in the perpendicular bisector plane through the midpoint and perpendicular to $l$, and (iv) rotation $\pi$ along any direction through the midpoint and perpendicular to $\gamma_3$. Any composition of these still map $\gamma_3$ to itself setwise, where the start and end points may swap. 
\end{remark}

In principle, the incomplete Gauss linking integral can return any value in the real line, including the origin. Specifically, we show that when two planar curves are parallel, their incomplete Gauss linking integral is $0$, as in Proposition \ref{pro:gli-parallel}. 
\begin{proposition}
    If two planar curves are in parallel, where $\gamma_1(t) - \gamma_2(t) = \mathbf{v}$ that is independent of $t$ and nonzero, and $\mathbf{v}$ is either in the plane or perpendicular to the plane, then $L(\gamma_1, \gamma_2) = 0$.
    \label{pro:gli-parallel}
\end{proposition}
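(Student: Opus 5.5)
Set $\gamma_2(t) = \gamma_1(t) - \mathbf{v}$ and substitute directly into the integrand of \eqref{eq:incomplete gli-app}; the claim is that the determinant in the numerator vanishes identically for all $(s,t)$. Differentiating gives $\gamma_2'(t) = \gamma_1'(t)$, and $\gamma_1(s) - \gamma_2(t) = \gamma_1(s) - \gamma_1(t) + \mathbf{v}$. So the numerator becomes
\begin{align*}
    \det\bigl(\gamma_1'(s),\, \gamma_1'(t),\, \gamma_1(s) - \gamma_1(t) + \mathbf{v}\bigr).
\end{align*}
Let $P$ be the plane containing $\gamma_1$, with direction space $U$ (a two-dimensional subspace). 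The vectors $\gamma_1'(s)$, $\gamma_1'(t)$ and $\gamma_1(s)-\gamma_1(t)$ all lie in $U$.

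\textbf{Case 1: $\mathbf{v}$ lies in the plane.} Then $\mathbf{v}\in U$ as well. All three columns of the determinant lie in the two-dimensional space $U$, so they are linearly dependent and the determinant is zero for every $(s,t)$. The integral is therefore $0$.

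\textbf{Case 2: $\mathbf{v}$ is perpendicular to the plane.} Write the third column as $\mathbf{w} + \mathbf{v}$ with $\mathbf{w} = \gamma_1(s)-\gamma_1(t) \in U$. By multilinearity the determinant splits into two terms.
\begin{itemize}
    \item The term $\det(\gamma_1'(s),\gamma_1'(t),\mathbf{w})$ vanishes by the argument of Case 1.
    \item The term $\det(\gamma_1'(s),\gamma_1'(t),\mathbf{v})$ equals $(\gamma_1'(s)\times\gamma_1'(t))\cdot\mathbf{v}$. Since both tangents lie in $U$, their cross product is parallel to the normal $\mathbf{n}$, and $\mathbf{v} = \pm\norm{\mathbf{v}}\mathbf{n}$.
\end{itemize}
So the second term need not vanish pointwise, and this is the main obstacle. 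Pointwise vanishing holds only in special situations, for example when $\gamma_1$ is a straight segment, where $\gamma_1'(s)\parallel\gamma_1'(t)$.

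For a general planar curve I would instead write the remaining integrand as
\begin{align*}
    \pm\norm{\mathbf{v}}\,\frac{\det_{U}(\gamma_1'(s),\gamma_1'(t))}{\bigl(\norm{\gamma_1(s)-\gamma_1(t)}^2 + \norm{\mathbf{v}}^2\bigr)^{3/2}},
\end{align*}
using $\mathbf{w}\perp\mathbf{v}$ to split the denominator, where $\det_U$ denotes the two-dimensional determinant in $U$. This kernel is symmetric under swapping $s$ and $t$, while $\det_U(\gamma_1'(s),\gamma_1'(t))$ is antisymmetric under the swap. Integrating over the symmetric domain $[0,1]^2$, the contributions at $(s,t)$ and $(t,s)$ cancel. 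Hence the integral vanishes in Case 2 as well.

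The step to verify carefully is exactly this symmetry: both curves must be parametrized over the same interval $[0,1]$ by the same parameter, which is what the hypothesis $\gamma_1(t)-\gamma_2(t)=\mathbf{v}$ provides. With that, the swap $(s,t)\mapsto(t,s)$ is a measure-preserving bijection of the domain, so the cancellation is legitimate.
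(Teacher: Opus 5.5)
Your proof is correct and follows the paper's argument. When $\mathbf{v}$ lies in the plane, the triple product vanishes pointwise because all three vectors lie in the two-dimensional direction space; when $\mathbf{v}$ is normal, the numerator reduces to the antisymmetric quantity $(\gamma_1'(s)\times\gamma_1'(t))\cdot\mathbf{v}$ while the denominator $(\|\gamma_1(s)-\gamma_1(t)\|^2+\|\mathbf{v}\|^2)^{3/2}$ is symmetric, so the integral over $[0,1]^2$ cancels. Your multilinear splitting of the determinant is only a rephrasing of the paper's direct use of $\mathbf{n}\cdot(\gamma(s)-\gamma(t))=0$, and your denominator correctly includes the square that is missing from the paper's displayed formula.
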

\begin{proof}
    Let $\gamma: [0,1]\to \mathbb{R}^3$ be a $C^1$ open curve, and $\mathbf{v}\in \mathbb{R}^3$ be a constant vector. We write the two curves as
    \begin{align*}
        \gamma_1(s) = \gamma(s),\quad \gamma_2(t) = \gamma(t) + \mathbf{v},\quad s,t \in [0,1].
    \end{align*}
    Then, their Gauss linking integral can be written as 
    \begin{align}
        L(\gamma_1, \gamma_2) 
        &= \frac{1}{4\pi}\int_{0}^1\int_{0}^1({\gamma'}(s)\times {\gamma'}(t)) \cdot \frac{(\gamma(s) - \gamma(t) - \mathbf{v})}{\norm{\gamma(s) - \gamma(t) - \mathbf{v}}^3} ds dt. 
        \label{eq:incomplete gli parallel}
    \end{align}
    
    Assume that the curve $\gamma$ lies in an affine plane $P$ with unit normal vector $\mathbf{n}$, and suppose $\mathbf{v}$ also lies in the plane $P$, so $\mathbf{n}\cdot \mathbf{v} = \mathbf{0}$. Then $\gamma'(s), \gamma'(t) \in P$, thus $\gamma'(s)\times \gamma'(t)$ is parallel to $\mathbf{n}$. However, $\gamma(s) - \gamma(t) - \mathbf{v} \in P$, thus it is orthogonal to $\mathbf{n}$, and then the product in Eq.~\eqref{eq:incomplete gli parallel} vanishes pointwise, leading to $L(\gamma_1, \gamma_2) = 0$. 
    
    Now, we maintain the curve $\gamma$ to stay in the affine plane $P$ with unit normal vector $\mathbf{n}$, but take $\mathbf{v} = \lambda\mathbf{n}$ with $\lambda\ne 0$. We write $\gamma'(s)\times \gamma'(t) = J(s,t)\mathbf{n}$, where $J$ is a scalar with $J(t,s) = -J(s,t)$. Then the numerator of Eq.~\eqref{eq:incomplete gli parallel} is
    \begin{align*}
        ({\gamma'}(s)\times {\gamma'}(t)) \cdot (\gamma(s) - \gamma(t) - \mathbf{v}) = J(s,t)\mathbf{n}\cdot(-\mathbf{v}) = -\lambda J(s,t). 
    \end{align*}
    The denominator of Eq.~\eqref{eq:incomplete gli parallel} depends on 
    \begin{align*}
        \norm{\gamma(s) - \gamma(t) - \mathbf{v}}^2 = \norm{\gamma(s) - \gamma(t)} + \lambda^2,
    \end{align*}
    which is symmetric under swapping $s,t$. Therefore, integrating over the square $[0,1]^2$, the integrand cancels by anti-symmetry, leading to $L(\gamma_1, \gamma_2) = 0$.
    
    Alternatively, we can also show it through the definition via the average of half the algebraic sum of intercrossings in the projection of the two curves in any possible projection direction.
\end{proof}
\begin{remark}
    In the case when $\mathbf{v}$ has both in-plane and normal components, the symmetry is broken and $L$ is not necessarily $0$. Let us keep the curve $\gamma$ to stay in the affine plane $P$ with unit normal vector $\mathbf{n}$, but take $\mathbf{v} = \mathbf{v}_{\parallel} + \mathbf{v}_\perp$ with $\mathbf{v}_{\parallel}\in P$, $\mathbf{v}_\perp = \lambda\mathbf{n}$, with both parts being non-zero. Then the denominator of Eq.~\eqref{eq:incomplete gli parallel} becomes
    \begin{align*}
        \norm{\gamma(s) - \gamma(t) - \mathbf{v}}^2 = \norm{\gamma(s) - \gamma(t) - \mathbf{v}_{\parallel}} + \lambda^2,
    \end{align*}
    which is not symmetric in $(s,t)$ due to the in-plane shift $\mathbf{v}_{\parallel}$. Hence, the anti-symmetry that forced cancellation no longer applies, and the integral is in general nonzero. The case when the curves are not planar is more general.  
\end{remark}

As a special example, straight lines are always planar. Therefore, from Proposition \ref{pro:gli-parallel}, the incomplete Gauss linking integral is always $0$ if they are parallel to each other, as in Corollary \ref{cor:gli-parallel-line}. 
\begin{corollary}
    If two straight lines are in parallel, where $\gamma_1(t) - \gamma_2(t) = \mathbf{v}$ that is independent of $t$ and nonzero, then $L(\gamma_1, \gamma_2) = 0$. 
    \label{cor:gli-parallel-line}
\end{corollary}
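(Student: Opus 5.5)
The plan is to deduce Corollary~\ref{cor:gli-parallel-line} directly from Proposition~\ref{pro:gli-parallel}, as a special case of it. First I fix parametrizations. Let the two parallel segments be $\gamma_1(s)=\mathbf{a}+s\mathbf{u}$ and $\gamma_2(t)=\mathbf{a}+t\mathbf{u}-\mathbf{v}$ with $s,t\in[0,1]$, so that $\gamma_1(t)-\gamma_2(t)=\mathbf{v}$ is constant and nonzero, as the statement assumes. This is exactly the translated-curve form $\gamma_2=\gamma+\mathbf{v}$, up to the sign convention on $\mathbf{v}$, which plays no role. I then need a plane $P$ containing $\gamma_1$ and satisfying one of the two alternatives in Proposition~\ref{pro:gli-parallel}: either $\mathbf{v}$ lies in $P$, or $\mathbf{v}$ is normal to $P$.

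The only real content is this choice of plane, since a line lies in infinitely many planes. If $\mathbf{v}$ is not parallel to $\mathbf{u}$, I take $P$ to be the affine plane through $\mathbf{a}$ spanned by $\mathbf{u}$ and $\mathbf{v}$. Then $\mathbf{v}\in P$, and the first case of the proposition gives $L=0$. If $\mathbf{v}\parallel\mathbf{u}$, the two segments are collinear. For the integral to be defined, the segments must be disjoint, and any plane containing the line contains $\mathbf{v}$, so the first case applies again.

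As a sanity check that avoids the proposition altogether, I will note that $\gamma_1'(s)\times\gamma_2'(t)=\mathbf{u}\times\mathbf{u}=\mathbf{0}$. The integrand therefore vanishes pointwise, and $L=0$ follows immediately. This pointwise argument is cleaner, and it also covers any two parallel segments, including those of different lengths or offsets, where the difference is not constant. I would present it as the main proof, and keep the reduction to Proposition~\ref{pro:gli-parallel} as the stated route.

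The only thing I expect to need care is the degenerate collinear case, where disjointness is needed for the integral to exist. The rest is immediate.
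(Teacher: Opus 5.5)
Your proposal is correct, and it contains two arguments. Your reduction is the paper's own proof: the paper just notes that straight lines are planar and invokes Proposition~\ref{pro:gli-parallel}. You also make explicit a step the paper skips. A line lies in a one-parameter family of planes, and only the plane spanned by $\mathbf{u}$ and $\mathbf{v}$ meets the proposition's hypotheses. (When $\mathbf{v}\perp\mathbf{u}$, the plane through the line with normal $\mathbf{v}$ also does.) So choosing that plane is a genuine part of the deduction, not a formality.

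Your preferred argument, $\gamma_1'(s)\times\gamma_2'(t)=\mathbf{u}\times\mathbf{u}=\mathbf{0}$, is a genuinely different and more elementary route. It bypasses the proposition entirely and needs no choice of plane, no case split and no constant offset. It therefore covers parallel or antiparallel segments of any lengths and any monotone parametrization, whereas the constant-offset hypothesis forces equal lengths and matching orientations. That extra generality suits the paper's later use of this corollary to explain the zero entries between parallel lattice edges, since it spares any check of lengths or orientations. The paper's route buys only narrative coherence: it presents the corollary as a special case of the planar-curve result. For straight segments, that proposition's in-plane case comes down to the same pointwise vanishing of the integrand that you observe directly.
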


\subsection{Special case: line segments}
In this section, we discuss a special case that bridges the incomplete Gauss linking integral with spatial networks: line segments. 
In this case, the incomplete Gauss linking integral can be written as an algebraic sum, instead of a double integral, as in Theorem \ref{the:gli-line} with proofs in \cite{klenin2000glicomp}. Therefore, the computation complexity can be largely reduced. 
\begin{theorem}[incomplete Gauss linking integral between line segments]
    When $\gamma_1, \gamma_2$ are line segments where $\gamma_1$ is from $a$ to $b$ and $\gamma_2$ is from $c$ to $d$ ($a,b,c,d \in \mathbb{R}^3$), the computation of the incomplete Gauss linking integral can be simplified as 
    \begin{align}
        L(\gamma_1, \gamma_2) = \frac{\sigma_{12}}{4\pi}\left(\arcsin(\mathbf{n}_a\mathbf{n}_d) + \arcsin(\mathbf{n}_d\mathbf{n}_b) + \arcsin(\mathbf{n}_b\mathbf{n}_c) + \arcsin(\mathbf{n}_c\mathbf{n}_a)\right),
        \label{eq:incomplete gli-line}
    \end{align}
    where
    \begin{align*}
        \mathbf{n}_a = \frac{\overrightarrow{ac} \times \overrightarrow{ad}}{\norm{\overrightarrow{ac} \times \overrightarrow{ad}}}, \quad 
        \mathbf{n}_b = \frac{\overrightarrow{bd} \times \overrightarrow{bc}}{\norm{\overrightarrow{bd} \times \overrightarrow{bc}}}, \quad 
        \mathbf{n}_c = \frac{\overrightarrow{bc} \times \overrightarrow{ac}}{\norm{\overrightarrow{bc} \times \overrightarrow{ac}}}, \quad
        \mathbf{n}_d = \frac{\overrightarrow{ad} \times \overrightarrow{bd}}{\norm{\overrightarrow{ad} \times \overrightarrow{bd}}}, 
    \end{align*}
    and 
    \begin{align*}
        \sigma_{12} = \sign{((\overrightarrow{cd}\times \overrightarrow{ab})\cdot\overrightarrow{ac})}.
    \end{align*}
    \label{the:gli-line}
\end{theorem}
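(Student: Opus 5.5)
The plan is to interpret the double integral as a solid-angle computation. Parametrize $\gamma_1(s)=a+s(b-a)$ and $\gamma_2(t)=c+t(d-c)$. The derivatives are then constant, so the integrand of \eqref{eq:incomplete gli-app} becomes
\[
\frac{\det(b-a,\,d-c,\,\gamma_1(s)-\gamma_2(t))}{\norm{\gamma_1(s)-\gamma_2(t)}^3}.
\]
Consider the map $(s,t)\mapsto \mathbf{u}(s,t)=(\gamma_1(s)-\gamma_2(t))/\norm{\gamma_1(s)-\gamma_2(t)}\in S^2$. This is the Gauss map of the pair, and its image is a region of the unit sphere. A standard computation shows that the pullback of the area form of $S^2$ under $\mathbf{u}$ is exactly
\[
\frac{\det(\partial_s\gamma_1,\,-\partial_t\gamma_2,\,\gamma_1-\gamma_2)}{\norm{\gamma_1-\gamma_2}^3}\,ds\,dt.
\]
Hence $4\pi\,L(\gamma_1,\gamma_2)$ equals, up to a sign, the signed area swept by $\mathbf{u}$ over $[0,1]^2$. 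I would first verify this pullback identity by differentiating $\mathbf{u}$ directly. The key step is the observation that $\partial_s\mathbf{u}$ and $\partial_t\mathbf{u}$ are the projections of $b-a$ and $-(d-c)$ orthogonal to $\mathbf{u}$, each divided by the norm.

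Next, I would identify the image region. For segments, the boundary of $[0,1]^2$ maps to four great-circle arcs. For instance, the edge $t=0$ gives the directions from $c$ to points of $ab$, which all lie in the plane spanned by $\overrightarrow{ca}$ and $\overrightarrow{cb}$. So the image is a spherical quadrilateral with vertices at the unit vectors of $a-c$, $a-d$, $b-d$ and $b-c$. Because the segments are disjoint, $\mathbf{u}$ is injective on the square: two pairs giving the same direction would force coplanarity, and that case is degenerate and yields zero. Therefore the integral equals $\pm$ the area $\Omega$ of this quadrilateral.

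By Gauss--Bonnet, $\Omega=2\pi-\sum_i(\pi-\alpha_i)=\sum_i\alpha_i-2\pi$, where the $\alpha_i$ are the interior angles. The four face normals $\mathbf{n}_a,\mathbf{n}_b,\mathbf{n}_c,\mathbf{n}_d$ are the normals of the planes through the great-circle sides; for example, $\mathbf{n}_a\perp\overrightarrow{ac},\overrightarrow{ad}$ is the normal of the side through the vertices $c-a$ and $d-a$. The angle between consecutive sides is then governed by $\mathbf{n}_a\cdot\mathbf{n}_d$, so each exterior angle satisfies $\pi/2-\alpha_i$ or $\alpha_i-\pi/2=\arcsin(\mathbf{n}\cdot\mathbf{n}')$. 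Summing the four such terms gives $\Omega=\sum\arcsin(\cdot)$ with the pairings $(a,d),(d,b),(b,c),(c,a)$, which are exactly the consecutive sides of the quadrilateral. This reproduces the formula of \cite{klenin2000glicomp}.

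Finally, the orientation sign is fixed by the sign of the constant factor $\det(b-a,d-c,\cdot)$. At any interior point this has the sign of $(\overrightarrow{cd}\times\overrightarrow{ab})\cdot\overrightarrow{ac}$, which gives $\sigma_{12}$; the integrand does not change sign because $\det(b-a,d-c,\gamma_1-\gamma_2)$ is affine in $(s,t)$ and, for disjoint non-coplanar segments, is in fact constant, since $\gamma_1-\gamma_2$ differs from $a-c$ by a combination of $b-a$ and $d-c$.

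I expect the main obstacle to be the careful bookkeeping of orientations: checking that the normals as defined (with their particular cross-product orders) all point consistently to the outside of the quadrilateral, so that every arcsine term enters with a plus sign and $\Omega=\sum\alpha_i-2\pi$ becomes the stated sum. I would settle this by checking it on one explicit configuration, such as perpendicular crossing segments where $\Omega$ is known, and then invoking continuity in the non-degenerate regime.
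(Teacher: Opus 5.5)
Your route is the solid-angle argument behind \cite{klenin2000glicomp}, and that citation is all the paper offers here; it gives no proof of its own. The skeleton is right:
\begin{itemize}
\item the numerator $\det(b-a,d-c,\gamma_1-\gamma_2)=\det(b-a,d-c,a-c)$ is constant, with sign $\sigma_{12}$;
\item the integrand is minus the pulled-back area form of $S^2$;
\item Gauss--Bonnet on the spherical quadrilateral with vertices $\widehat{a-c},\widehat{a-d},\widehat{b-d},\widehat{b-c}$ gives $\Omega=\sum_i\alpha_i-2\pi$.
\end{itemize}
However, the step that turns angles into arcsines needs two repairs, and the orientation check as you phrased it is false.

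First, convexity is missing. The identity $\arcsin(\mathbf{n}\cdot\mathbf{n}')=\alpha_i-\pi/2$ comes from $\mathbf{n}\cdot\mathbf{n}'=\cos(\pi-\alpha_i)$ for consistently oriented normals, the exterior angle being $\pi-\alpha_i$. It uses the principal branch, so it requires $0<\alpha_i<\pi$, and you never argue that the quadrilateral is convex. Convexity does hold. The difference $\gamma_1(s)-\gamma_2(t)=(a-c)+s(b-a)-t(d-c)$ sweeps a parallelogram in an affine plane that misses the origin when $\det(b-a,d-c,a-c)\neq0$. Central projection of such a planar convex set is a convex spherical quadrilateral inside an open hemisphere. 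This also gives injectivity of $\mathbf{u}$ at once.

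Second, the normals do not all point outward in general. Write $v_1=a-c$, $v_2=a-d$, $v_3=b-d$, $v_4=b-c$ for the images of the corners $(0,0),(0,1),(1,1),(1,0)$. The definitions make $\mathbf{n}_a,\mathbf{n}_d,\mathbf{n}_b,\mathbf{n}_c$ positive multiples of $v_1\times v_2$, $v_2\times v_3$, $v_3\times v_4$, $v_4\times v_1$. The Jacobian of $\mathbf{u}$, with $S^2$ oriented by its outward normal, has sign $-\sigma_{12}$, and the corner loop above is clockwise in the $(s,t)$-plane. Hence these normals point outward when $\sigma_{12}<0$ and inward when $\sigma_{12}>0$.

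What you actually need is only this consistency. Flipping all four normals leaves every $\mathbf{n}\cdot\mathbf{n}'$ unchanged, so the arcsine sum is the same either way, and consistency is a direct algebraic check. Your ``one example plus continuity'' plan cannot deliver it. The non-degenerate set $\{\det(b-a,d-c,a-c)\neq0\}$ has two components, $\sigma_{12}=\pm1$, and a single test configuration reaches only one of them.
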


As in Eq.~\eqref{eq:incomplete gli-line}, the integral now only depends on the position of the endpoints of the line segments, which can be interpreted geometrically as the signed solid angle subtended by the tetrahedron formed by the four endpoints of the segments.
This representation provides an intuitive picture of how the interaction arises and simultaneously captures topological and geometric information: segments that are close and arranged in configurations resembling intertwined or crossing structures yield larger contributions, whereas distant or nearly parallel segments contribute little. 
Importantly, the sign of the contribution is determined by the relative orientation of the segments, representing the direction of their winding in space.

We now show how the incomplete Gauss linking integral between two line segments changes, subject to transformations of one of the two lines. 
Specifically, we maintain one line segment fixed along the x-axis, and the second one is obtained by first a translation along the y-axis, with various lengths, and then a rotation in the x-z plane around the y-axis, with a full spectrum of angles from $0$ to $2\pi$. 
As expected from Corollary \ref{col:gli-symm-123}, we observe a periodic behavior, in terms of the absolute value, as the rotation angle changes from $0$ to $2\pi$; see Fig.~\ref{fig:gli-line-rot2} (left) (also in Fig.~\ref{fig:schematic}B). We also observe that as one increases the length of translation between the two line segments, the Gauss linking integral values become smaller, as inferred from Theorem \ref{the:gli-line}. 

Furthermore, as the line segment further rotates in the x-y plane around the z-axis, periodic behavior in the incomplete Gauss linking integral remains; see Fig.~\ref{fig:gli-line-rot2} right. In particular, for each rotation angle $\theta_1 \ne 0, \pi, 2\pi$ in the x-z plane, as we add another rotation $\theta_2$ in the x-y plane, the incomplete Gauss linking integral first becomes larger in absolute value as $\theta_2$ changes from $0$ to $\pi/2$, then it becomes smaller in absolute value as it further changes from $\pi/2$ to $\pi$, which can be inferred from Theorem \ref{the:gli-line}; this process repeats as it further changes from $\pi$ to $3\pi/2$ and further to $2\pi$, which is expected from Corollary \ref{col:gli-symm-123}. The whole process will not change the sign of the incomplete Gauss linking integral obtained from the first rotation. 
\begin{figure}[htbp]
    \centering
    \begin{tabular}{cc}
        \includegraphics[width=0.45\linewidth]{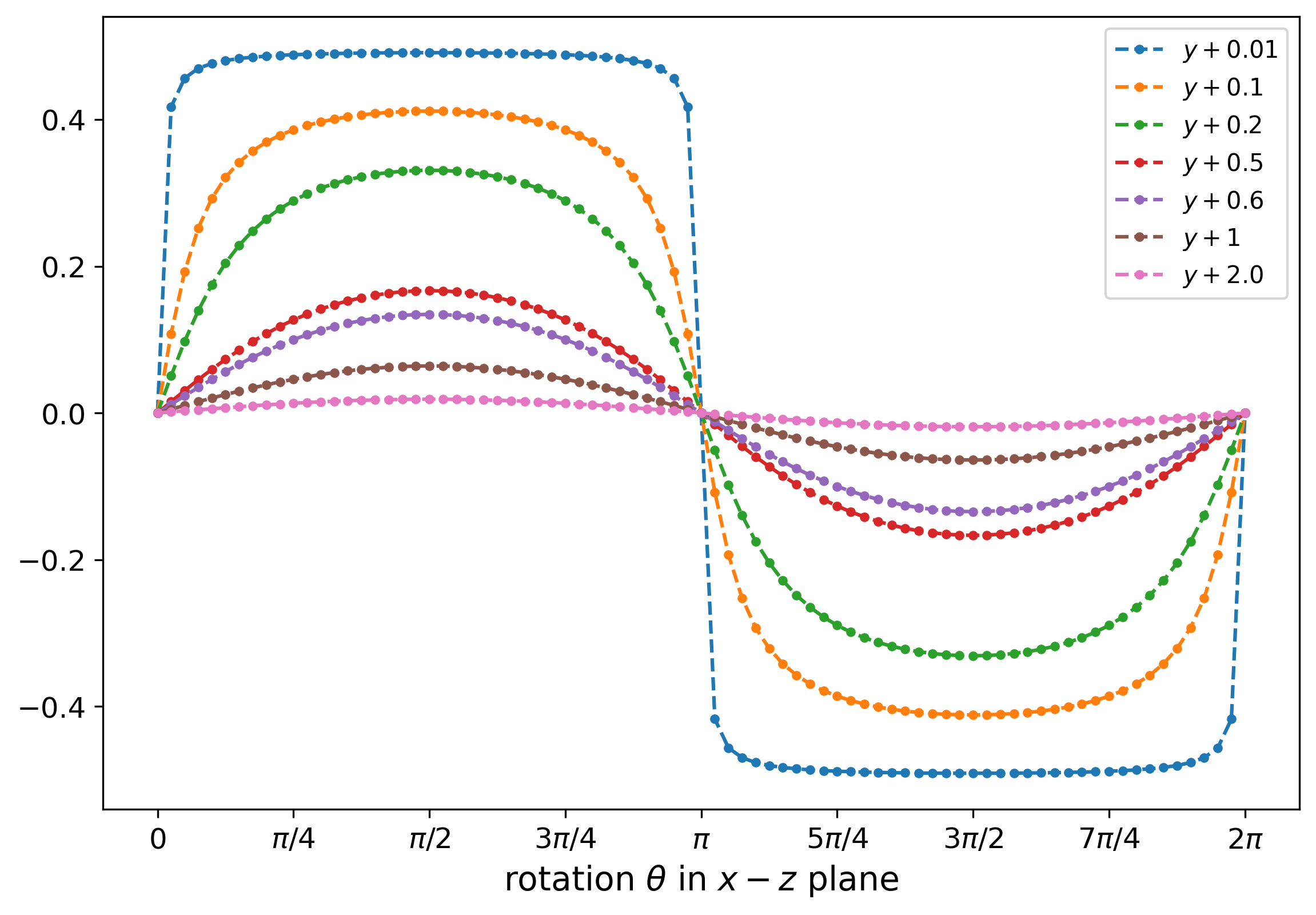} & \includegraphics[width=0.5\linewidth]{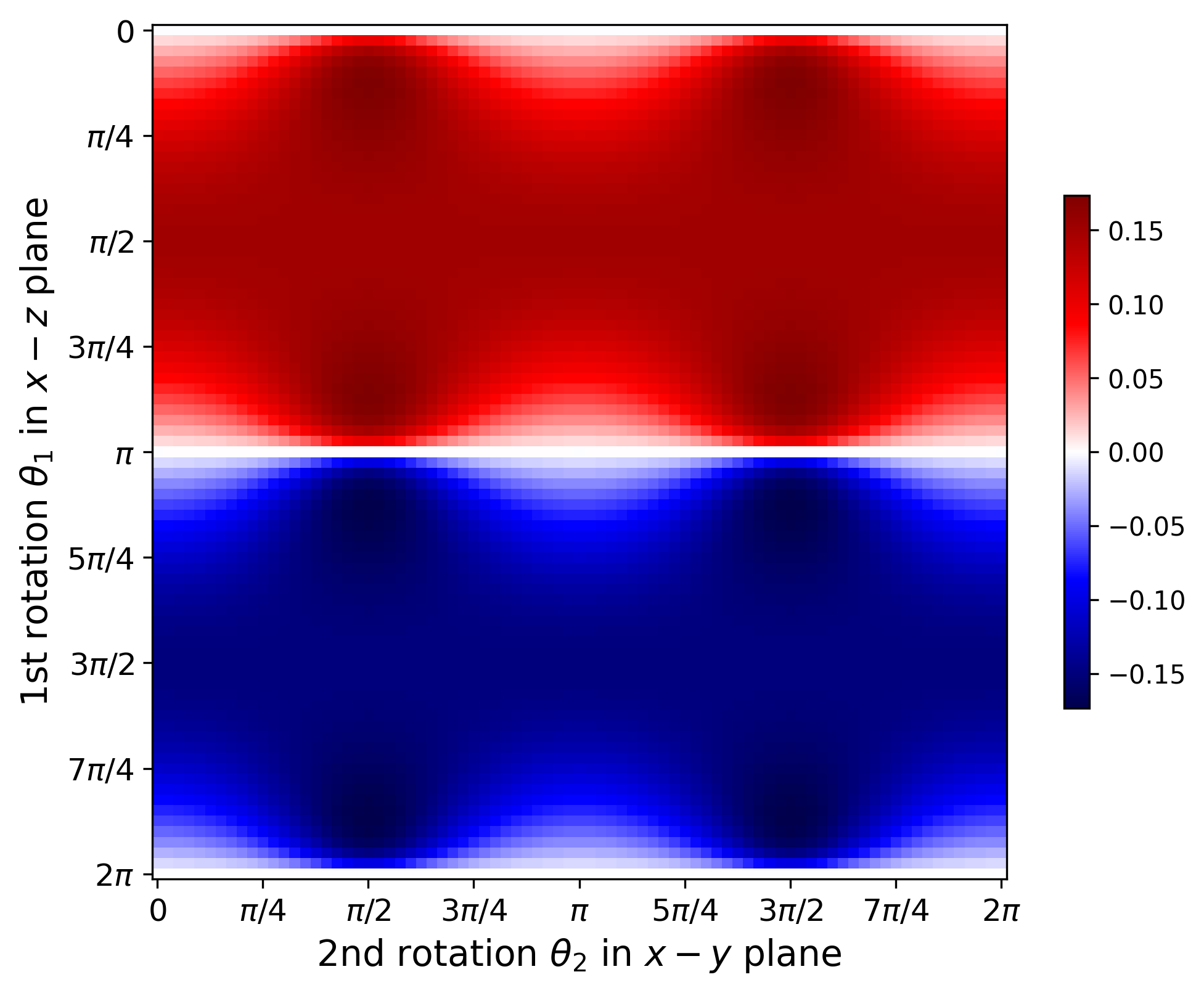}
    \end{tabular}
    \caption{Incomplete Gauss linking integral between a line segment $\gamma_1$ from $(-0.5, 0, 0)$ to $(0.5, 0, 0)$, and $\gamma_2$ obtained by different transformations: \add{(left) first translating $\gamma_1$ along the y-axis with different lengths (different colors) and then rotating angle $\theta$ in the $x-z$ plane around the y-axis ; (right)} first translating $\gamma_1$ along the y-axis with length $0.55$, second rotating angle $\theta_1$ in the x-z plane around the y-axis and then rotating angle $\theta_2$ in the x-y plane around z-axis. }
    \label{fig:gli-line-rot2}
\end{figure}

\subsection{\Name operator and its properties}
\label{app:gemini-property}

One may consider a form of the \name operator that explicitly contains all choices of arbitrary direction along each edge. Specifically, we denote the operator that contains the incomplete linking between each pair of edges in both directions as $\mathbf{\Lambda}_f \in \mathbb{R}^{2m_1\times 2m_2}$, then
\begin{align*}
    \mathbf{\Lambda}_f = 
    \begin{pmatrix}
        \mathbf{\Lambda}_g & -\mathbf{\Lambda}_g\\
        -\mathbf{\Lambda}_g & \mathbf{\Lambda}_g
    \end{pmatrix},
\end{align*}
where the first $m_1$ rows correspond to all edges in $E_1$ along the direction in the rows of $\mathbf{\Lambda}_g$ and the following $m_1$ rows correspond to all edges in the same order but along the opposite direction, and the columns are aligned in the same way but for edges in $E_2$. 
Therefore, all computations on $\mathbf{\Lambda}_f$ can be interpreted instead through the \name operator $\mathbf{\Lambda}_g$. In particular, the eigen-space of $\mathbf{\Lambda}_f$ can be completely characterized by the eigen-space of $\mathbf{\Lambda}_g$, as in Proposition \ref{pro:net-lambda-f}. 

\begin{proposition}
    If $\mathbf{u}, \mathbf{v}$ is a left-right eigenvector pair of $\mathbf{\Lambda}_g$ with respect to eigenvalue $\lambda$, then $(\mathbf{u}; -\mathbf{u}), (\mathbf{v}; -\mathbf{v})$ is a left-right eigenvector pair of $\mathbf{\Lambda}_f$ with respect to eigenvalue $2\lambda$, and $(\mathbf{u}; \mathbf{u}), (\mathbf{v}; \mathbf{v})$ is a left-right eigenvector pair of $\mathbf{\Lambda}_f$ with respect to eigenvalue $0$.
    \label{pro:net-lambda-f}
\end{proposition}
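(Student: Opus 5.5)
The plan is to use the block structure of $\mathbf{\Lambda}_f$ directly. Write $\mathbf{\Lambda}_f = \mathbf{B}\otimes\mathbf{\Lambda}_g$ with $\mathbf{B} = \begin{pmatrix}1 & -1\\ -1 & 1\end{pmatrix}$. This is valid because each block of $\mathbf{\Lambda}_f$ is $\pm\mathbf{\Lambda}_g$: reversing one edge flips the sign by bilinearity (property (i) of Appendix \ref{app:incomplete_GLI}), and reversing both edges restores it. Since $\mathbf{\Lambda}_g$ is in general rectangular ($m_1\times m_2$), "left-right eigenvector pair with eigenvalue $\lambda$" will be read as $\mathbf{\Lambda}_g\mathbf{v} = \lambda\mathbf{u}$ and $\mathbf{\Lambda}_g^T\mathbf{u} = \lambda\mathbf{v}$. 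This is the singular-pair sense, equivalently an eigenpair of $\mathbf{W}_b$ with eigenvector $(\mathbf{u};\mathbf{v})$. When $m_1=m_2$ the argument covers the ordinary left/right eigenvectors, $\mathbf{u}^T\mathbf{\Lambda}_g=\lambda\mathbf{u}^T$ and $\mathbf{\Lambda}_g\mathbf{v}=\lambda\mathbf{v}$, verbatim.

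The first step is a block multiplication:
\begin{align*}
\mathbf{\Lambda}_f\begin{pmatrix}\mathbf{v}\\ -\mathbf{v}\end{pmatrix} = \begin{pmatrix}\mathbf{\Lambda}_g\mathbf{v}+\mathbf{\Lambda}_g\mathbf{v}\\ -\mathbf{\Lambda}_g\mathbf{v}-\mathbf{\Lambda}_g\mathbf{v}\end{pmatrix} = 2\lambda\begin{pmatrix}\mathbf{u}\\ -\mathbf{u}\end{pmatrix}.
\end{align*}
Since $\mathbf{\Lambda}_f^T = \mathbf{B}\otimes\mathbf{\Lambda}_g^T$ (because $\mathbf{B}$ is symmetric), the same computation gives $\mathbf{\Lambda}_f^T(\mathbf{u};-\mathbf{u}) = 2\lambda(\mathbf{v};-\mathbf{v})$. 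In Kronecker language, $(1,-1)^T$ is an eigenvector of $\mathbf{B}$ with eigenvalue $2$, and the eigenvalues of a Kronecker product multiply.

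The second step repeats this with $(\mathbf{v};\mathbf{v})$:
\begin{align*}
\mathbf{\Lambda}_f\begin{pmatrix}\mathbf{v}\\ \mathbf{v}\end{pmatrix} = \begin{pmatrix}\mathbf{\Lambda}_g\mathbf{v}-\mathbf{\Lambda}_g\mathbf{v}\\ -\mathbf{\Lambda}_g\mathbf{v}+\mathbf{\Lambda}_g\mathbf{v}\end{pmatrix} = \mathbf{0},
\end{align*}
and likewise for $\mathbf{u}$ under the transpose. This step is where $(1,1)^T$ spans $\ker\mathbf{B}$. Both stacked vectors are nonzero whenever $\mathbf{u},\mathbf{v}$ are, so they are genuine eigenvectors.

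No step here is a real obstacle. The only care needed is the first one: stating precisely the eigen-notion for a rectangular operator and confirming the block sign pattern of $\mathbf{\Lambda}_f$ from the orientation-reversal property. If one also wants the claim that these vectors exhaust the eigenspace of $\mathbf{\Lambda}_f$, I would add one more step. Since $(1,-1)^T$ and $(1,1)^T$ form an eigenbasis of $\mathbf{B}$, letting $(\mathbf{u}_k,\mathbf{v}_k)$ range over a complete system of pairs for $\mathbf{\Lambda}_g$ yields a complete system for $\mathbf{B}\otimes\mathbf{\Lambda}_g$ by a dimension count.
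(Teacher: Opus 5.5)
Your proposal is correct and matches the paper's proof. The paper uses the same reading $\mathbf{\Lambda}_g\mathbf{v}=\lambda\mathbf{u}$, $\mathbf{\Lambda}_g^T\mathbf{u}=\lambda\mathbf{v}$ and the same block multiplication, and it leaves the eigenvalue-$0$ case as a direct check, which you carry out explicitly; writing $\mathbf{\Lambda}_f=\mathbf{B}\otimes\mathbf{\Lambda}_g$ repackages that computation rather than taking a different route.
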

\begin{proof}
    Suppose $\mathbf{u}, \mathbf{v}$ is a left-right eigenvector pair of $\mathbf{\Lambda}_g$ with respect to the eigenvalue $\lambda$, then 
    \begin{align*}
        \mathbf{\Lambda}_g\mathbf{v} = \lambda\mathbf{u},\quad \mathbf{\Lambda}_g^T\mathbf{u} = \lambda\mathbf{v}.
    \end{align*}
    From $(\mathbf{v}; -\mathbf{v})$, we have
    \begin{align*}
        \mathbf{\Lambda}_f
        \begin{pmatrix}
            \mathbf{v}\\
            -\mathbf{v}
        \end{pmatrix} = 
        \begin{pmatrix}
        \mathbf{\Lambda}_g & -\mathbf{\Lambda}_g\\
        -\mathbf{\Lambda}_g & \mathbf{\Lambda}_g
    \end{pmatrix}
    \begin{pmatrix}
            \mathbf{v}\\
            -\mathbf{v}
        \end{pmatrix}
        = 
        \begin{pmatrix}
            2\mathbf{\Lambda}_g\mathbf{v}\\
            -2\mathbf{\Lambda}_g\mathbf{v}
        \end{pmatrix}
        =2\lambda
        \begin{pmatrix}
            \mathbf{u}\\
            -\mathbf{u}
        \end{pmatrix}.
    \end{align*}
    The case for $(\mathbf{u}; -\mathbf{u})$ can be shown in a similar manner. Therefore, $(\mathbf{u}; -\mathbf{u}), (\mathbf{v}; -\mathbf{v})$ is a left-right eigenvector pair of $\mathbf{\Lambda}_f$ with respect to eigenvalue $2\lambda$. 
    Finally, it is straightforward to check that $(\mathbf{u}; \mathbf{u}), (\mathbf{v}; \mathbf{v})$ are left, right eigenvectors, respectively, associated with the eigenvalue $0$. 
\end{proof}
We should note that for all nonzero $\mathbf{x}\in \mathbb{R}^{m_2}$, $(\mathbf{x}; \mathbf{x})$ is a right eigenvector associated with the eigenvalue $0$. Therefore, the extra $m_2$ dimensions added to the columns of the full operator $\mathbf{\Lambda}_f$ does not contribute to the nonzero eigenvalues, and so do the extra $m_1$ dimensions added to the rows. Therefore, the overall rank of the two operators remain the same, as in Corollary \ref{cor:lambda-f-rank}.  

\begin{corollary}
    The full operator $\mathbf{\Lambda}_f$ has the same rank as the \name operator $\mathbf{\Lambda}_g$.
    \label{cor:lambda-f-rank}
\end{corollary}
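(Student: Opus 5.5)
The plan is to write $\mathbf{\Lambda}_f$ as a Kronecker product and use multiplicativity of rank. From the block form in Appendix~\ref{app:gemini-property},
\begin{align*}
    \mathbf{\Lambda}_f = \begin{pmatrix} 1 & -1\\ -1 & 1\end{pmatrix}\otimes \mathbf{\Lambda}_g .
\end{align*}
The $2\times 2$ factor has rank $1$, since its rows are negatives of each other. The identity $\operatorname{rank}(\mathbf{A}\otimes\mathbf{B}) = \operatorname{rank}(\mathbf{A})\operatorname{rank}(\mathbf{B})$ then gives $\operatorname{rank}(\mathbf{\Lambda}_f) = \operatorname{rank}(\mathbf{\Lambda}_g)$ immediately.

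To keep the argument self-contained, and in the spirit of the remark after Proposition~\ref{pro:net-lambda-f}, I would also give a direct kernel argument. Take $(\mathbf{x};\mathbf{y})\in\mathbb{R}^{2m_2}$. Then
\begin{align*}
    \mathbf{\Lambda}_f\begin{pmatrix}\mathbf{x}\\ \mathbf{y}\end{pmatrix} = \begin{pmatrix}\mathbf{\Lambda}_g(\mathbf{x}-\mathbf{y})\\ -\mathbf{\Lambda}_g(\mathbf{x}-\mathbf{y})\end{pmatrix},
\end{align*}
so $(\mathbf{x};\mathbf{y})\in\ker\mathbf{\Lambda}_f$ if and only if $\mathbf{x}-\mathbf{y}\in\ker\mathbf{\Lambda}_g$.

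The linear map $(\mathbf{x},\mathbf{y})\mapsto(\mathbf{x}-\mathbf{y},\mathbf{y})$ is a bijection of $\mathbb{R}^{2m_2}$. It carries $\ker\mathbf{\Lambda}_f$ onto $\ker\mathbf{\Lambda}_g\times\mathbb{R}^{m_2}$. Hence
\begin{align*}
    \dim\ker\mathbf{\Lambda}_f = \dim\ker\mathbf{\Lambda}_g + m_2 .
\end{align*}
Applying rank--nullity to $\mathbf{\Lambda}_f$ and to $\mathbf{\Lambda}_g$ gives
\begin{align*}
    \operatorname{rank}\mathbf{\Lambda}_f = 2m_2-\dim\ker\mathbf{\Lambda}_f = m_2-\dim\ker\mathbf{\Lambda}_g = \operatorname{rank}\mathbf{\Lambda}_g .
\end{align*}
Equivalently, the column space of $\mathbf{\Lambda}_f$ is exactly $\{(\mathbf{w};-\mathbf{w}) : \mathbf{w}\in\operatorname{im}\mathbf{\Lambda}_g\}$. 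This is isomorphic to $\operatorname{im}\mathbf{\Lambda}_g$ via $\mathbf{w}\mapsto(\mathbf{w};-\mathbf{w})$, which gives the same conclusion.

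I expect no real obstacle. The only point needing care is to avoid arguing through "eigenvalues" of the rectangular matrices, as the surrounding text loosely does. The phrase "left-right eigenvector pair" really refers to singular-vector pairs, and a rank argument should rest on the kernel or image computation, or on the Kronecker identity, rather than on those pairs. If one prefers the singular-value route, Proposition~\ref{pro:net-lambda-f} shows that each nonzero singular value $\sigma$ of $\mathbf{\Lambda}_g$ yields the nonzero singular value $2\sigma$ of $\mathbf{\Lambda}_f$. One then needs the counting check that these exhaust the nonzero singular values of $\mathbf{\Lambda}_f$, which is exactly what the kernel computation above supplies.
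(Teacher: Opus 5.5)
Your proof is correct, and both of your routes work. The paper's justification is the short remark just before the corollary. By Proposition~\ref{pro:net-lambda-f}, each singular pair $(\mathbf{u},\mathbf{v})$ of $\mathbf{\Lambda}_g$ with value $\lambda$ yields the pair $(\mathbf{u};-\mathbf{u}),(\mathbf{v};-\mathbf{v})$ of $\mathbf{\Lambda}_f$ with value $2\lambda$. Every $(\mathbf{x};\mathbf{x})$ is annihilated, so the $m_2$ added column directions (and the $m_1$ added row directions) contribute nothing to the nonzero part. Implicitly this uses the splitting $\mathbb{R}^{2m_2}=\{(\mathbf{x};\mathbf{x})\}\oplus\{(\mathbf{y};-\mathbf{y})\}$, on whose two summands $\mathbf{\Lambda}_f$ acts as $0$ and as $2\mathbf{\Lambda}_g$ respectively. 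Your kernel computation, with the change of variables $(\mathbf{x},\mathbf{y})\mapsto(\mathbf{x}-\mathbf{y},\mathbf{y})$ and rank--nullity, is exactly this count made rigorous. It makes explicit the exhaustiveness that the paper asserts only informally in its ``eigenvalue'' language, and your caveat about that terminology for rectangular matrices is accurate. Your Kronecker factorization $\mathbf{\Lambda}_f=\begin{pmatrix}1&-1\\-1&1\end{pmatrix}\otimes\mathbf{\Lambda}_g$ is a genuinely different and shorter route that needs no spectral language at all. Because the singular values of a Kronecker product are the pairwise products of the factors' singular values (here $2$ and $0$ for the $2\times 2$ factor), it also immediately gives the paper's stronger structural fact: the nonzero singular values of $\mathbf{\Lambda}_f$ are exactly twice those of $\mathbf{\Lambda}_g$. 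What the paper's route offers in exchange is explicit singular vectors of $\mathbf{\Lambda}_f$ written in terms of those of $\mathbf{\Lambda}_g$.
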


\begin{definition}[Switching]
    The operation of reversing the signs of all edges connecting a subset $S\subseteq V$ and its complement $V\backslash S$ is called switching the subset $S$. Two signed configurations $\sigma, \sigma': E\to \{-1,1\}$ are said to be switching equivalent if there exists $S\subseteq V$ such that $\sigma'$ can be obtained from $\sigma$ by switching the subset $S$, denoted by $\sigma \approx \sigma'$. 
    \label{def:switching}
\end{definition}
Switching equivalence is an equivalence relation on sign configurations $\sigma$ of a fixed underlying graph, and the corresponding equivalent classes are called switching classes. We can show that the bipartite signed graphs derived from the \name operator, subject to different orientations of the spatial edges, are switching equivalent, as in Lemma \ref{lem:swtiching-equiv}. As a prominent feature of switching equivalence, the spectra of signed graphs remains invariant in the same switching equivalence class \cite{Atay_signedCheeger_2020,zaslavsky1982signed}. Therefore, we can obtain the invariance property of the \name operator as in Theorem \ref{the:switching-eigvals}. 
\begin{lemma}
    The bipartite signed graphs derived from the \name operator, subject to different orientations of the spatial edges, belong to the same switching equivalence class. 
    \label{lem:swtiching-equiv}
\end{lemma}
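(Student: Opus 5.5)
The plan is to show that changing the orientation of any collection of spatial edges acts on the bipartite signed graph $G_b$ exactly as a switching operation in the sense of Definition~\ref{def:switching}. Transitivity of switching equivalence then yields the claim. Write any orientation choice as a vector of signs $\tau=(\tau_k)_{k\in E_1\cup E_2}$, $\tau_k=\pm1$, relative to the reference orientation (node $i$ to node $j$, $i<j$) used to define $\mathbf{\Lambda}_g$.

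The first step uses the orientation-reversal property of the incomplete Gauss linking integral from Appendix~\ref{app:incomplete_GLI}: reversing a line segment negates $\mathfrak{L}$, since $\mathfrak{L}(-\gamma_1,\gamma_2)=-\mathfrak{L}(\gamma_1,\gamma_2)=\mathfrak{L}(\gamma_1,-\gamma_2)$. Together with bilinearity, the operator under orientation $\tau$ is
\begin{align*}
  \mathbf{\Lambda}_g^{\tau}=\mathbf{T}_1\mathbf{\Lambda}_g\mathbf{T}_2,\qquad \mathbf{T}_1=\mathbf{Diag}(\tau|_{E_1}),\ \mathbf{T}_2=\mathbf{Diag}(\tau|_{E_2}).
\end{align*}
Setting $\mathbf{T}=\mathbf{Diag}(\mathbf{T}_1,\mathbf{T}_2)$ gives $\mathbf{W}_b^{\tau}=\mathbf{T}\mathbf{W}_b\mathbf{T}$, with $\mathbf{T}$ diagonal and $\mathbf{T}^2=\mathbf{I}$. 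A direct consequence is that the underlying unsigned graph is unchanged: the support and the absolute weights $\abs{w_{ij}}$ stay the same, because zero entries remain zero. Only the signs change, by $\sigma^\tau(e_i,e_j)=\tau_i\tau_j\,\sigma(e_i,e_j)$.

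The second step identifies this sign change with switching. Let $S=\{k\in V_b:\tau_k=-1\}$. A bipartite edge $(e_i,e_j)$ changes sign exactly when $\tau_i\tau_j=-1$, that is, when exactly one endpoint lies in $S$, i.e.\ the edge joins $S$ to $V_b\setminus S$. This is precisely switching the subset $S$. So $\sigma^\tau\approx\sigma$ for every $\tau$. For any two orientations $\tau,\tau'$, applying the same argument with $\tau'\tau$ (entrywise product) shows $\sigma^{\tau'}\approx\sigma^{\tau}$, so all resulting signed graphs lie in a single switching class.

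The only delicate point is bookkeeping. One must check that reversing an edge in $E_1$ affects only the entries in its row of $\mathbf{\Lambda}_g$, and so only the bipartite edges incident to that vertex of $G_b$. One must also check that simultaneous reversals compose multiplicatively, so a single subset $S$ suffices rather than a sequence of switchings. Both follow from the diagonal-conjugation form above. The remaining subtlety is zero entries, which are excluded from $E_b$; they are invariant under sign flips, so the underlying graph is fixed.
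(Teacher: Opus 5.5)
Your proof is correct and uses the same idea as the paper: reversing a spatial edge negates its row or column of $\mathbf{\Lambda}_g$, which is exactly switching at the corresponding vertex of $G_b$. The paper reverses one edge at a time and then appeals to induction, whereas your signature-matrix form $\mathbf{W}_b^{\tau}=\mathbf{T}\mathbf{W}_b\mathbf{T}$ gives a single switching set $S$ directly, which matches the one-subset form of Definition~\ref{def:switching}, and it also makes the spectral invariance of $\mathbf{W}_b$ used in Theorem~\ref{the:switching-eigvals} immediate by similarity.
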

\begin{proof}
    Let us start from one arbitrary orientation of the edges, construct the \name operator $\mathbf{\Lambda}_g$, and derive the bipartite signed graph from $\mathbf{\Lambda}_g$. Now, suppose we reverse the orientation of one edge $e\in E_1$, then the corresponding row in $\mathbf{\Lambda}_g$ will have the opposite sign. This transfers to reversing the sign of all edges connecting with the corresponding \add{node} in $e\in V_b$, i.e., switching the subset $S = \{e\}$. Therefore, the new bipartite signed graph from this reversed orientation of one edge $e\in E_1$ in computing the \name operator is switching equivalent to the original bipartite signed graph, by Definition \ref{def:switching}. The case of reversing an arbitrary number of edges in $E_1$ and/or $E_2$ can be obtained by induction. 
\end{proof}
 
\begin{theorem}
    The eigenvalues of the \name operator remain the same, subject to different orientations of the spatial edges.
    \label{the:switching-eigvals}
\end{theorem}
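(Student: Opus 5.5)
The plan is to make the switching argument of Lemma \ref{lem:swtiching-equiv} concrete at the matrix level, so the result does not rest only on the cited spectral invariance of switching classes. Since $\mathbf{\Lambda}_g$ is rectangular ($m_1\times m_2$), I will read ``eigenvalues of the \name operator'' as the spectrum of the associated symmetric bipartite weight matrix $\mathbf{W}_b$. Equivalently, this is the set of singular values of $\mathbf{\Lambda}_g$, because the nonzero eigenvalues of $\mathbf{W}_b$ are exactly $\pm$ the singular values of $\mathbf{\Lambda}_g$, together with $|m_1-m_2|$ additional zeros. I will also record that the same argument covers the signed random-walk Laplacian $\mathbf{L}_{rw}$, and hence the unbalance score $\lambda_{min}$.

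First, I encode a change of orientation. Reversing the orientation of a subset of spatial edges in $E_1$ and $E_2$ multiplies the corresponding rows and columns of $\mathbf{\Lambda}_g$ by $-1$. This follows from bilinearity, $\mathfrak{L}(-\gamma_1,\gamma_2)=-\mathfrak{L}(\gamma_1,\gamma_2)$, established in Appendix \ref{app:incomplete_GLI}. Let $\mathbf{S}_1=\mathrm{Diag}(\sigma_1,\dots,\sigma_{m_1})$ and $\mathbf{S}_2=\mathrm{Diag}(\tau_1,\dots,\tau_{m_2})$ be diagonal matrices with entries in $\{\pm1\}$, where $-1$ marks a reversed edge. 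The new operator is then
\[
\mathbf{\Lambda}_g'=\mathbf{S}_1\mathbf{\Lambda}_g\mathbf{S}_2 .
\]
Setting $\mathbf{S}=\mathrm{Diag}(\mathbf{S}_1,\mathbf{S}_2)$, we get $\mathbf{W}_b'=\mathbf{S}\mathbf{W}_b\mathbf{S}$. This is exactly the matrix form of switching the vertex subset $\{e:\text{orientation reversed}\}$ in $V_b$, consistent with Lemma \ref{lem:swtiching-equiv}.

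Second, I show that this is a similarity transformation. Since $\mathbf{S}^2=\mathbf{I}$, we have $\mathbf{S}^{-1}=\mathbf{S}=\mathbf{S}^T$, so $\mathbf{W}_b'=\mathbf{S}\mathbf{W}_b\mathbf{S}^{-1}$ is an orthogonal similarity. Hence $\mathbf{W}_b'$ and $\mathbf{W}_b$ have the same characteristic polynomial, and therefore the same eigenvalues with the same multiplicities. The eigenvectors transform as $\mathbf{x}\mapsto\mathbf{S}\mathbf{x}$, so the sign flips can be read off directly. For the singular-value viewpoint, I note that $\mathbf{S}_1,\mathbf{S}_2$ are orthogonal, so $\mathbf{\Lambda}_g'=\mathbf{S}_1\mathbf{\Lambda}_g\mathbf{S}_2$ has the same singular values as $\mathbf{\Lambda}_g$. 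The left–right pairs $(\mathbf{u},\mathbf{v})$, in the sense of Proposition \ref{pro:net-lambda-f}, map to $(\mathbf{S}_1\mathbf{u},\mathbf{S}_2\mathbf{v})$.

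Third, I extend this to $\mathbf{L}_{rw}$. The degrees $d_i=\sum_j|(\mathbf{W}_b)_{ij}|$ are unchanged by sign flips, so $\mathbf{D}_b'=\mathbf{D}_b$. Since $\mathbf{D}_b$ and $\mathbf{S}$ are both diagonal, they commute. Therefore
\[
\mathbf{L}_{rw}'=\mathbf{I}-\mathbf{D}_b^{-1}\mathbf{S}\mathbf{W}_b\mathbf{S}=\mathbf{S}\mathbf{L}_{rw}\mathbf{S}^{-1},
\]
and the whole spectrum, including $\lambda_{min}$, is invariant.

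The only real obstacle is interpretive rather than technical. ``Eigenvalues'' of a non-square $\mathbf{\Lambda}_g$ must be pinned down, and I will do this through $\mathbf{W}_b$ or the singular values as above. I will also address isolated vertices with $d_i=0$, which arise from rows or columns of $\mathbf{\Lambda}_g$ that vanish entirely, by restricting to vertices with nonzero degree or setting $\mathbf{D}_b^{-1}$ to $0$ there. Neither choice affects the commutation argument.
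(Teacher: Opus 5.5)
Your argument is correct and follows the paper's route: orientation reversal acts as a switching of the bipartite signed graph, so the spectrum of $\mathbf{W}_b$, and hence that of $\mathbf{\Lambda}_g$, is unchanged. The difference is that you prove the cited switching invariance directly through the orthogonal similarity $\mathbf{W}_b'=\mathbf{S}\mathbf{W}_b\mathbf{S}$, and you add the extension to $\mathbf{L}_{rw}$.

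Reading the eigenvalues of $\mathbf{\Lambda}_g$ as its singular values is the right choice, and it sharpens the paper's loose remark that $\mathbf{W}_b$ has the eigenvalues of $\mathbf{\Lambda}_g$ with doubled multiplicity. The spectrum of $\mathbf{W}_b$ is $\{\pm\sigma_i\}$ plus $|m_1-m_2|$ zeros. Moreover, when $m_1=m_2$ the literal eigenvalues of the square matrix $\mathbf{\Lambda}_g$ need not be invariant, since $\mathbf{S}_1\mathbf{\Lambda}_g\mathbf{S}_2$ can have a different trace.
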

\begin{proof}
    From Lemma \ref{lem:swtiching-equiv}, we know that the bipartite signed graphs derived from the \name operator, subject to different orientations of edges, belong to the same switching equivalence class; therefore, from the spectral invariance property of the switching equivalence class, the eigenvalues of $\mathbf{W}_b$ remain the same. We know that for bipartite graphs, the weight matrix $\mathbf{W}_b$ share the same eigenvalues as the bipartite weight matrix $\mathbf{\Lambda}_g$ (but the multiplicity doubles). Therefore, the eigenvalues of the \name operator $\mathbf{\Lambda}_g$ remain invariant subject to different orientations of edges in the original spatial networks. 
\end{proof}

\subsubsection{Signed bi-clustering}
\label{app:signed_analysis}
In particular, we apply the following spectral clustering algorithm for signed bi-clustering problem; see Algorithm \ref{alg:sc_signed}. Specifically, based on the vanilla spectral clustering algorithm in \cite{Kunegis_signspect_2010}, we add two extra steps: to combine information in multiple eigenvectors associated with the smallest eigenvalue, and to allocate nodes with $0$ values in the eigenvector(s).  

\begin{algorithm}[H]
		\caption{Spectral clustering algorithm on signed networks.}
		\label{alg:sc_signed}
		\begin{algorithmic}[1] 
			\State{Input: signed network $G$ with weight matrix $\mathbf{W}$.} 
			\State{Construct the signed graph Laplacian $\mathbf{L} = \mathbf{D} - \mathbf{A}$, where $\mathbf{D}$ is a diagonal matrix and the $(i,i)$ element is $d_i = \sum_{j}\abs{W_{ij}}$, the degree of node $i$.} 
			\State{Compute the set of eigenvector(s) associated with the smallest eigenvalue of $\mathbf{L}$.}
            \State{Compute the set of linear combinations of the smallest eigenvectors that have a different sign distribution on their elements, denoted by $\mathcal{X} = \{\mathbf{x}_i\}_{i=1}^{n_c}$ and $|\mathcal{X}| = n_c$.}
            \For{$i=1$ to $n_c$}
			\State{For $j = 1,\dots, n$, let $(\mathbf{x}_i)_j$ be the value corresponding to the $j$-th element of $\mathbf{x}_i$. Cluster node $1$ into community 1 if  $(\mathbf{x}_i)_j > 0$, and community 2 if $(\mathbf{x}_i)_j < 0$. This gives the initial clustering $C_i$.}
			\If{The set $\mathcal{U} = \{j: (\mathbf{x}_i)_j=0\}$ is not empty}
			\State{Construct the power set of $\mathcal{U}$, $\mathcal{P}(\mathcal{U})$.}
            \State{For each set $\mathcal{S} \in \mathcal{P}(\mathcal{U})$, create a new clustering $C'_i$ from $C_i$ by adding nodes in $\mathcal{S}$ into community 1 and the remaining nodes in $\mathcal{U}$ into community 2. Evaluate the objective value with clustering $C'_i$.}
            \State{Update the initial clustering $C_i$ with the $C'_i$ of the optimal objective value.}
			\EndIf
            \EndFor
            \State{Choose the optimal clustering $C^*$ from the set $\{C_i\}_{i=1}^{n_c}$.}
			\State{Output: clustering $C^*$.} 
		\end{algorithmic}
	\end{algorithm}

\subsection{Ladder and Step Lattice Examples}
\label{app:ladder-step-egs}

We here consider the \add{ensnarled ladder lattices}, and construct the \name operator accordingly; see Fig.~\ref{fig:eg-ladder}. The incomplete linking values between edge pairs $\{(0,2), (1,3)\}\times \{(4,6), (5,7), (6,8), (7,9)\}$ are zero because they are parallel to each other by Corollary \ref{cor:gli-parallel-line}. Meanwhile, the absolute values of the incomplete linking are the same between edge pairs $\{(0,1)\}\times \{(4,5), (4,6), (5,7), (6,7)\}$, $\{(2,3)\}\times \{(6,7), (6,8), (7,9), (8,9)\}$ and $\{6,7)\}\times \{(0,1), (0,2), (1,3), (2,3)\}$ because they are ``symmetric'' to each other in the way that is described in Proposition \ref{pro:gli-symm-1234}. The case for the remaining edge pairs, which also have the same absolute value of the incomplete linking, can be explained in the same way.

\begin{figure}[t]
    \centering
    \includegraphics[width=.98\textwidth]{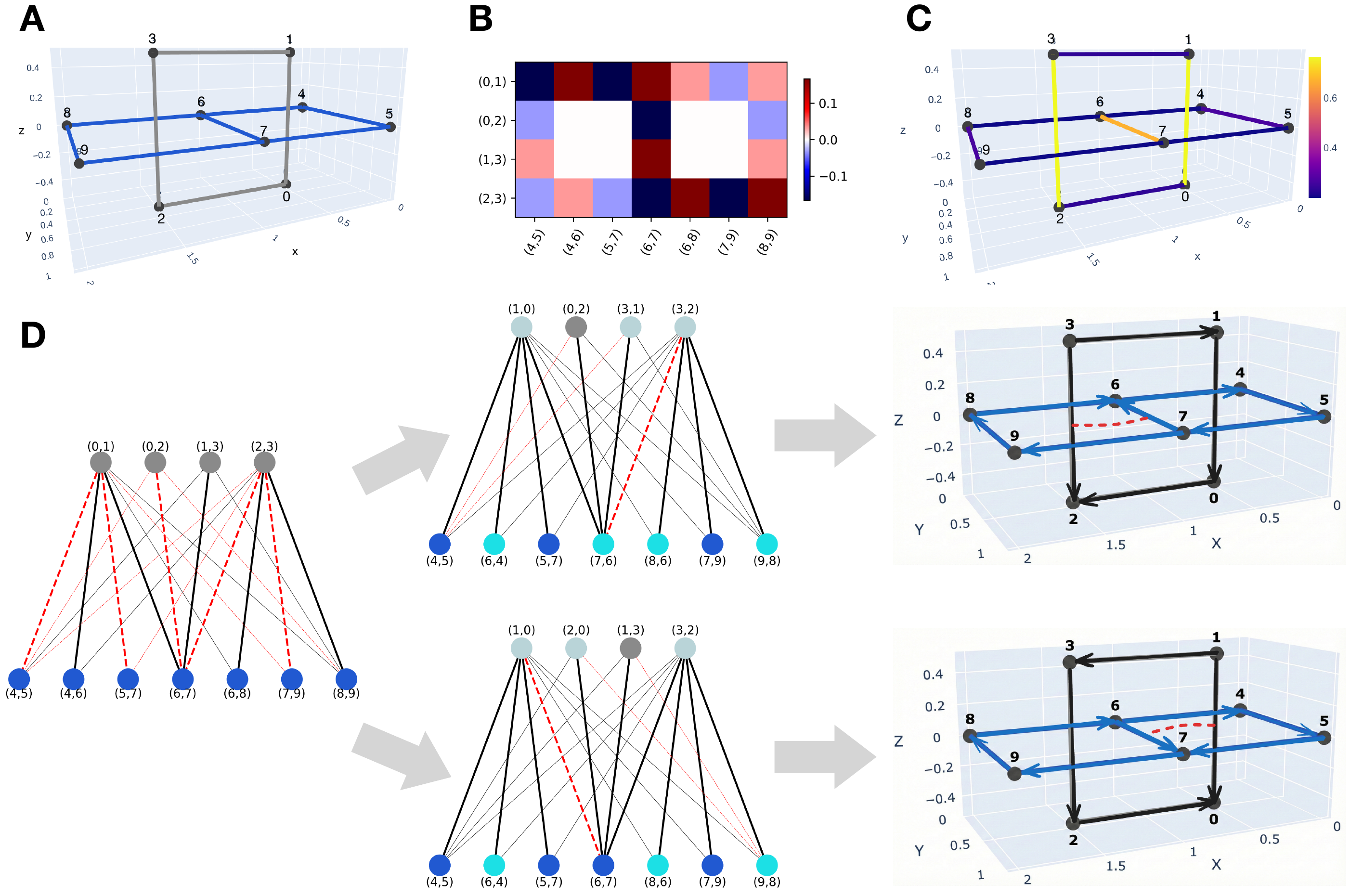}
    \caption{Example of the ensnarled ladder lattices with period 1. \textbf{A}. Visualization of the spatial network. \textbf{B}. \name operator from edges in gray to edges in blue. \textbf{C}. Linking centrality of edges in A. \textbf{D}. Bipartite signed graphs: each spatial edge is oriented (left) from the node with a smaller label to the node with a larger label, and then (middle) according to the optimal orientations that minimize the negative incomplete linking, with spatial edges whose orientations have been reversed marked in a lighter color, where the top and bottom figures correspond to two equivalent solutions. (right) The corresponding orientation of edges in \textbf{A} is shown on the right, where a red dashed line connects the edge pair with the most negative linking. }
    \label{fig:eg-ladder}
\end{figure}

Next, we take into account the signs and examine the bipartite signed graph derived from the \name operator; see Fig.~\ref{fig:eg-ladder}D. We measure how far it is from structural balance by the smallest eigenvalue of the signed random-walk Laplacian, \add{i.e., the unbalance score,} where $\lambda_{\min} = 0.153$; 
therefore, the signed graph is not far from balance, and the signed bi-clustering is expected to perform reasonably well, although the unbalance is also not negligible. 
Indeed, the signed bi-clustering provides two optimal solutions where the difference is in the allocation of nodes $(0,2), (1,3), (6,7)$; see Fig.~\ref{fig:eg-ladder}D. 
We should note that after removing these nodes, the signed graph will be balanced; in fact, as we will see later, we only need to remove two nodes to make the signed graph balanced. 

\textit{What are the intrinsic negative edges in the bipartite signed graph?} We show the signed graphs after switching one set of nodes, equivalently changing the orientation of the corresponding edges in the regular lattices, from the optimal solutions of the signed bi-clustering in Fig.~\ref{fig:eg-ladder}D (middle). We see that there is always a negative edge between $(7,6)$ and either $(1,0)$ or $(3,2)$, and there are always two negative edges between node in $\{(0,2), (3,1)\}$ and either $(4,5)$ or $(9,8)$. Therefore, if we remove edges $(7,6)$ and $(4,5)$ from the regular lattices, one of many choices, the derived signed graph will be balanced. 
\textit{Where do the intrinsic negative edges originate?} By removing spatial edges $(7,6)$ and $(4,5)$, the remaining lattices have the property that for each pair of remaining oriented edges in $E_2$, if their angle is $0$, they are placed in the same ``side'' for each edge in $E_1$, and if their angle is $\pi$, they are placed in the different ``sides'' for each edge in $E_1$, where the space is divided into different sides by the spatial edges in $E_1, E_2$ under consideration. This property is immediately broken once we add any of the two removed spatial edges. In general, we expect a negative edge in the bipartite signed graphs to occur when spatial edges in one set wrap around spatial edges in the other set, which contributes to the ``linking'' of the two sets of edges in an incomplete sense. We leave further detailed mathematical exploration of this concept to future work.

Finally, we explore the newly proposed linking centrality and find that it can identify the edges that are important in the linking between the two sets of edges; see Fig.~\ref{fig:eg-ladder}C. We should note that topologically speaking, not only the three of the highest centrality measures, breaking any of edges in $\{(1,3), (0,2)\}$ will also unlink the two set of edges; however, incorporating the geometric perspective, edge $(6,7)$ really goes through the other set of edges, or inside their cycle and out, similar for edges $\{(0,1), (2,3)\}$ of the highest linking centrality. Therefore, it is reasonable to identify these edges as more important. Meanwhile, this also implies that the \name operator can combine both topological and geometric perspectives.

\begin{figure}[t]
    \centering
    \includegraphics[width=.98\textwidth]{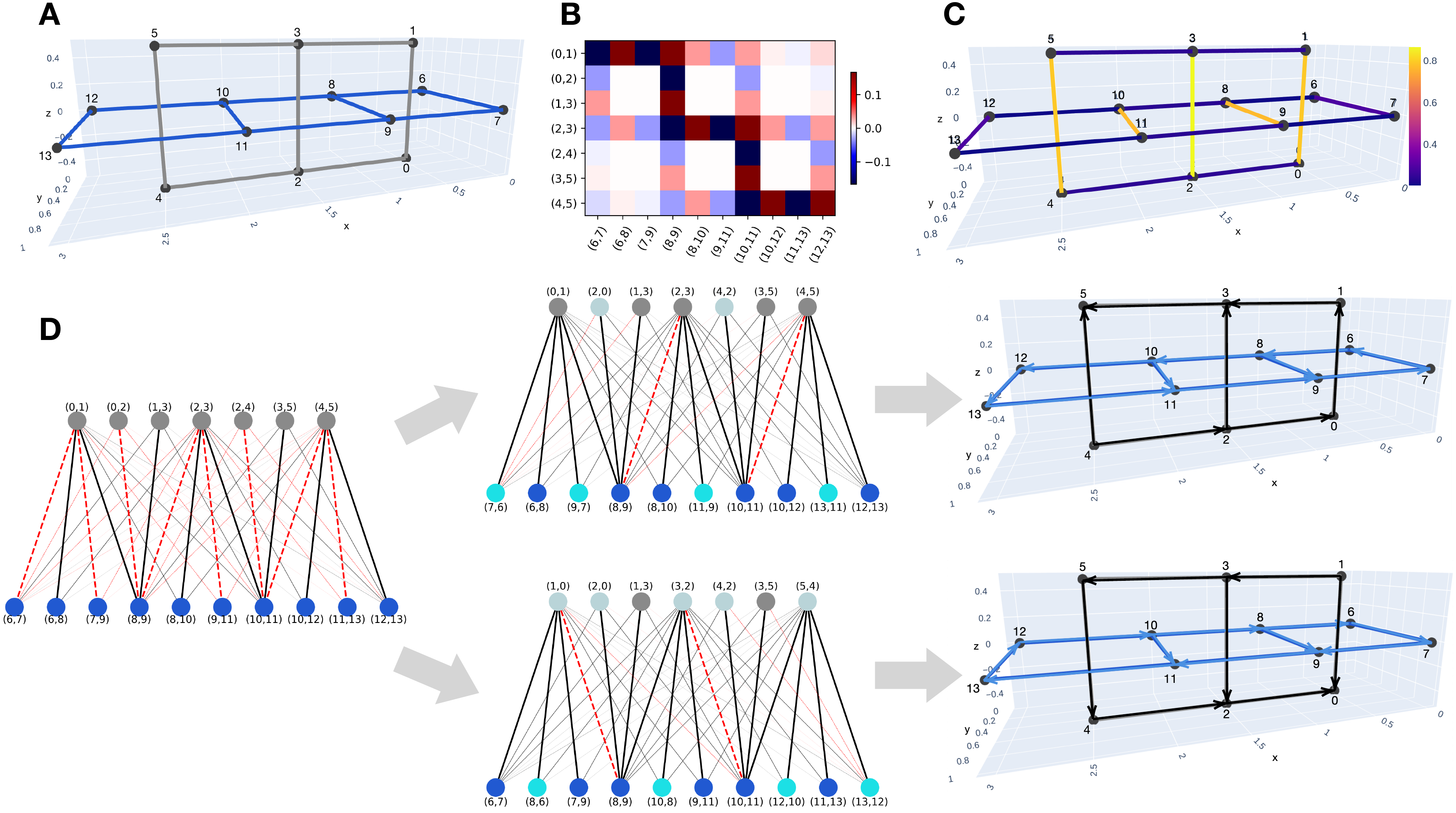}
    \caption{Example of the ensnarled ladder lattices with period 2. \textbf{A}. Visualization of the spatial network. \textbf{B}. \name operator from edges in gray to edges in blue. \textbf{C}. Linking centrality of edges in A. \textbf{D}. Bipartite signed graphs: each spatial edge is oriented (left) from the node with a smaller label to the node with a larger label, and then (middle) according to the optimal orientations that minimize the negative incomplete linking, with spatial edges whose orientations have been reversed marked in a lighter color, where the top and bottom figures correspond to two equivalent solutions. (right) The corresponding orientation of edges in \textbf{A} is shown on the right, where a red dashed line connects the edge pair with the most negative linking. }
    \label{fig:lad-p2}
\end{figure}

We can further increase the size of the ensnarled ladder lattices, and similar results follow; see Fig.~\ref{fig:lad-p2}. Specifically, similar patterns in the \name operator occur periodically, which can be explained in the same way as the case of period $1$ in Fig.~\ref{fig:eg-ladder}.

In terms of the derived signed bipartite graph, now the unbalance score becomes $\lambda_{\min} = 0.183$, larger  
than the previous case; see Fig.~\ref{fig:lad-p2}D (left). This is expected, because the intrinsic negative edges, or conflicts in the ensnarled ladder lattices in Fig.~\ref{fig:eg-ladder}, will repeatedly occur as we increase the size of the lattices. The repeated conflicts become clearer after we perform signed bi-clustering and apply switching to one set of the nodes; see Fig.~\ref{fig:lad-p2}D (middle). 
\add{The signed bi-clustering again provides two optimal solutions, which naturally extend the results from the ensnarled ladder lattices of period $1$. Meanwhile, the spatial edges of high linking centrality also correspond to nodes in the signed graph connected by strong negative edges; see Fig.~\ref{fig:lad-p2}D (right).  }

\begin{figure}[htbp]
    \centering
    \includegraphics[width=.98\textwidth]{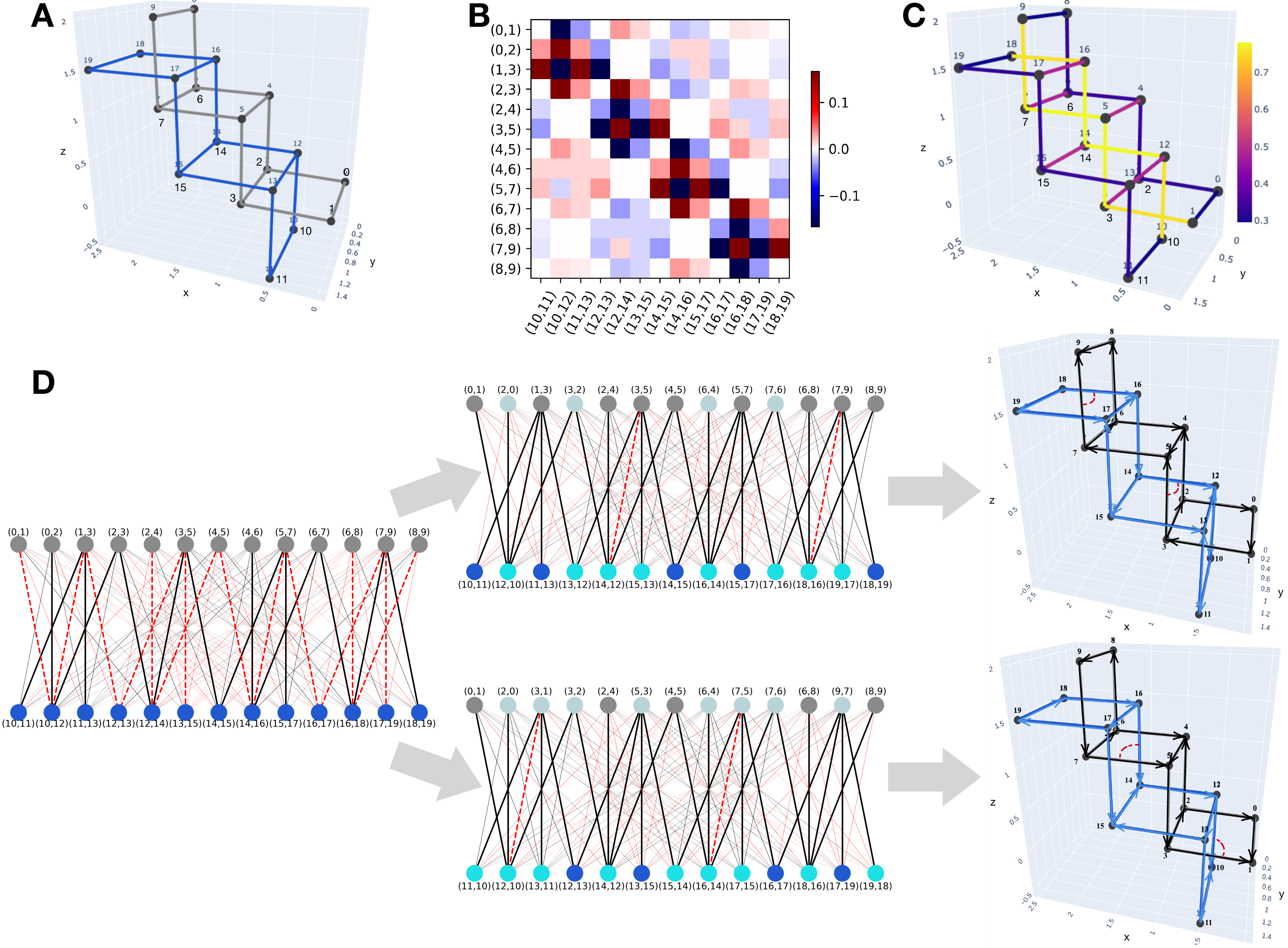}
    \caption{Example of the ensnarled step lattices with period 2. \textbf{A}. Visualization of the spatial network. \textbf{B}. \name operator from edges in gray to edges in blue. \textbf{C}. Linking centrality of edges in A. \textbf{D}. Bipartite signed graphs: each spatial edge is oriented (left) from the node with a smaller label to the node with a larger label, and then (middle) according to the optimal orientations that minimize the negative incomplete linking, with spatial edges whose orientations have been reversed marked in a lighter color, where the top and bottom figures correspond to two equivalent solutions. (right) The corresponding orientation of edges in \textbf{A} is shown on the right, where a red dashed line connects the edge pair with the most negative linking. }
    \label{fig:step-p2}
\end{figure}

For the ensnarled step lattices, the \name operator can capture its intrinsic symmetries through its eigenvalues as in Proposition \ref{pro:step-eigs}.
\begin{proposition}
    For each nonzero eigenvalue of the \name operator of the ensnarled step lattices, it always has multiplicity $2$, i.e., it always has two linearly independent eigenvectors associated.
    \label{pro:step-eigs}
\end{proposition}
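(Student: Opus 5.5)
The plan is to turn the two geometric symmetries of the repeat unit into signed permutation matrices acting on $\mathbf{\Lambda}_g$, and then use their algebra to force every eigenspace to be even-dimensional. Since $|E_1| = |E_2| = m$, $\mathbf{\Lambda}_g$ is square and its eigenvalues make sense. For each symmetry I will invoke Proposition~\ref{pro:gli-symm-1234}: an isometry $\Phi$ with linear part $\mathbf{R}$ maps each oriented segment $e$ to a segment $\Phi(e)$. That image is again an edge of the lattice, traversed either along or against its fixed orientation, which records a sign $\epsilon_e = \pm 1$. Proposition~\ref{pro:gli-symm-1234} then gives
\[
\mathfrak{L}(\Phi(e_i), \Phi(e_j)) = \det(\mathbf{R})\,\epsilon_{e_i}\epsilon_{e_j}\,\mathfrak{L}(e_i, e_j).
\]

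\textbf{Step 1 (plane symmetry).} The mirror plane through the two shared edges maps $E_1$ to $E_1$ and $E_2$ to $E_2$, with $\det(\mathbf{R}) = -1$. Encoding the induced permutations and orientation signs as signed permutation matrices $\mathbf{S}_1, \mathbf{S}_2$, I expect
\[
\mathbf{S}_1 \mathbf{\Lambda}_g \mathbf{S}_2^T = -\mathbf{\Lambda}_g .
\]

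\textbf{Step 2 (point symmetry).} The point reflection through the midpoint between the two shared edges swaps $E_1$ and $E_2$, again with $\det(\mathbf{R}) = -1$. Choosing the row and column orderings compatibly (edge $e$ in $E_1$ at the same index as its image $e'$ in $E_2$), this yields a signed permutation $\mathbf{Q}$ with
\[
\mathbf{\Lambda}_g^T = -\mathbf{Q}\mathbf{\Lambda}_g\mathbf{Q}^T ,
\]
possibly after composing with the plane symmetry to fix the overall sign.

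\textbf{Step 3 (a commuting operator with no real eigenvectors).} Combining Steps 1 and 2 with the symmetry $\mathfrak{L}(e_i,e_j) = \mathfrak{L}(e_j,e_i)$, I will look for a single signed permutation $\mathbf{J}$, built as a product of $\mathbf{S}_1$, $\mathbf{S}_2$ and $\mathbf{Q}$, such that
\[
\mathbf{J}\mathbf{\Lambda}_g = \mathbf{\Lambda}_g \mathbf{J}, \qquad \mathbf{J}^2 = -\mathbf{I}
\]
on the relevant subspace. Once such a $\mathbf{J}$ exists, the argument is short. Each eigenspace $\ker(\mathbf{\Lambda}_g - \lambda\mathbf{I})$ for real $\lambda \neq 0$ is $\mathbf{J}$-invariant. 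A real $\mathbf{J}$ with $\mathbf{J}^2 = -\mathbf{I}$ has no real eigenvector, so $\mathbf{v}$ and $\mathbf{J}\mathbf{v}$ are linearly independent eigenvectors for the same $\lambda$, and the eigenspace is even-dimensional. If the eigenvalues come in complex pairs instead, the same reasoning applies to the realified eigenspaces. The fallback, if the eigenvalues are interpreted as those of $\mathbf{W}_b$ or as singular values, is to use the block form $\mathbf{U} = \begin{pmatrix}\mathbf{0} & \mathbf{Q}\\ \mathbf{Q}^T & \mathbf{0}\end{pmatrix}$ together with $\mathrm{diag}(\mathbf{S}_1,\mathbf{S}_2)$. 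These give a pair of operators that both commute with $\mathbf{W}_b$ up to sign but anticommute with each other, and any such pair forces even multiplicity.

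\textbf{Step 4 (nondegeneracy and general period).} I will check that a nonzero eigenvalue cannot be fixed by the construction, i.e.\ that $\mathbf{J}\mathbf{v}$ is never a multiple of $\mathbf{v}$; this is where $\lambda \neq 0$ matters. I will then note that the signed permutations extend periodically to step lattices of arbitrary size, so the argument does not depend on the period.

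The main obstacle is the sign bookkeeping in Steps 1–3. Getting $\mathbf{J}^2 = -\mathbf{I}$, rather than $+\mathbf{I}$, depends on the specific orientation signs $\epsilon_e$ induced by the two isometries on the fixed edge orientations ($i<j$). I would first compute these signs explicitly on the period-1 unit, using the labeled edges of Fig.~\ref{fig:eg-step-p1}, and verify the relations numerically. If $\mathbf{J}^2 = +\mathbf{I}$ turns out to hold instead, I would switch to the anticommuting-pair argument of Step 3, which needs only $\mathbf{S}\mathbf{U} = -\mathbf{U}\mathbf{S}$ restricted to each eigenspace.
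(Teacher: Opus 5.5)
Your plan stops short of a proof. Everything rests on Step~3, and neither mechanism you propose can be built from $\mathbf{S}_1,\mathbf{S}_2,\mathbf{Q}$ alone.

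\textbf{Why the proposed operators fail.} The mirror plane contains both shared edges, and hence the inversion centre, so the reflection $\sigma$ and the point inversion $\iota$ commute. Because $e\mapsto\Phi(e)$ is a group action on oriented segments, the induced signed permutations commute too: $\mathbf{S}_1\mathbf{Q}=\mathbf{Q}\mathbf{S}_2$, i.e.\ $\mathbf{U}$ commutes with $\mathrm{diag}(\mathbf{S}_1,\mathbf{S}_2)$. Both anticommute with $\mathbf{W}_b$, since each isometry has $\det\mathbf{R}=-1$. Their product therefore commutes with $\mathbf{W}_b$ but squares to $+\mathbf{I}$; it is just the half-turn $\iota\sigma$. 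So the condition $\mathbf{S}\mathbf{U}=-\mathbf{U}\mathbf{S}$ that your fallback needs never holds.

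The square-matrix route is worse. In the pairing given by $\iota$, with $E_2$ oriented as the image of $E_1$, one has $\mathbf{S}_1=\mathbf{S}_2=\mathbf{S}$ and $\mathbf{S}\mathbf{\Lambda}_g=-\mathbf{\Lambda}_g\mathbf{S}$. Meanwhile $\mathbf{Q}$ only relates $\mathbf{\Lambda}_g$ to $\mathbf{\Lambda}_g^T$. Nothing built from these commutes with $\mathbf{\Lambda}_g$ and squares to $-\mathbf{I}$. The ``realified eigenspace'' variant is vacuous: Step~2 makes the aligned matrix skew-symmetric, so its eigenvalues $\pm is$ already come in conjugate pairs. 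The eigenvalues of the square matrix also depend on how rows are paired with columns. The statement must therefore be read for $\mathbf{W}_b$, i.e.\ for the singular values of $\mathbf{\Lambda}_g$, as in Theorem~\ref{the:switching-eigvals}.

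\textbf{The missing ingredient} is the bipartite grading $\mathbf{\Gamma}=\mathrm{diag}(\mathbf{I},-\mathbf{I})$. It anticommutes with $\mathbf{W}_b$ and with the block swap $\mathbf{U}$. In your aligned indexing $\mathbf{Q}=\mathbf{D}$ is a diagonal sign matrix. Then $\mathbf{J}=\mathbf{U}\mathbf{\Gamma}=\begin{pmatrix}\mathbf{0}&-\mathbf{D}\\ \mathbf{D}&\mathbf{0}\end{pmatrix}$ satisfies $\mathbf{J}^2=-\mathbf{I}$. It commutes with $\mathbf{W}_b$ precisely because $\mathbf{\Lambda}_g^T=-\mathbf{D}\mathbf{\Lambda}_g\mathbf{D}$, which is your Step~2 relation with the minus sign. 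So every eigenspace of $\mathbf{W}_b$ is even-dimensional, and the plane symmetry is not needed at all.

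Equivalently, reorient $E_2$ by $\mathbf{D}$; then $\mathbf{\Lambda}_g$ is skew-symmetric. On the $s^2$-eigenspace of $\mathbf{\Lambda}_g^T\mathbf{\Lambda}_g$, the map $\mathbf{\Lambda}_g/s$ is a complex structure, so each nonzero singular value has even multiplicity. The zero singular value has the parity of $m$, because skew-symmetric matrices have even rank; this is why the statement excludes it.

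So keep the minus sign in Step~2 rather than ``fixing'' it with $\sigma$. Composing with $\sigma$ gives the symmetric pairing via the half-turn, in which $\sigma$ merely pairs $\lambda$ with $-\lambda$. If you want to use both symmetries, $\mathbf{U}\,\mathrm{diag}(\mathbf{S}_1,\mathbf{S}_2)$ and $\mathrm{diag}(\mathbf{S}_1,-\mathbf{S}_2)$ are anticommuting involutions that commute with $\mathbf{W}_b$.

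\textbf{Comparison with the paper.} The paper aligns $E_1$ with $E_2$ by the point symmetry and asserts the result is symmetric. It then obtains $\mathbf{x}'$ by swapping entries via the plane symmetry, without checking that $\mathbf{x}'$ is independent of $\mathbf{x}$; that is exactly your Step~4 concern. Your own sign bookkeeping shows the point-aligned matrix is skew-symmetric and that the plane swap sends $\lambda$ to $-\lambda$. The doubling therefore comes from the skew-symmetry, not from the coordinate swap.
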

\begin{proof}
    For ensnarled step lattices, there are two intrinsic symmetries: the first one is inside each edge set, and the second one is between the two edge sets. On the one hand, the unit is symmetric with respect to the plane where the shared edges of the two squares in the two lattices lie: for each specific edge $e$, its corresponding symmetric edge is in the same lattice as $e$'s. Therefore, each individual lattice is symmetric with respect to this plane. On the other hand, the unit is also symmetric with respect to the midpoint between the midpoints of the two shared edges. With the latter, for each specific edge $e$, we can find another edge $e'$ in the other lattice such that $e,e'$ are symmetric with respect to the point. We also note that the two lattices have the same size. These results can be extended to ensnarled step lattices of arbitrary sizes.  
    
    From the second point symmetry between the two edge sets, we can write the \name operator as a symmetric matrix, subject to appropriate alignment of the two sets of edges such that the $i$-th edge in $E_1$ is plane symmetric to the $i$-th edge in $E_2$. We can change the orientations when necessary, where the eigenvalues are invariant by Theorem \ref{the:switching-eigvals}. 
    From the first plane symmetry inside each edge set, for each eigenpair $(\lambda, \mathbf{x})$ of the \name operator, we can rearrange the eigenvector for a new nonzero vector $\mathbf{x}'$ in the following way such that $(\lambda, \mathbf{x}')$ is also an eigenpair of the \name operator: suppose edges corresponding to dimensions $i,j$ are symmetric to each other, then $x_i' = x_j$ and $x_j' = x_i$.  
    Therefore, for each nonzero eigenvalue, we can always find two linearly independent eigenvectors associated. 
\end{proof}

We can further increase the size of the ensnarled step lattices, and similar results follow; see Fig.~\ref{fig:step-p2}. Specifically, the highly regular patterns occur periodically in the \name operator, and the prominent eigen-property of all eigenvalues with multiplicity $2$ is maintained, which can be explained in the same manner as in Fig.~\ref{fig:eg-step-p1}.

For the derived bipartite signed graph, now the unbalance measure is $\lambda_{\min} = 0.349$, larger 
than the previous case; see Fig.~\ref{fig:step-p2}D for the signed graphs after optimal switching. This is expected because the intrinsic negative edges, or conflicts in the step lattices in Fig.~\ref{fig:eg-step-p1} will repeatedly occur as we increase the size of the lattices (cf. Fig.~\ref{fig:eg-step-p1}D). We also note that there are still two optimal solutions to the signed bi-clustering, because the symmetries are maintained as a step lattice repeats its fundamental unit. The maintenance of symmetries is also evident from the linking centrality; see Fig.~\ref{fig:step-p2}C. The results can naturally extend to lattices of arbitrary sizes.

\subsubsection{Removing edges from ensnarled ladder lattices}

We have distinct results as we remove edges from the smaller lattice vs the larger one in the two ensnarled ladder lattices. 
Take the ensnarled ladder lattices of period 2 as an example. 
For the smaller lattice, the cyclic structure breaks immediately after we remove two edges from it; however, there is still some unbalance, of a lower level, in the two networks; see Fig.~\ref{fig:lad-ps-remv} (row 3). After we remove both edges that are of the second-highest centrality, the unbalance between the two becomes $0$. 
While for the larger lattice, at least three edges have to be removed to break the cyclic structure, and the unbalance between the two networks also disappears after removing the three edges; see Fig.~\ref{fig:lad-ps-remv} (row 4). 
The distinction between the two lattices also emphasizes the asymmetric roles of the two in their ensnarlment. where the smaller lattice plays a more important role. Furthermore, we observe that the first two edges removed in the larger lattice have the same centrality value, which is also much larger than the remaining edges, and removing them leads to almost the same decrease in the unbalance score, which is larger than removing the third one. 

After removing the first two edges in the smaller lattice, it becomes a tree, but with our \name operator, we can still characterize its generalized ensnarlment with the other lattice, which may or may not contain cycles; see Fig.~\ref{fig:lad-ps-remv} (row 3-4: right). 
We note that the unbalance score becomes $0$ after we remove $3$ edges in both lattices, i.e., the size of the cycle basis of the larger lattice. The results can also be extended for ensnarled lattices of varying sizes; see Fig.~\ref{fig:lad-ps-remv} (rows 1-2, 5-6). 
\begin{figure}[htbp]
    \centering
    \begin{tabular}{lc}
        \includegraphics[width=0.72\textwidth]{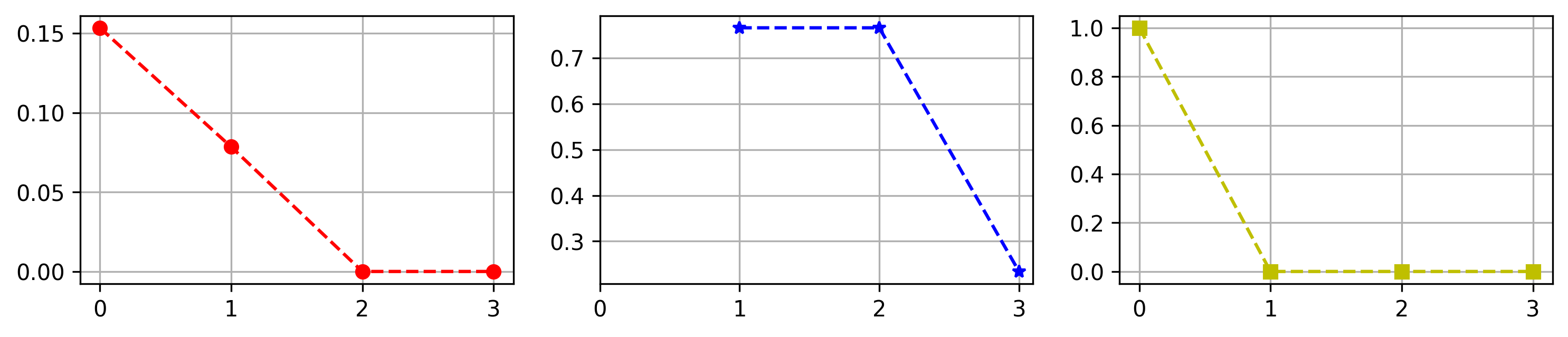} & 
        \multirow{2}{*}{\includegraphics[width=.25\textwidth]{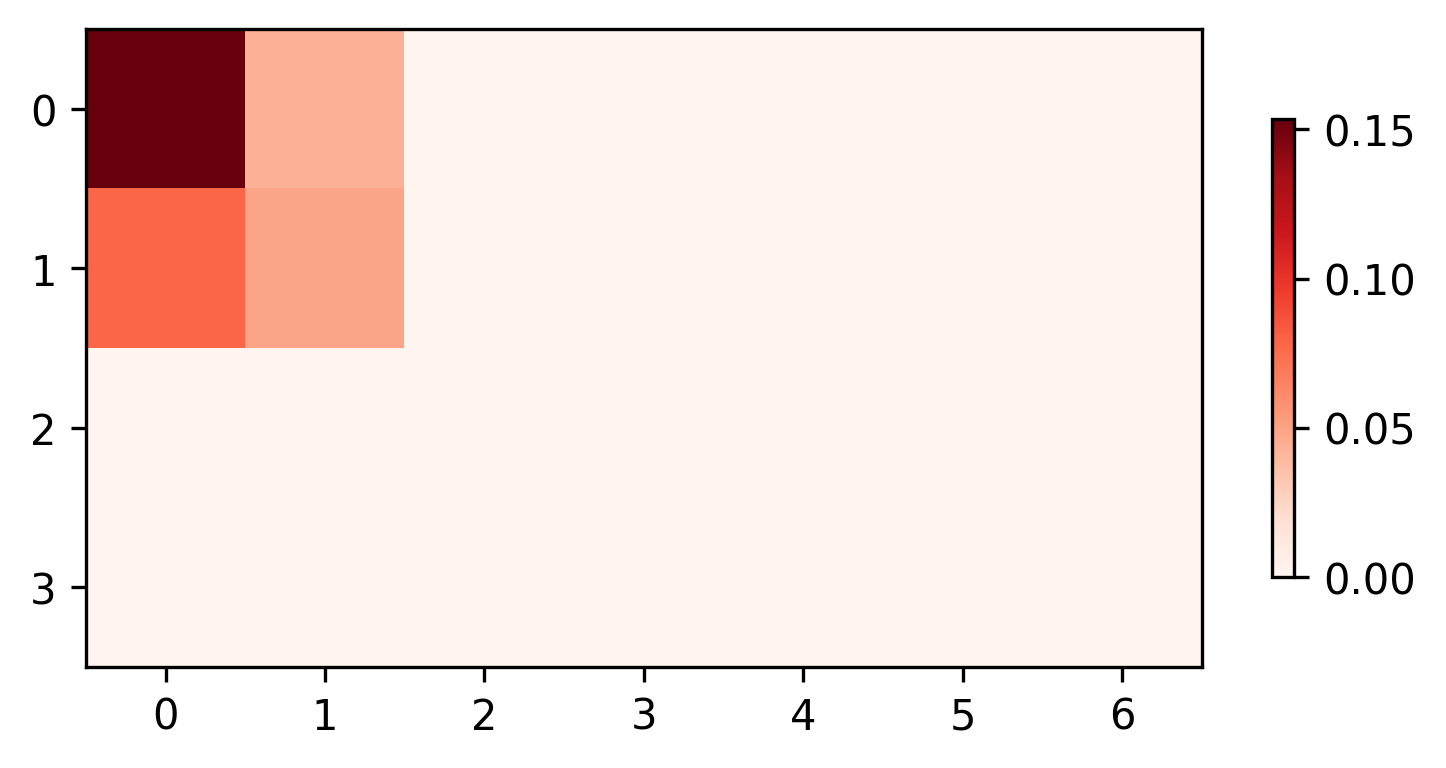}} \\
        \includegraphics[width=0.72\textwidth]{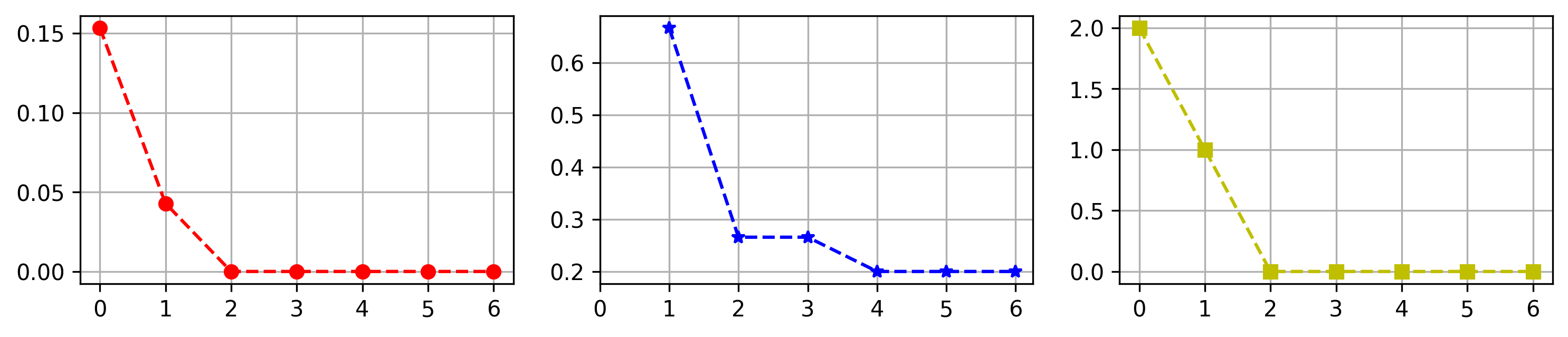} & \\
        \includegraphics[width=0.72\textwidth]{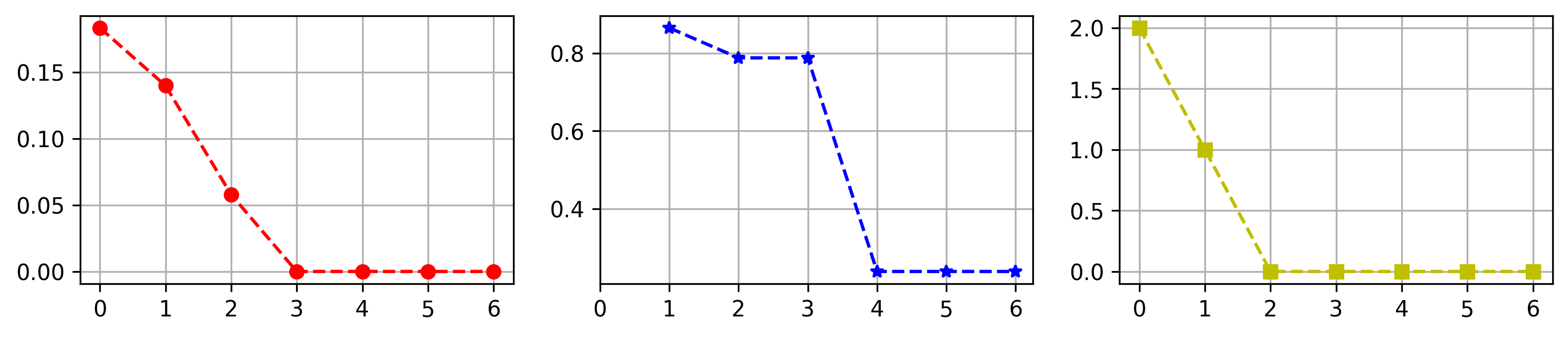} & 
        \multirow{2}{*}{\includegraphics[width=.25\textwidth]{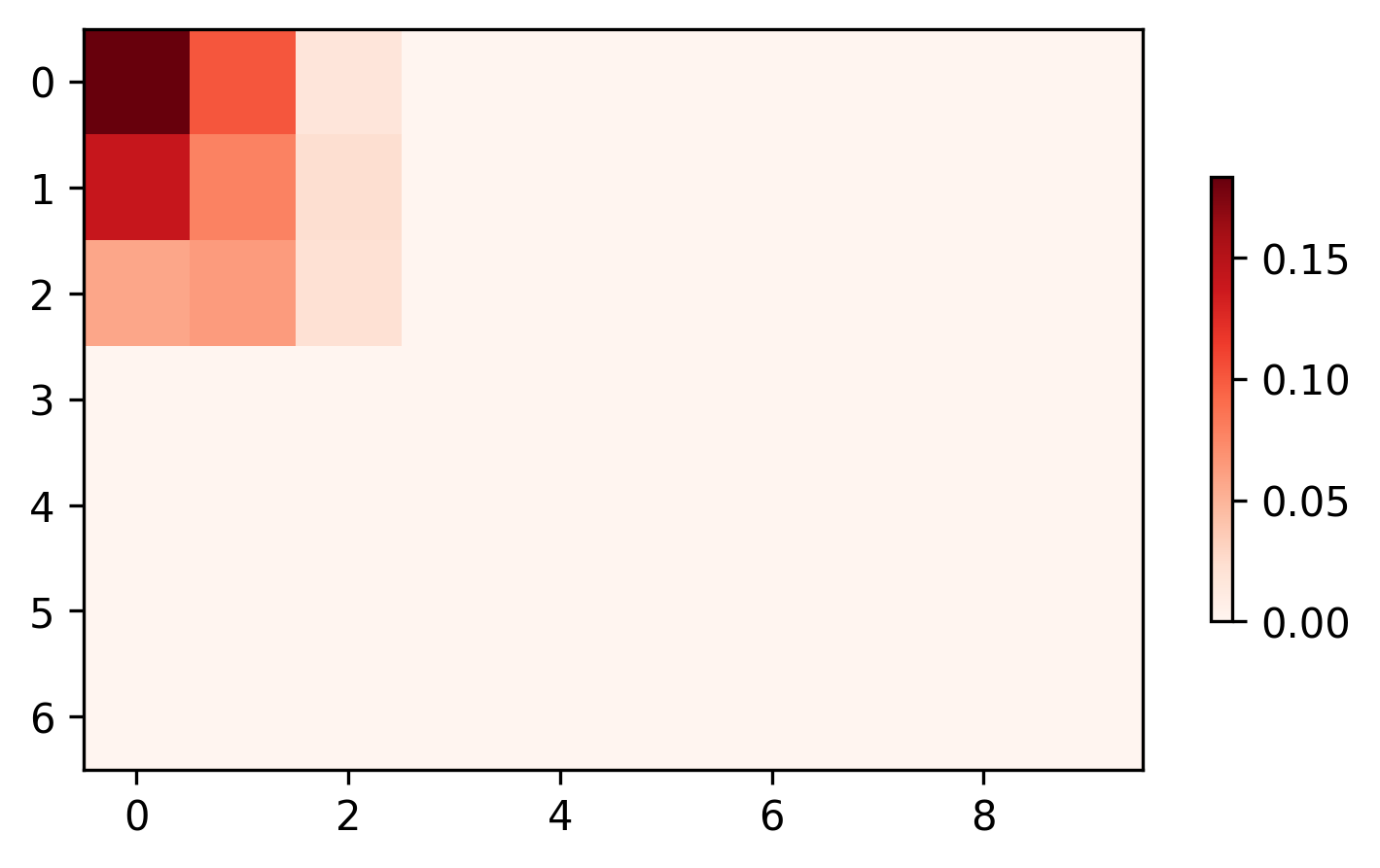}} \\
        \includegraphics[width=0.72\textwidth]{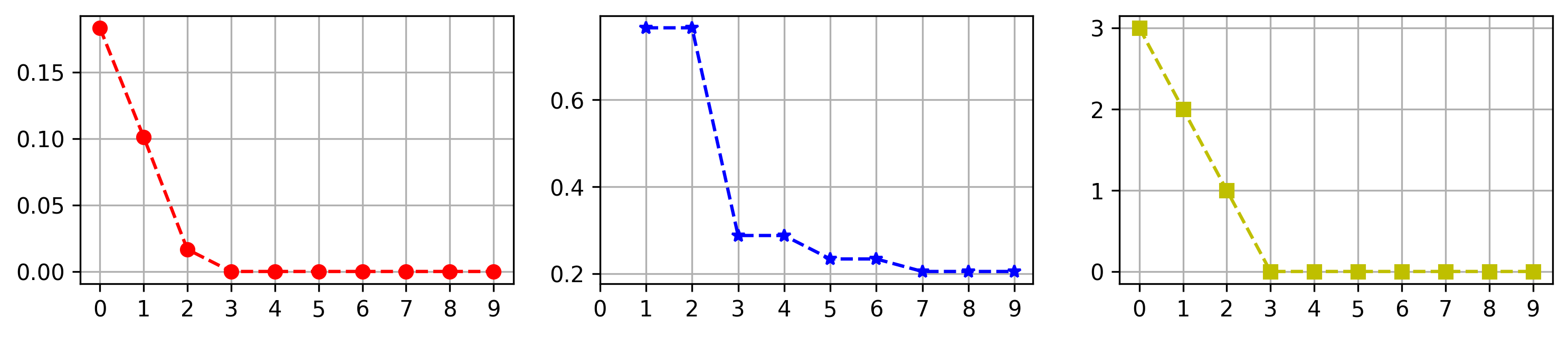} & \\
        \includegraphics[width=0.72\textwidth]{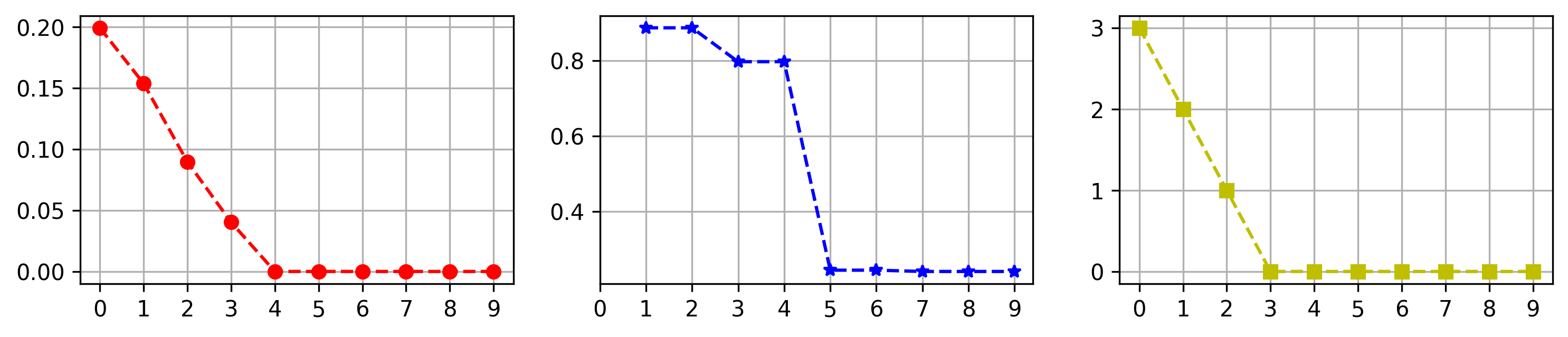} & 
        \multirow{2}{*}{\includegraphics[width=.25\textwidth]{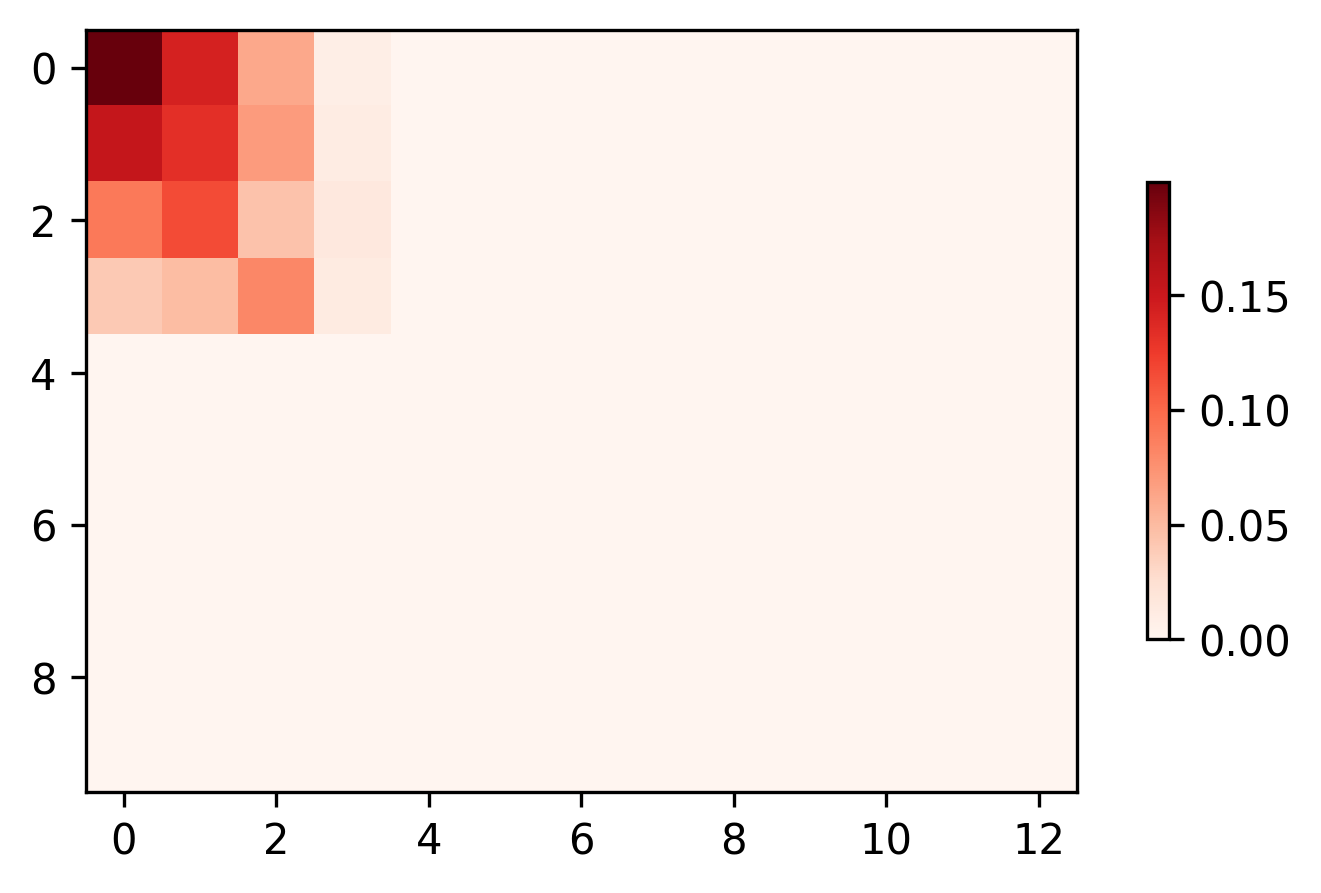}} \\
        \includegraphics[width=0.72\textwidth]{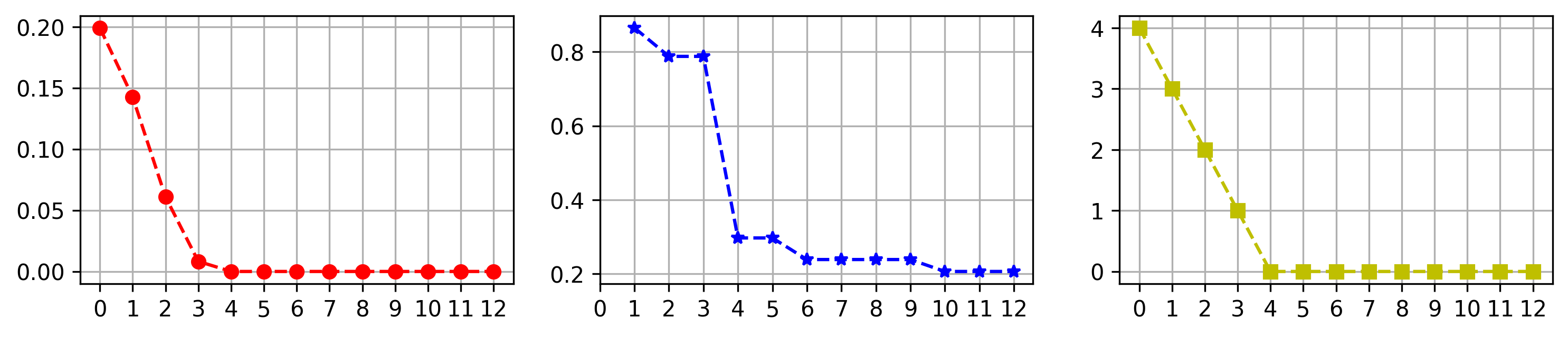} & 
    \end{tabular}
    \caption{Results from removing edges from the ensnarled ladder lattices with varying periods according to the edge linking centrality: (left) unbalance score, (middle-left) centrality of the newly removed edge, and (middle-right) the size of minimum cycle basis, where the x-axis is the number of edges removed (rows 1,3,5 from the smaller lattices; rows 2,4,6 from the larger lattices), and (right) unbalance score from removing edges both from the smaller lattice (y-axis) and the larger lattice (x-axis). (Row 1-2: period 1; Row 3-4: period 2; Row 5-6: period 3)}
    \label{fig:lad-ps-remv}
\end{figure}

The results from removing edges from one of the step lattices of period $1$ and period $3$ are shown in Fig.~\ref{fig:step-ps-remv} (top) and (bottom), respectively. 
\begin{figure}[htbp]
    \centering
    \begin{tabular}{c}
       \includegraphics[width=0.8\textwidth]{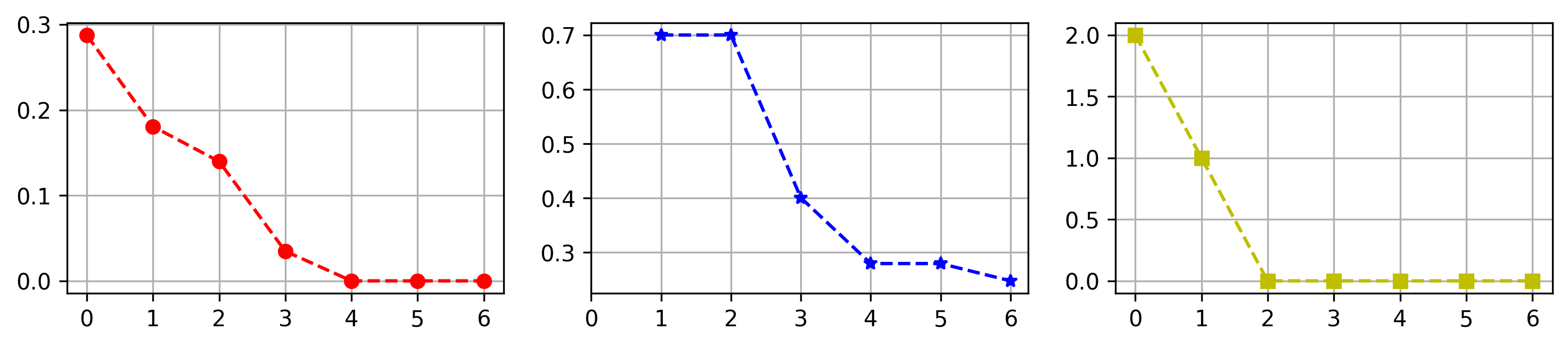} \\
       \includegraphics[width=0.8\textwidth]{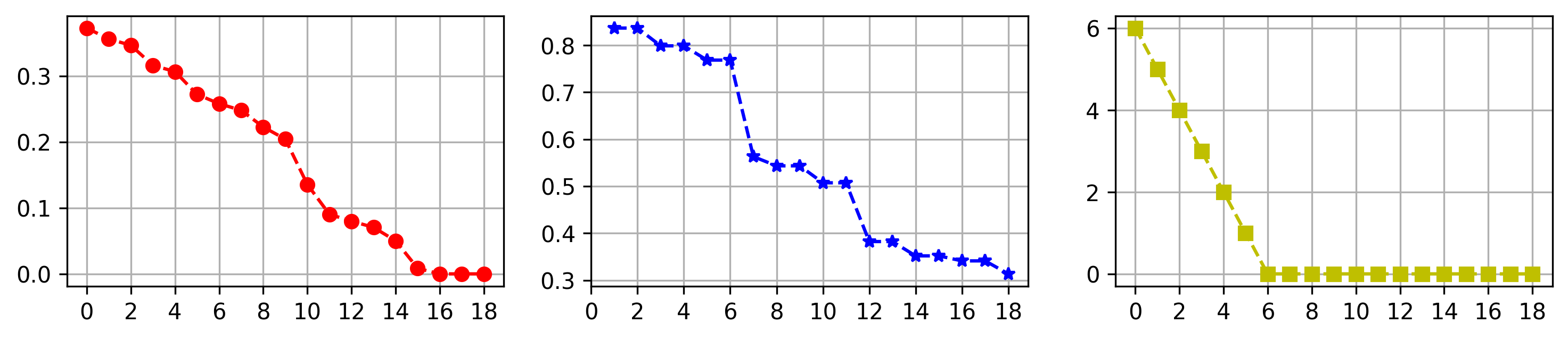}
    \end{tabular}
    \caption{Results from removing edges from the one of the step lattices in the ensnarled step lattice with periods 1 (top) and 3 (bottom) according to the edge linking centrality (cf. Fig.~\ref{fig:eg-step-p1}): unbalance score (left), centrality of the newly removed edge (middle), and the size of minimum cycle basis (right), where the x-axis is the number of edges removed.}
    \label{fig:step-ps-remv}
\end{figure}

\subsection{Brain vasculature analysis}
\label{app:brain_vesculature_analysis}

\paragraph{Null models.}
In our implementations, we impose the hard constraint that the sampled networks have $m$ edges. 
Specifically, for the SER model, we uniformly randomly select a pair of nodes to place an edge until we have $m$ edges.
For the SRG model, we randomly select more than $10m$ but fewer than $1/5\binom{n}{2}$ pairs of nodes from all possible combinations, and then select the top $m$ pairs with the smallest distances between the nodes.
With the networks from the null model, we randomly shuffle the actual radius of the mouse brain vasculature and assign them to the edges in the null model. 
This is motivated by the observation that there is no clear trend and relationship between the distance of the blood vessel and the radius; see the scatter plots in Fig.~\ref{fig:vessel-all-dist-radius}, where the Pearson correlation is consistently small. 
This completes the preparation for a random sample from a null model for our analysis.
\add{All null-model estimates are computed from $10$ independent realizations.}

\begin{figure}[htbp]
    \centering
    \hspace*{-1em}
    \begin{tabular}{ccc}        \includegraphics[width=.33\textwidth]{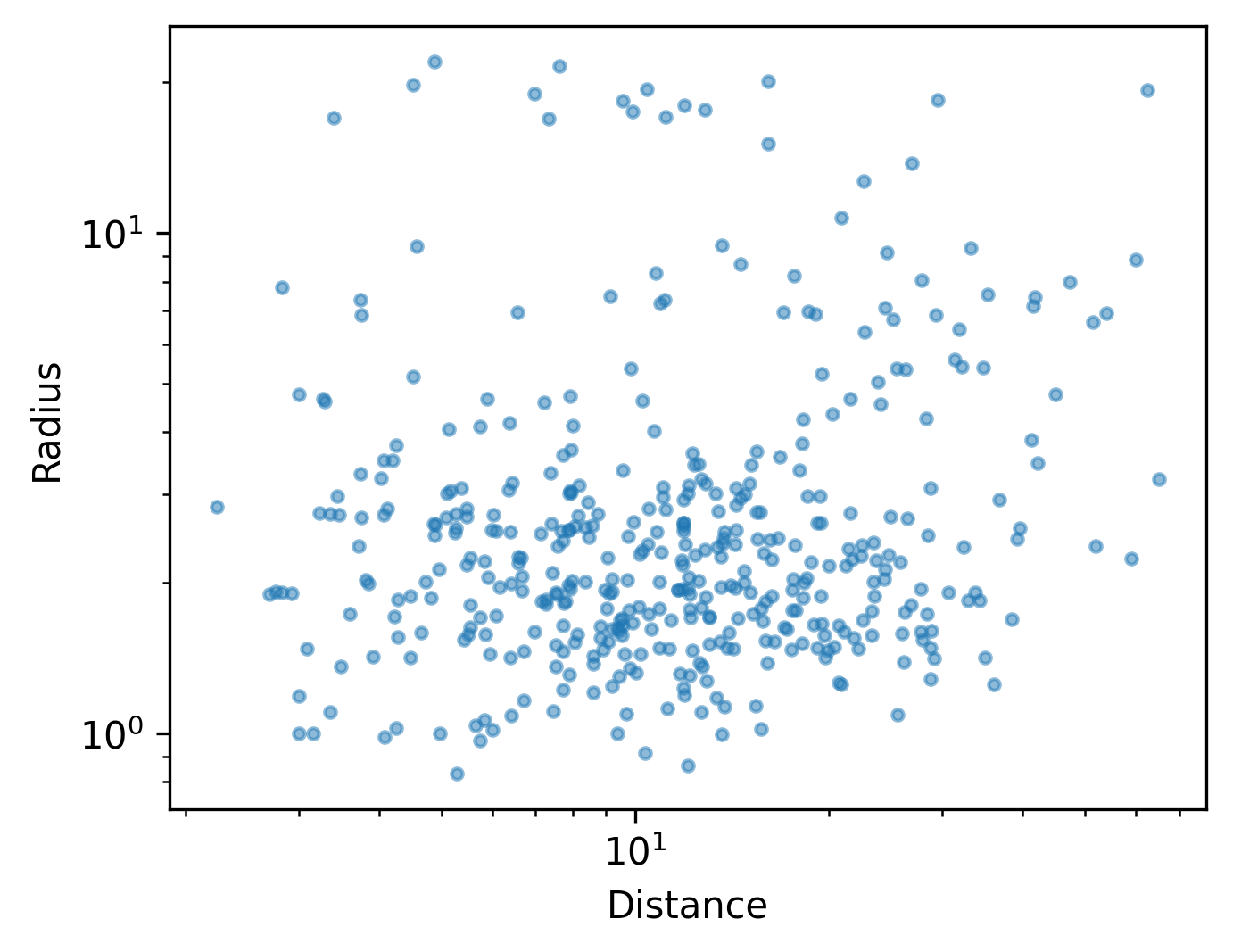} & \includegraphics[width=.33\textwidth]{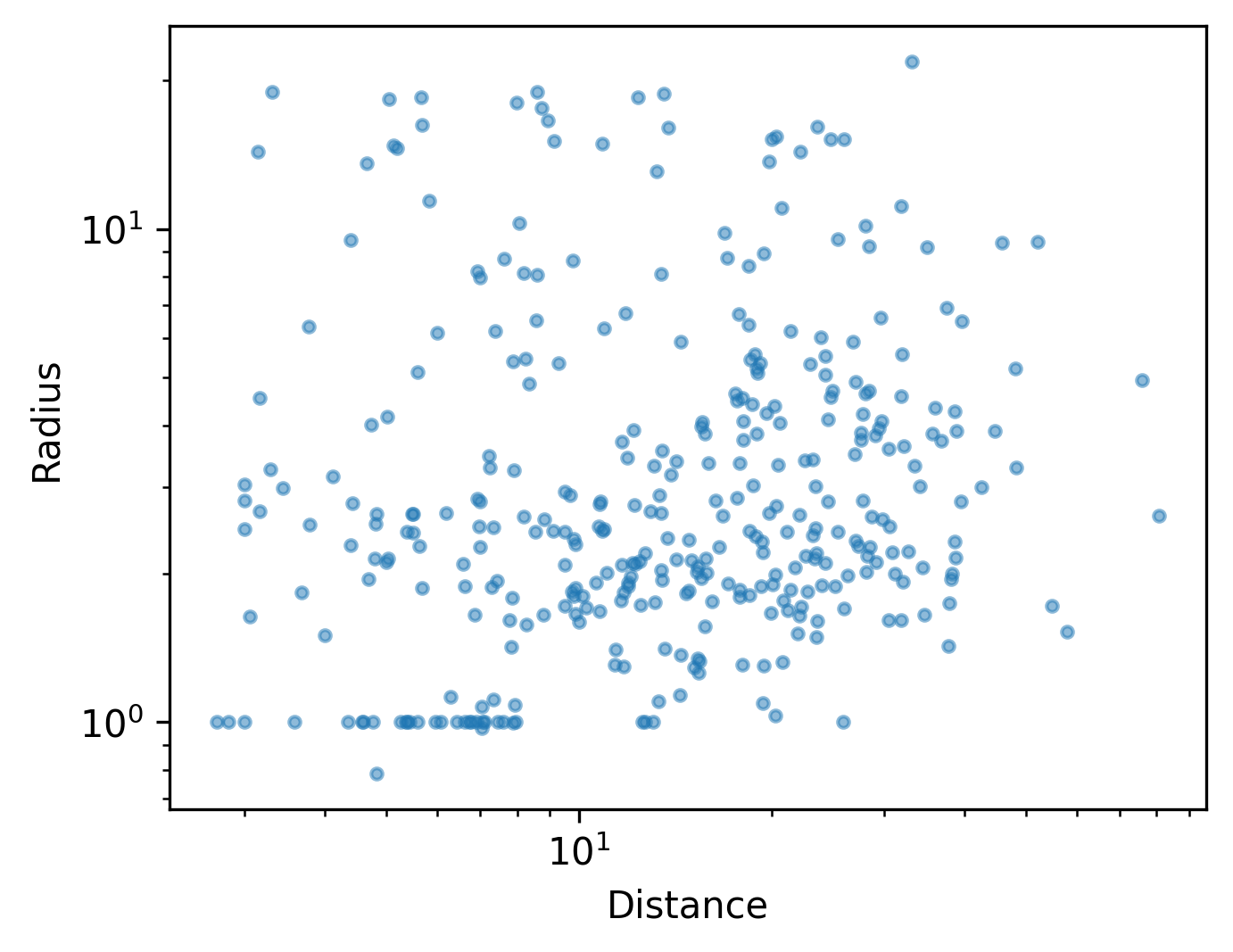} & \includegraphics[width=.33\textwidth]{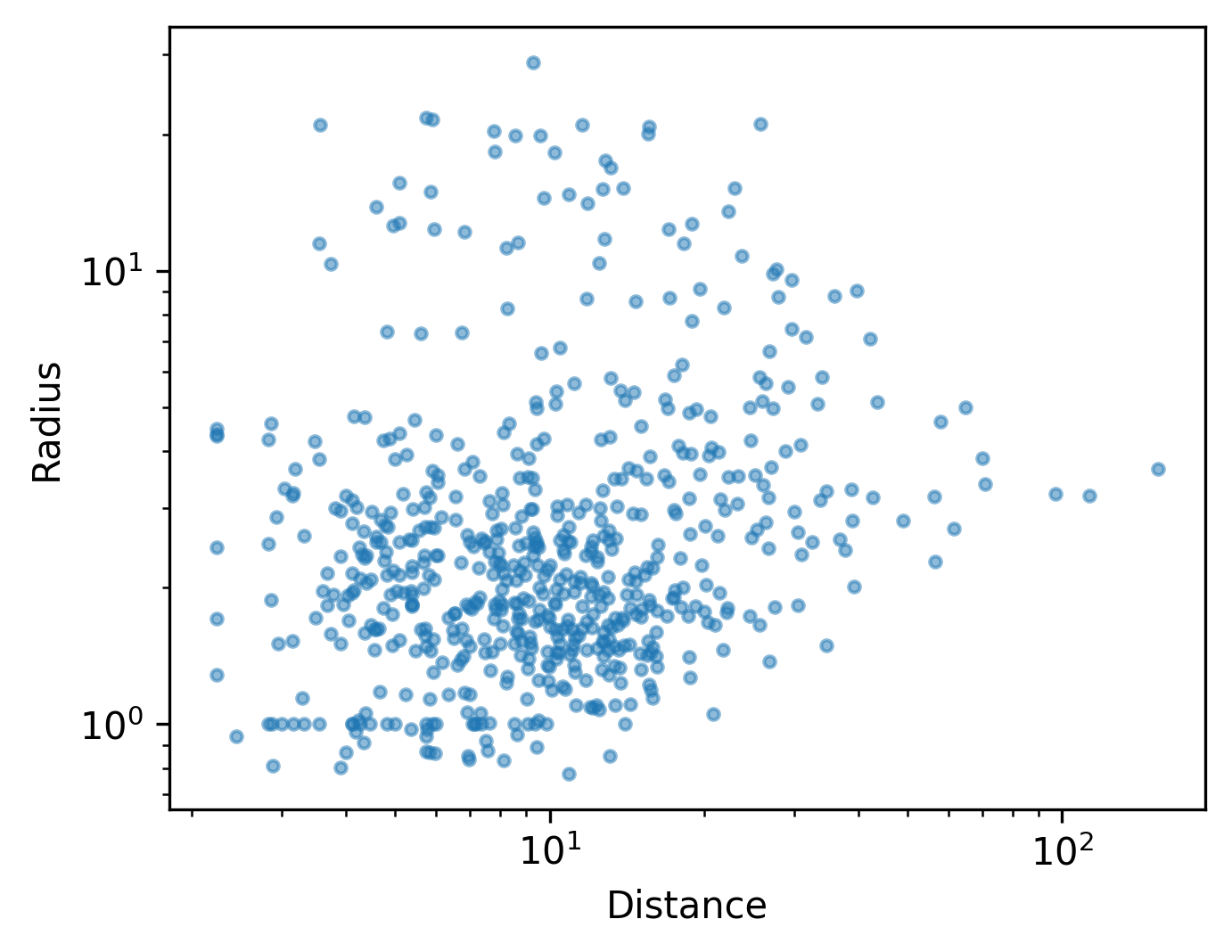}
    \end{tabular}
    \caption{Scatter plots of the distance between two endpoints of the edges and the radius of the vessels (Zones I, II, II from left to right, with Pearson correlation coefficients $0.139$, $0.014$, and $0.085$, respectively).}
    \label{fig:vessel-all-dist-radius}
\end{figure}

\begin{figure}[htbp]
    \centering
    \begin{tabular}{ccc}
        \includegraphics[width=.3\textwidth]{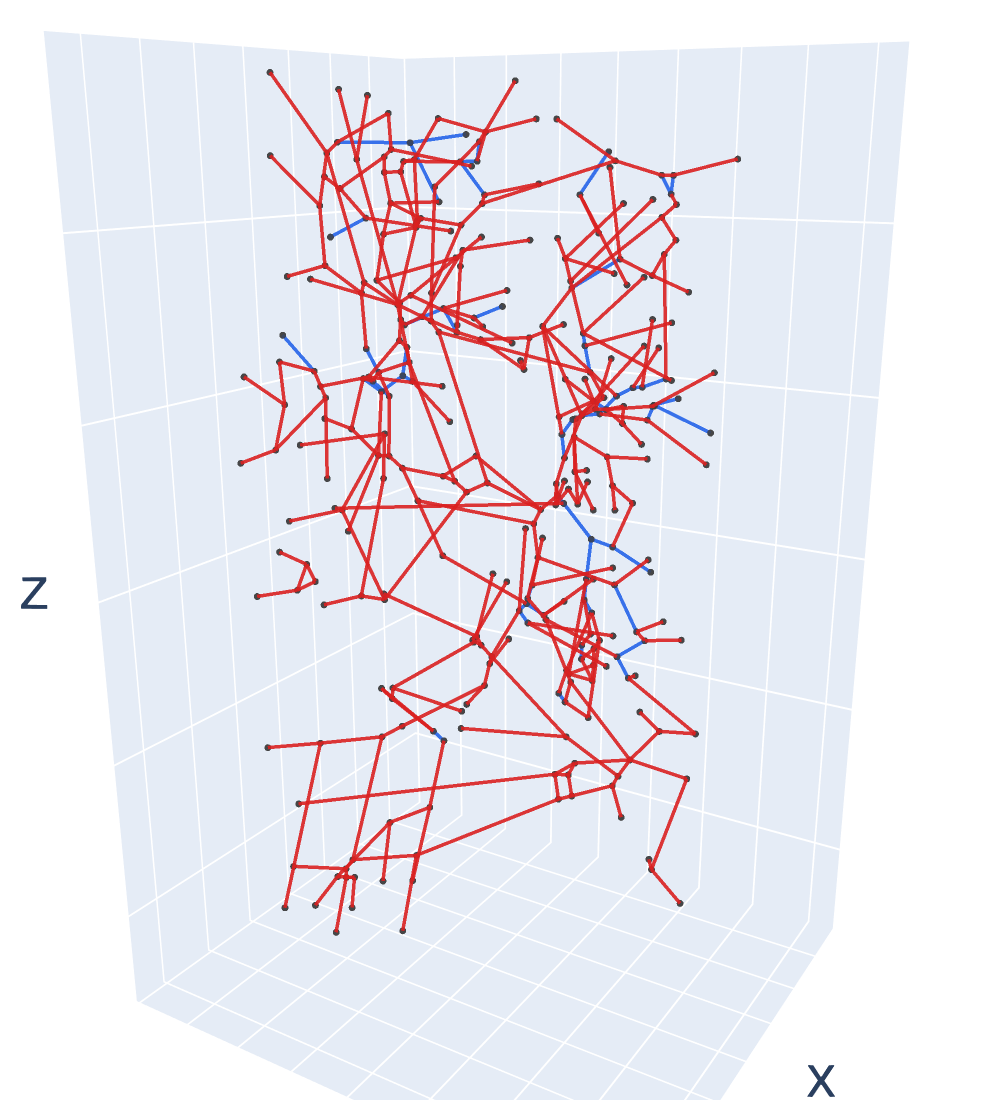} & \includegraphics[width=.3\textwidth]{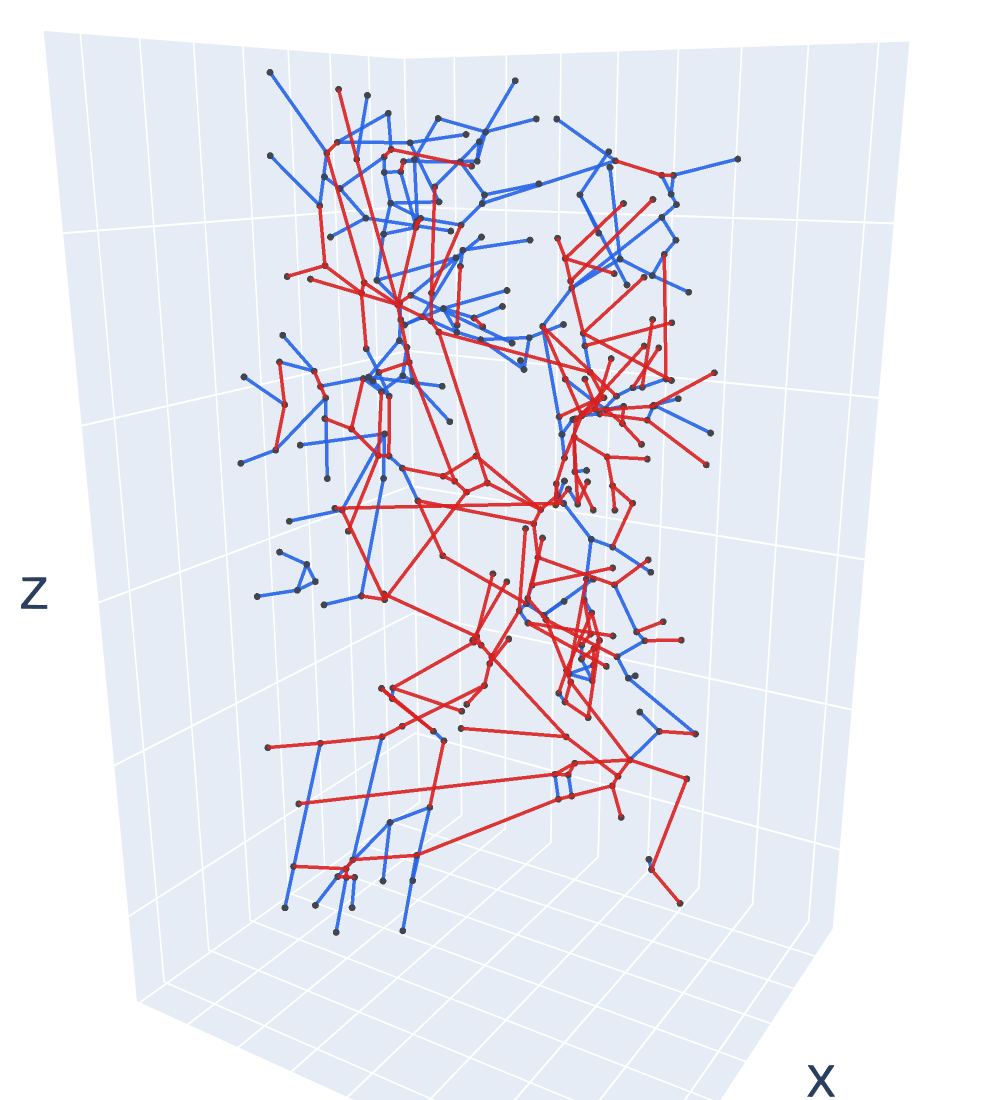} & \includegraphics[width=.3\textwidth]{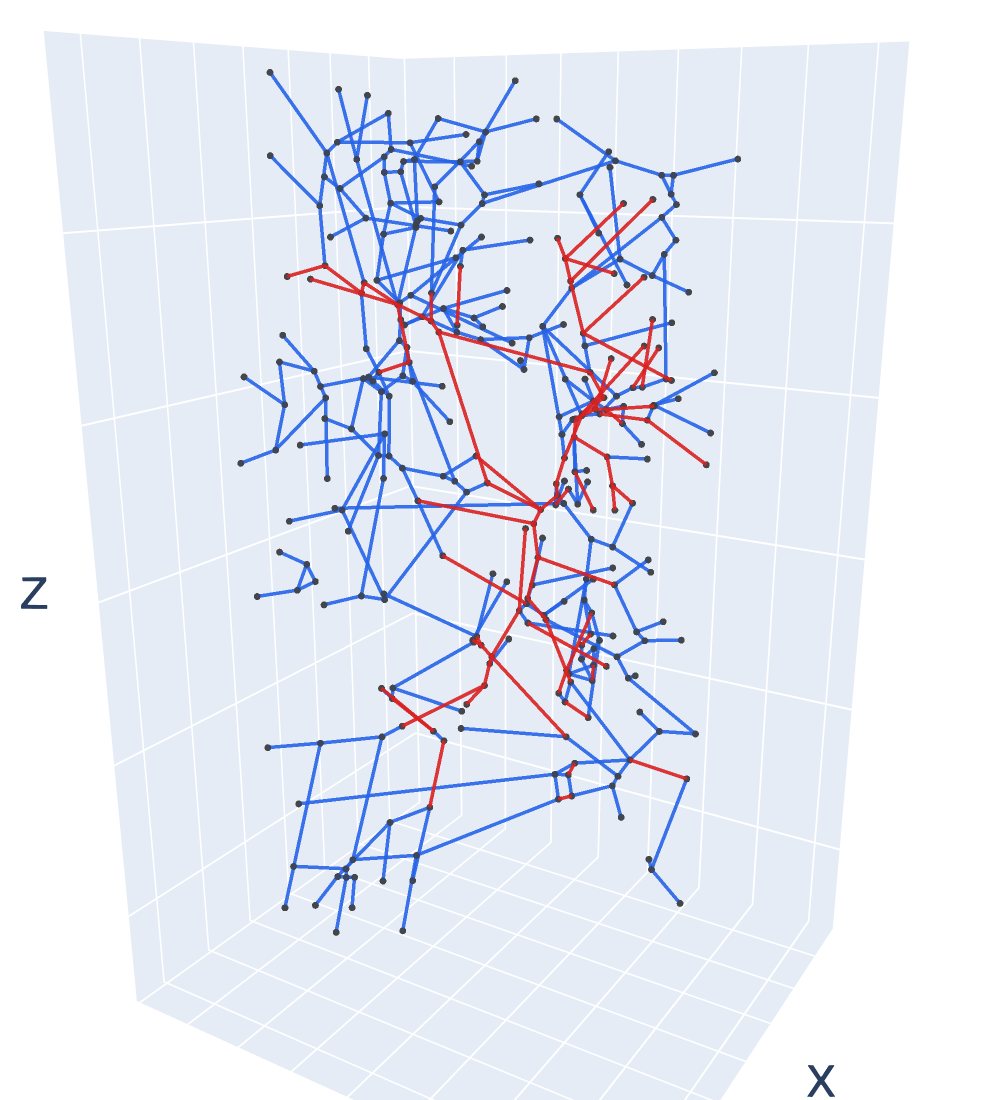} \\
        \includegraphics[width=.3\textwidth]{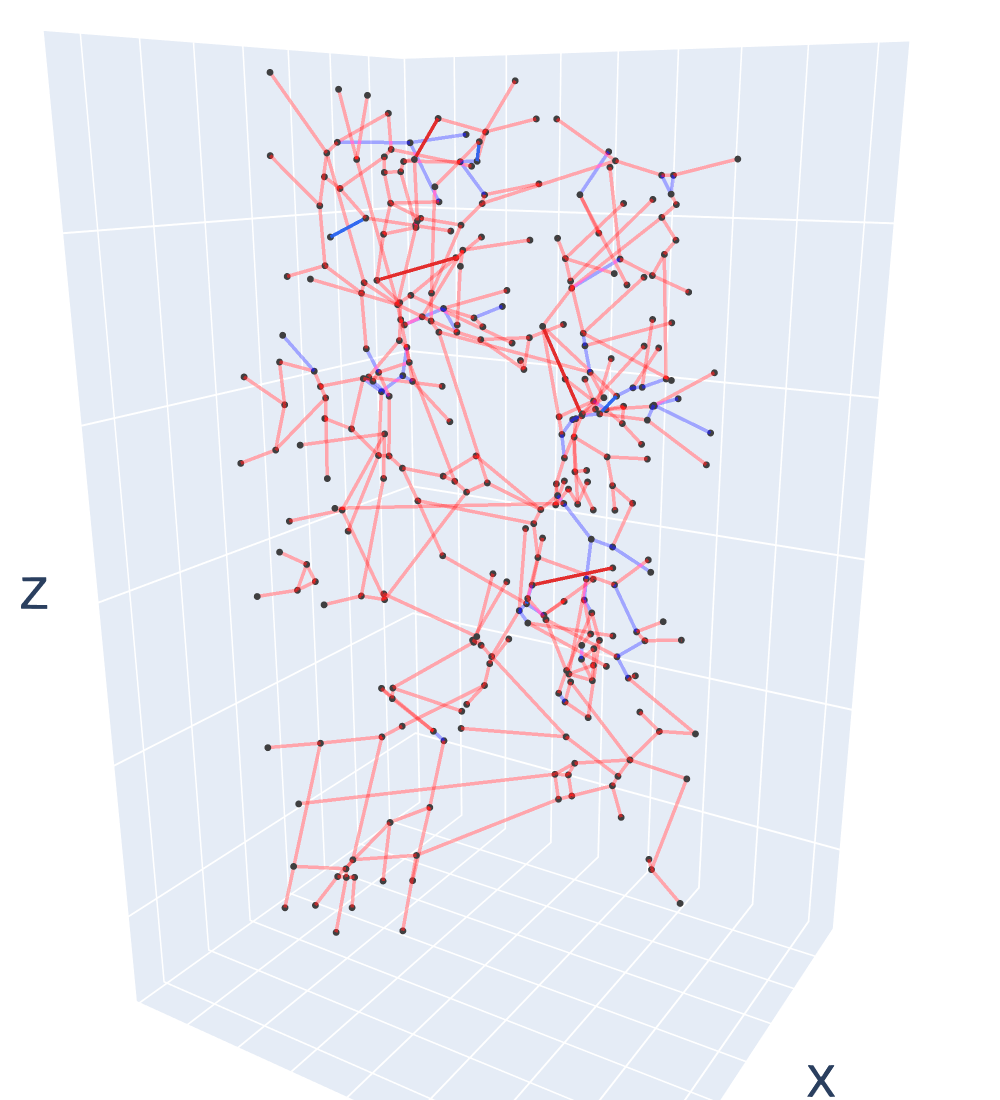} & \includegraphics[width=.3\textwidth]{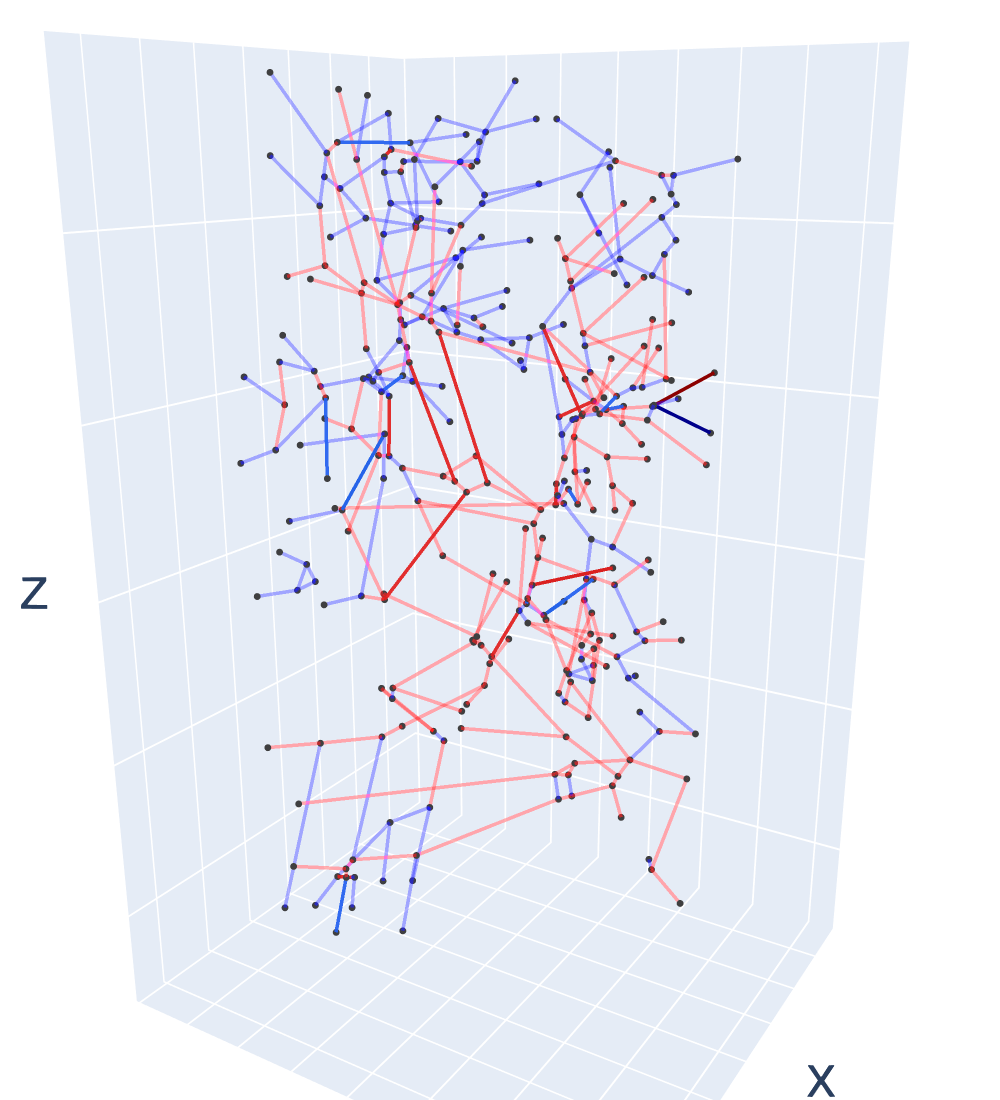} & \includegraphics[width=.3\textwidth]{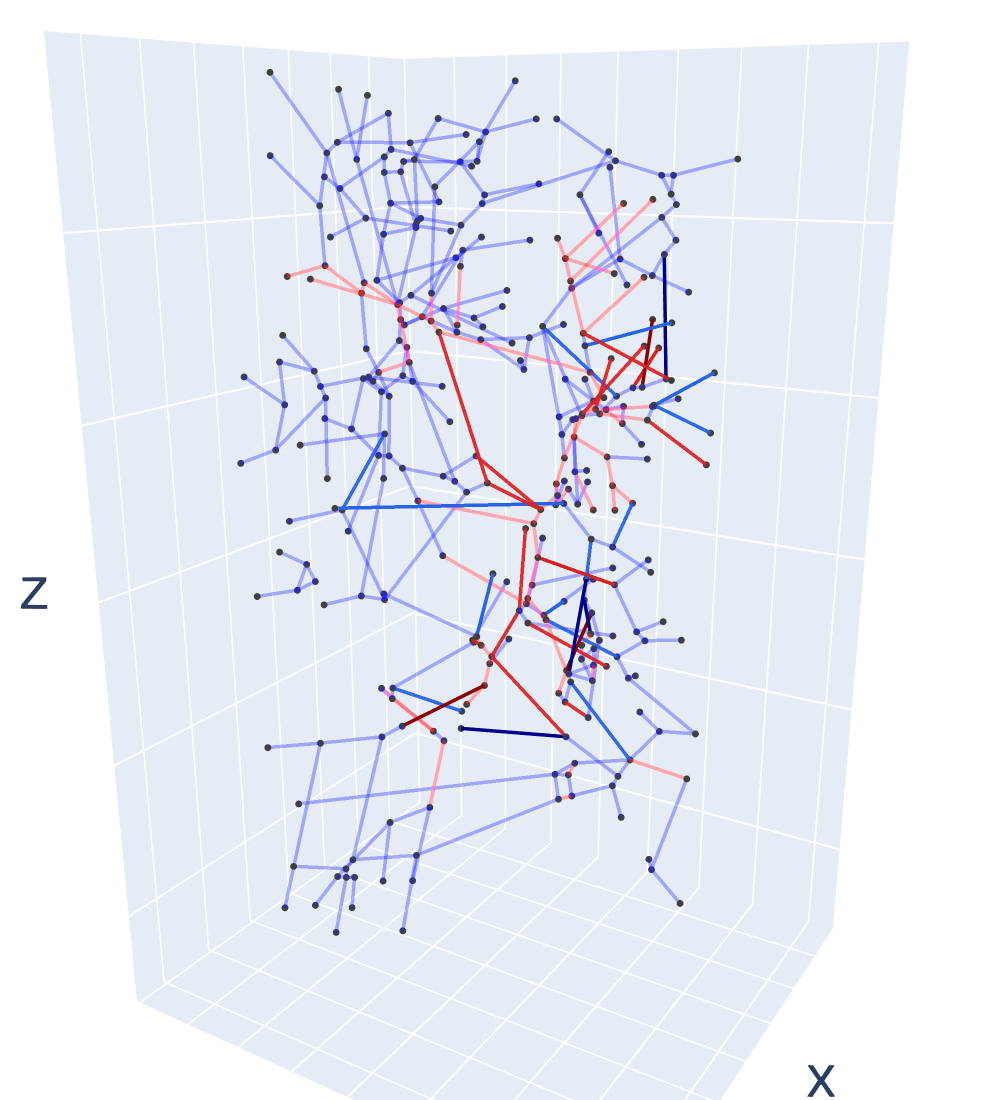} \\
        \includegraphics[width=.3\textwidth]{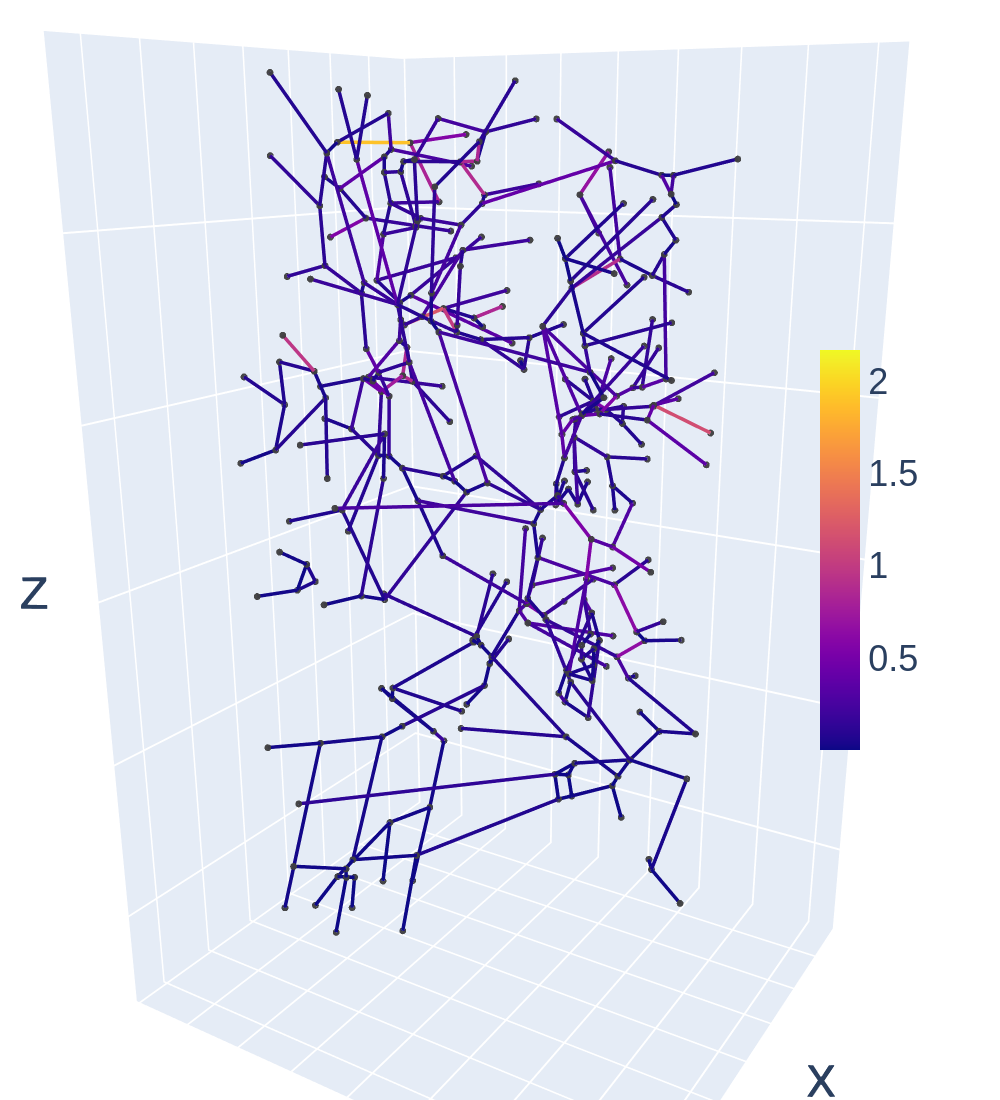} & \includegraphics[width=.3\textwidth]{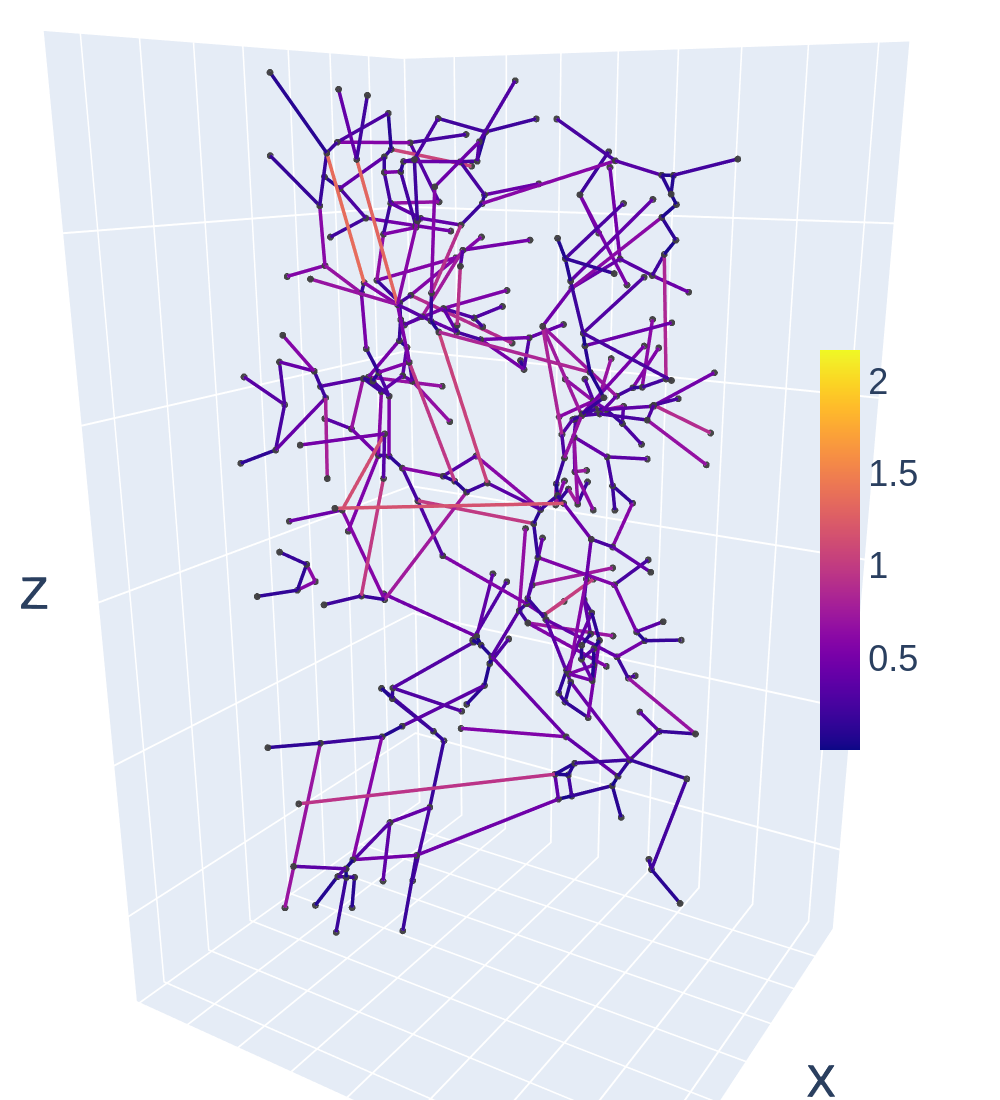} & \includegraphics[width=.3\textwidth]{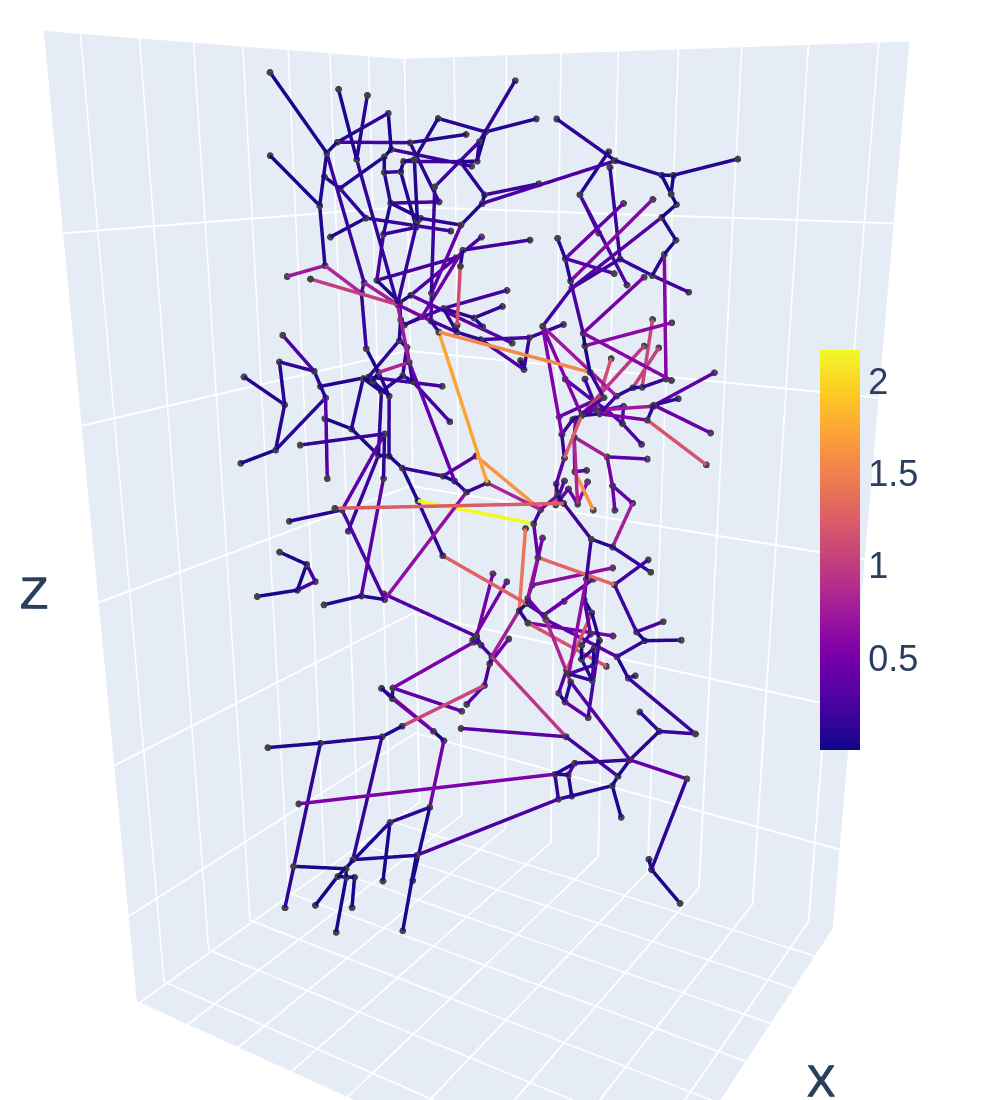}
    \end{tabular}
    \caption{Results of zone II from a heterogeneous region including the mouse midbrain: (top) the spatial networks of the blood vessels, (middle row) the strong negative linking between the edges after an optimal orientation has been assigned (dark red/blue if incomplete linking $<-0.1$, slightly lighter red/blue if $<-0.05$, and the lightest otherwise), and (bottom) the linking centrality of the edges; the edges are separated by three different values for the threshold $c$ on the radius, where (left) $c=1.5\,\mu m$, (middle column) $c=2\,\mu m$ and (right) $c=5\,\mu m$. \add{The displayed box spans $x\in [980, 1130]$, $y\in [2760, 2930]$ and $z\in [920,1200]$.}}
    \label{fig:vessel-zone2} 
\end{figure}

\begin{figure}[htbp]
    \centering
    \begin{tabular}{ccc}
        \includegraphics[width=.33\textwidth]{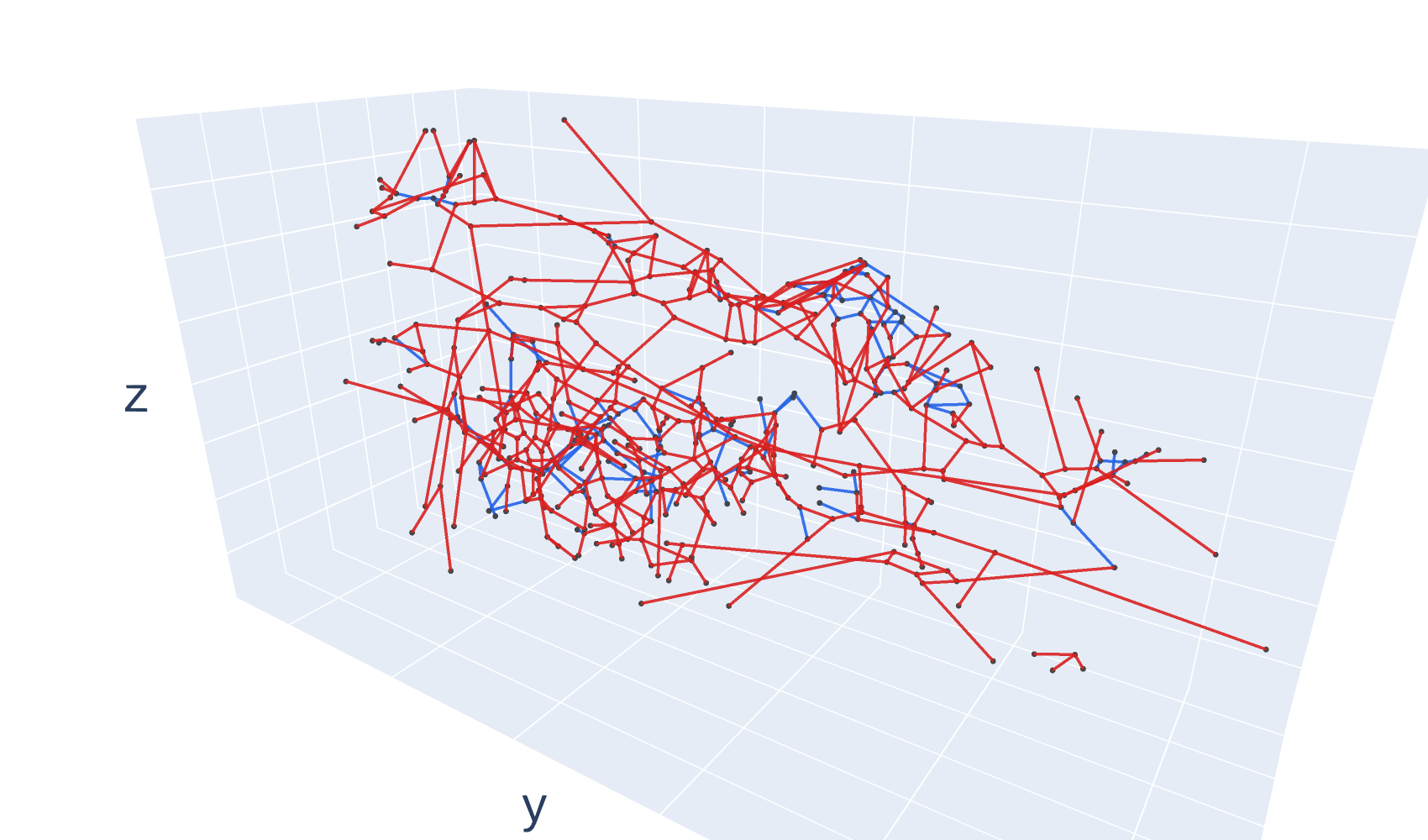} & \includegraphics[width=.33\textwidth]{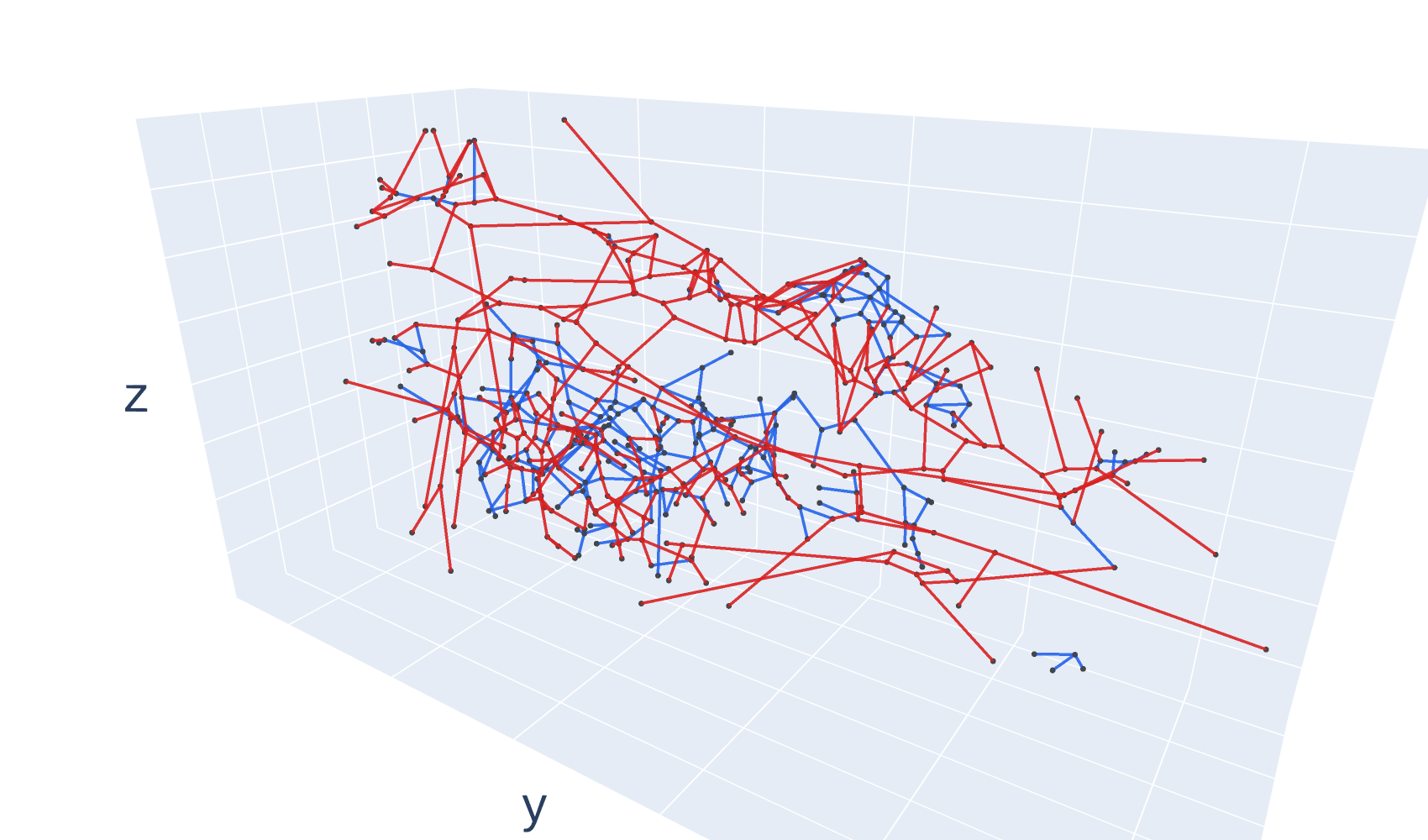} & \includegraphics[width=.33\textwidth]{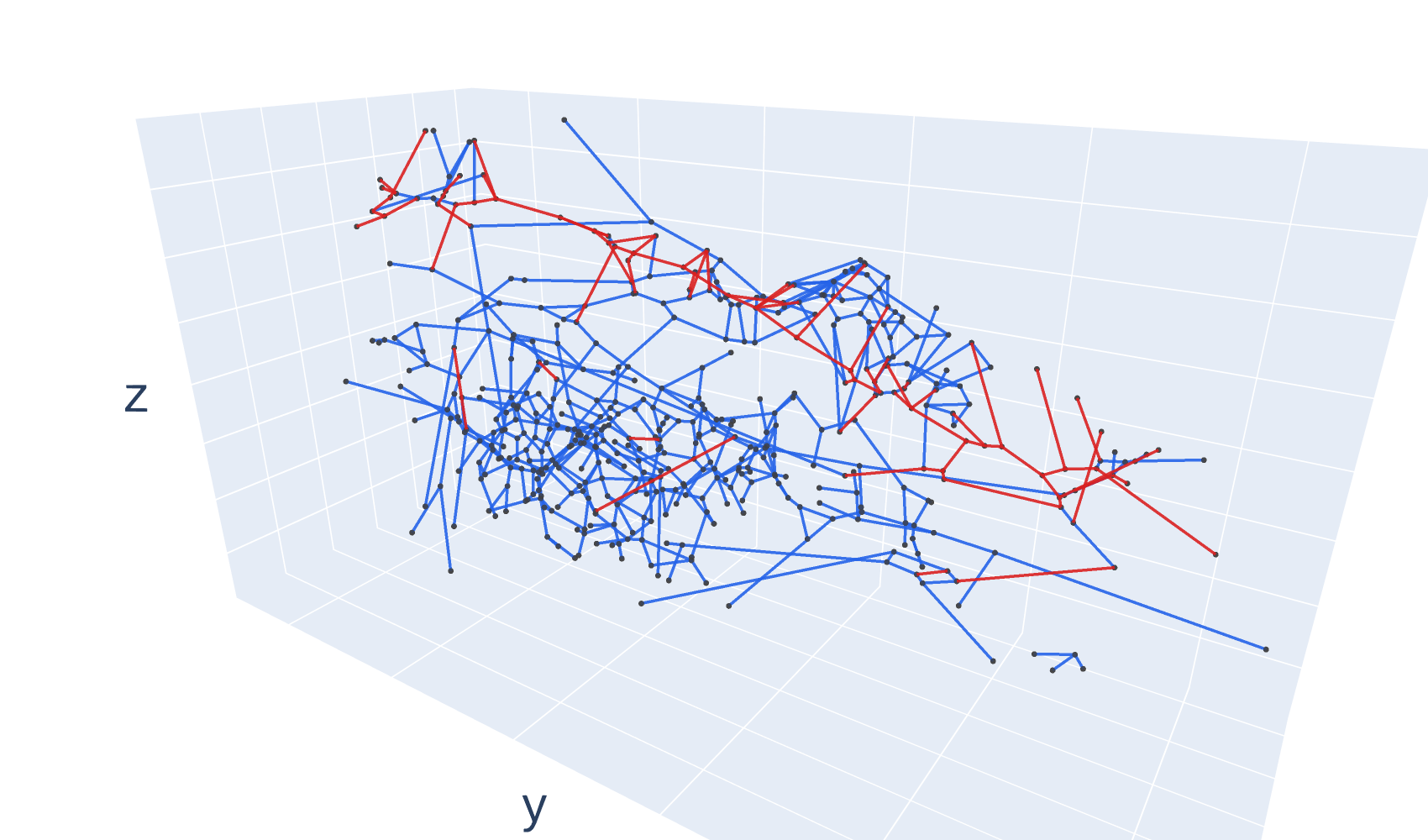} \\
        \includegraphics[width=.33\textwidth]{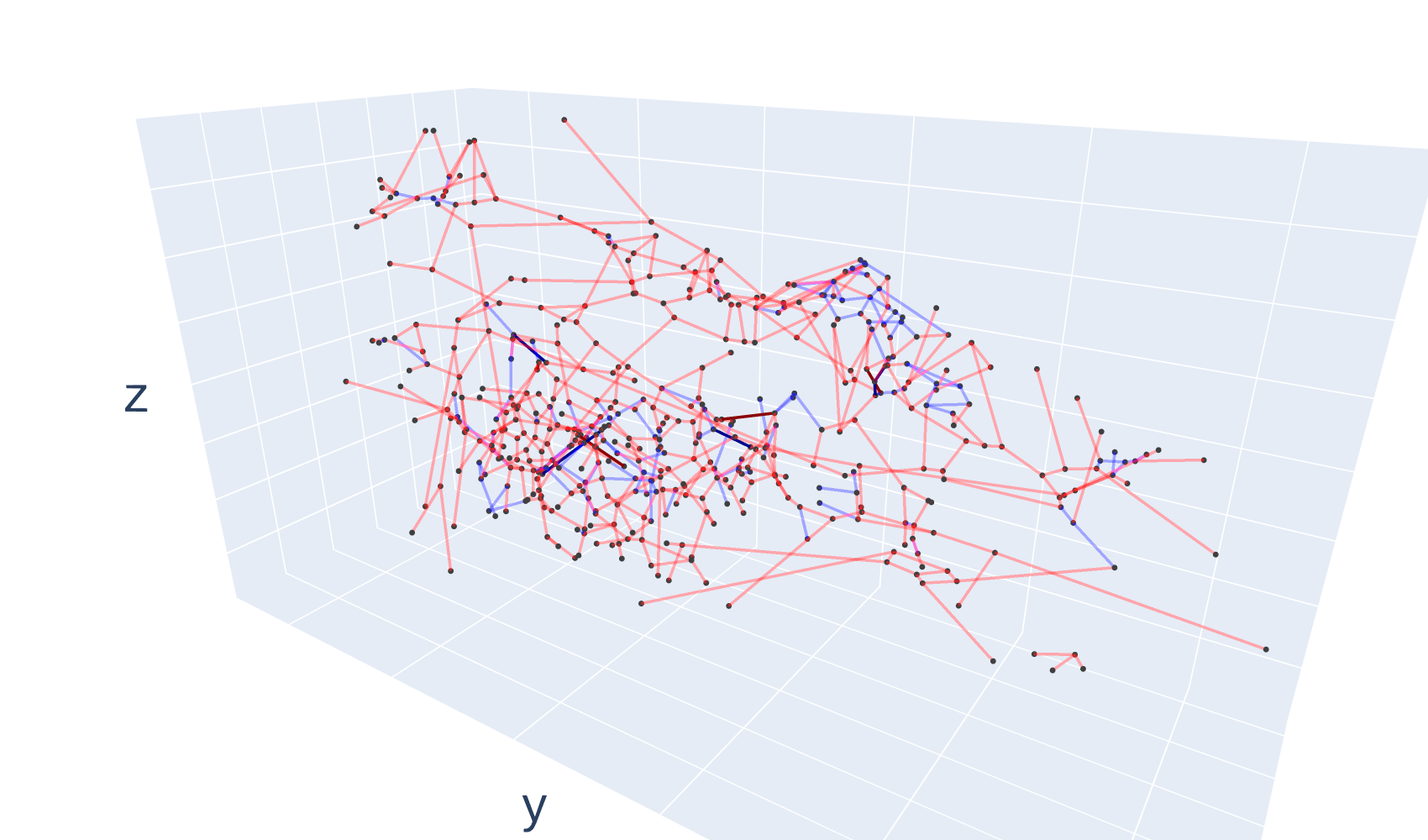} & \includegraphics[width=.33\textwidth]{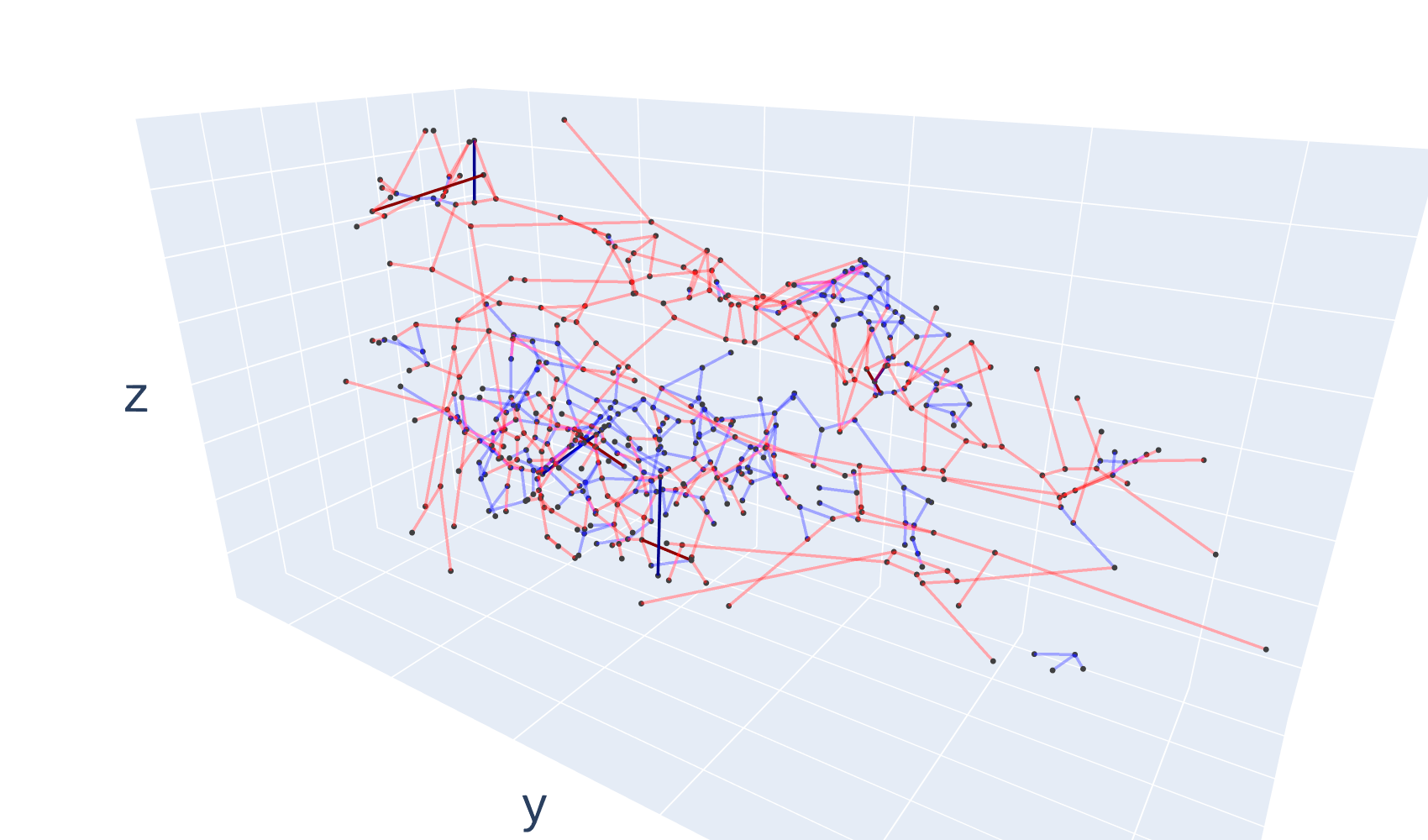} & \includegraphics[width=.33\textwidth]{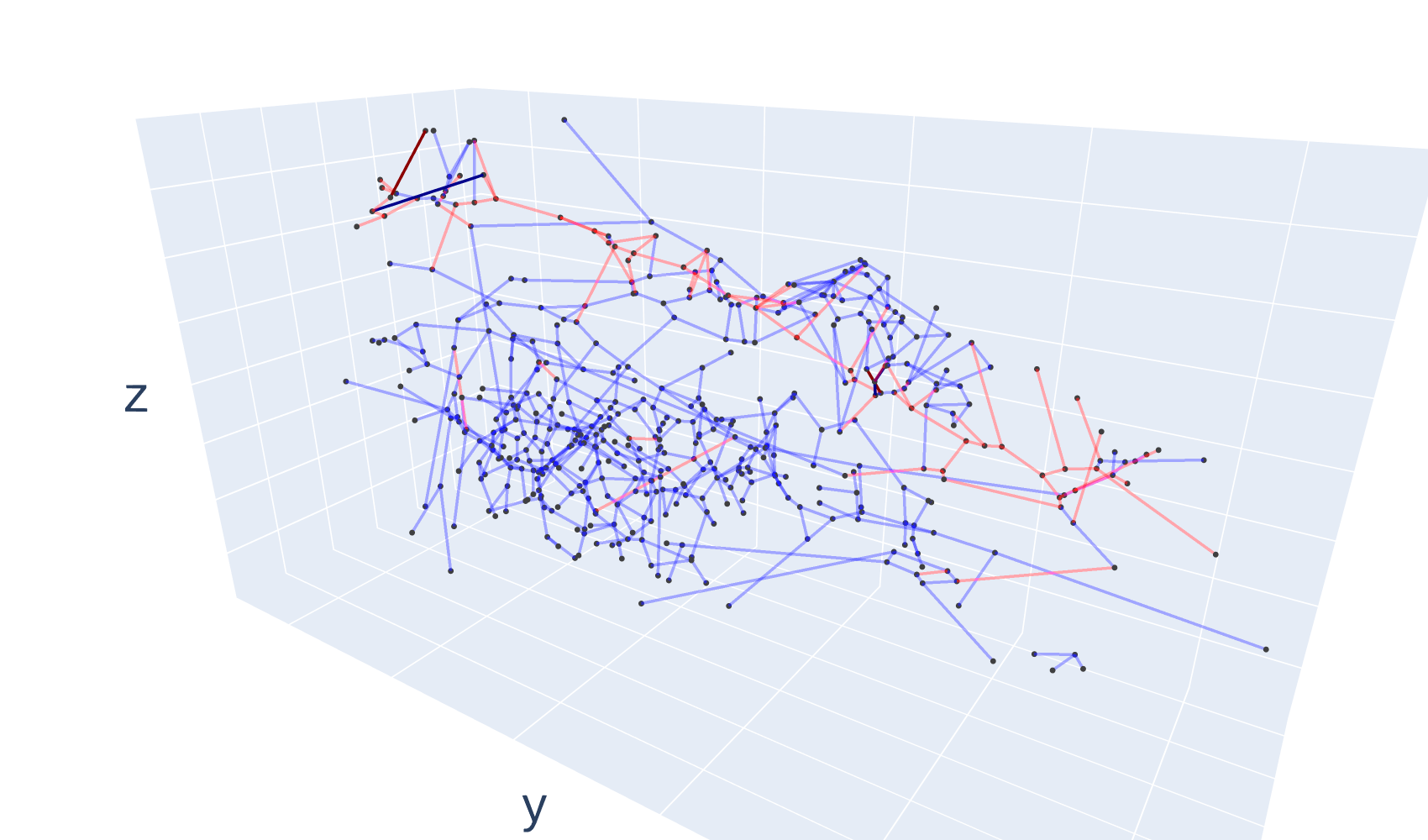} \\
        \includegraphics[width=.33\textwidth]{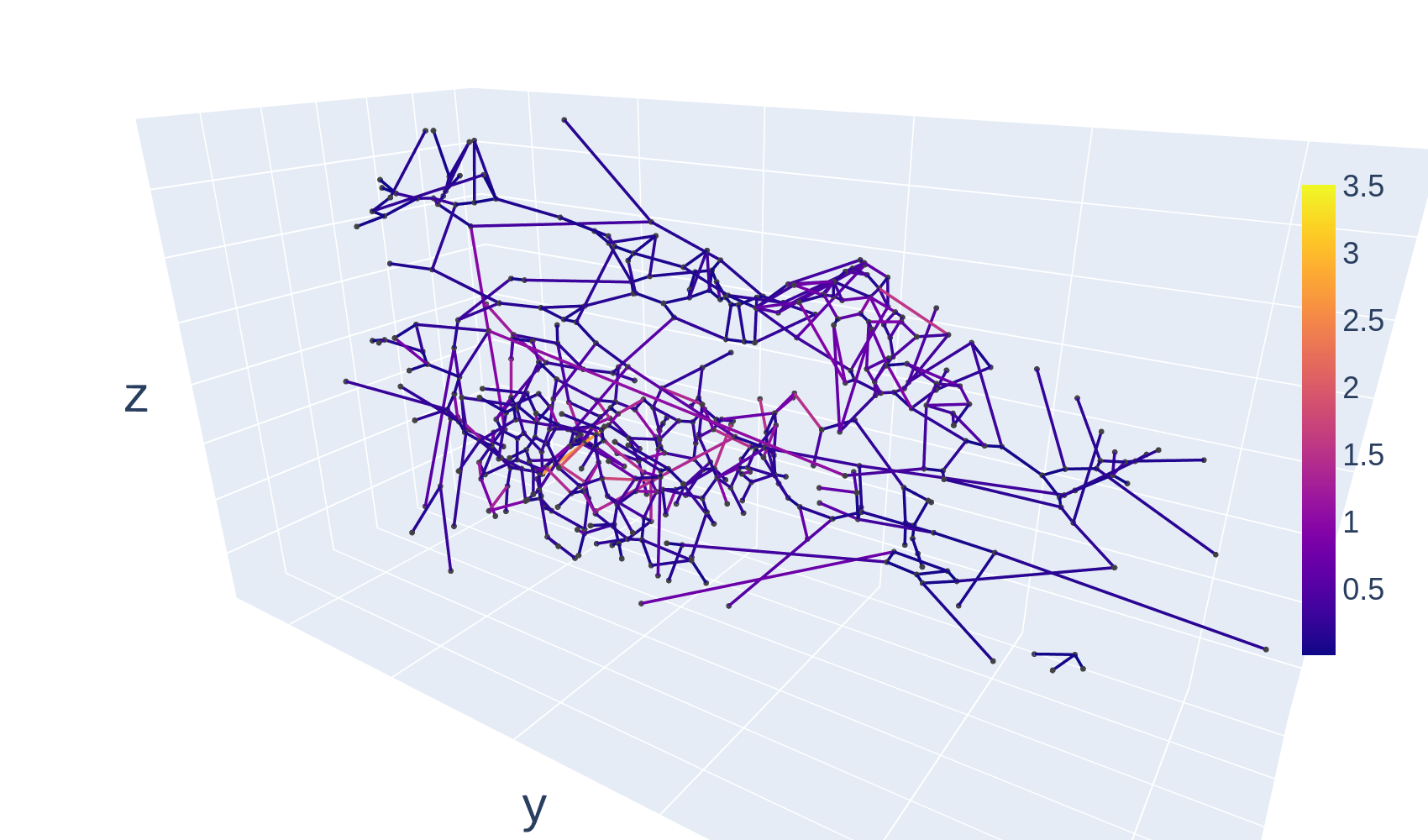} & \includegraphics[width=.33\textwidth]{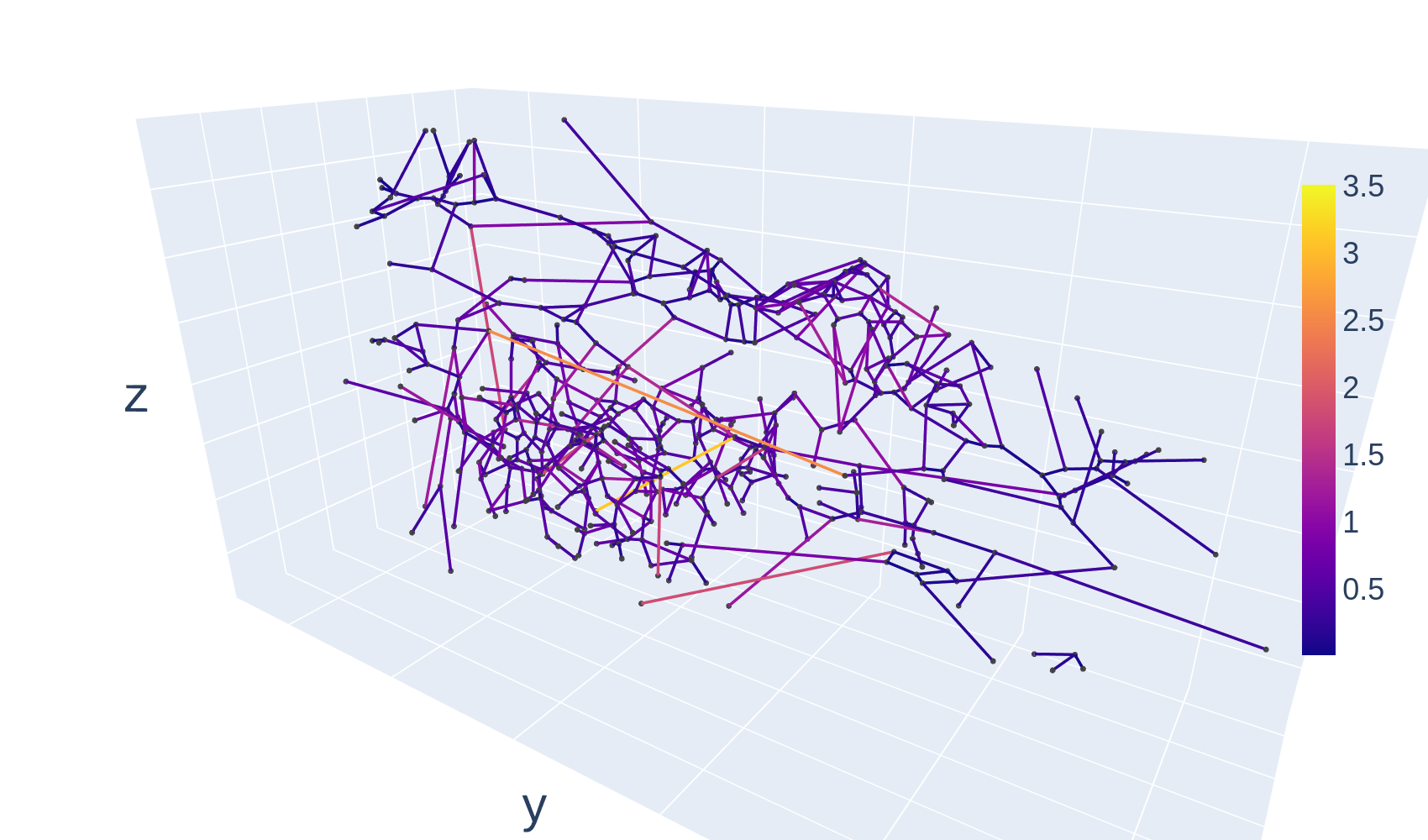} & \includegraphics[width=.33\textwidth]{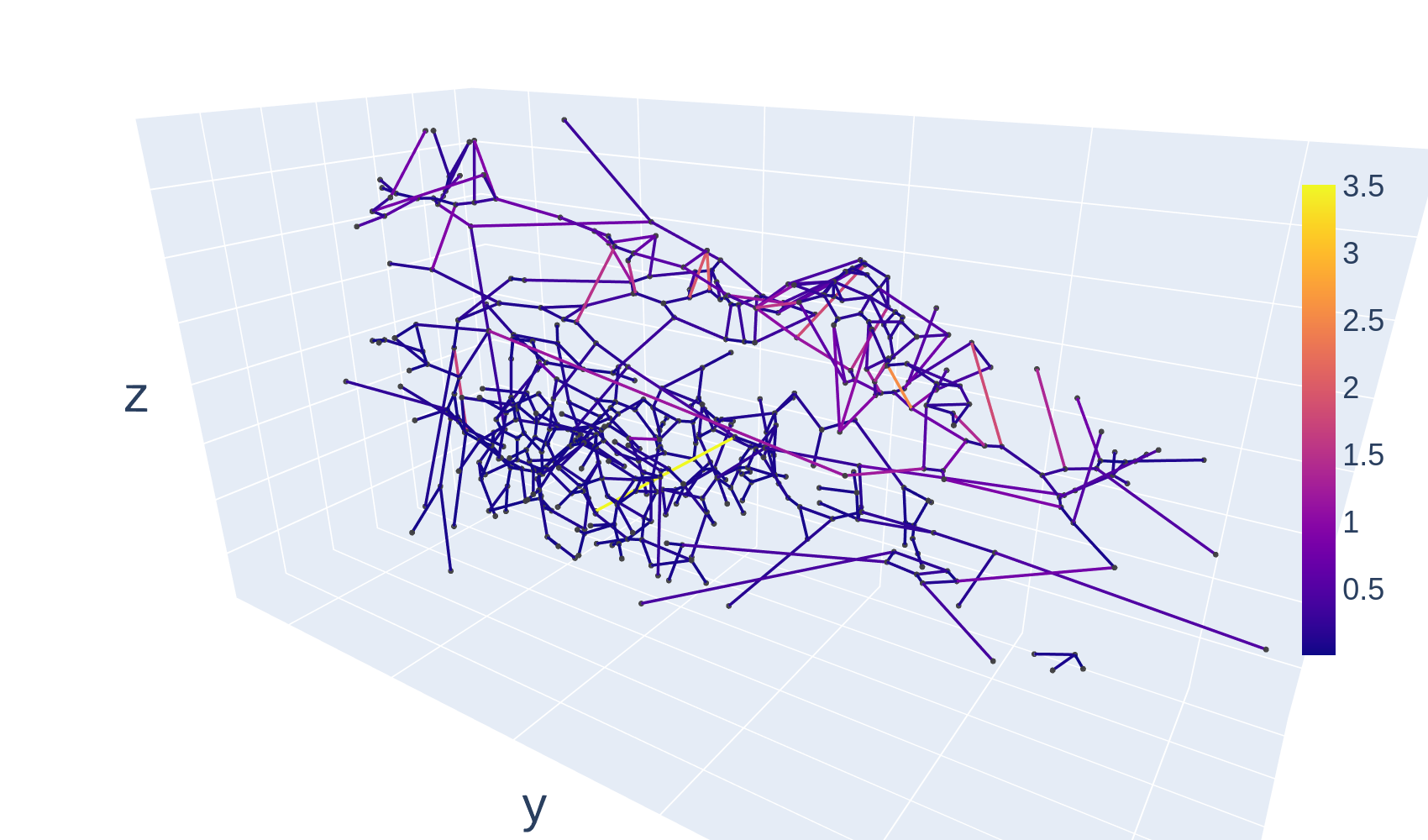}
    \end{tabular}
    \caption{Results from the top right corner of the mouse brain: (top) the spatial networks of the brain vessels, (middle row) the strong negative linking between the edges after an optimal orientation has been assigned (dark red/blue if incomplete linking $<-0.1$, slightly lighter red/blue if $<-0.05$, and the lightest otherwise), and (bottom) the linking centrality of the edges; we separate two different sets of edges by their radius and different cut values, $c$, where (left) $c=1.5\,\mu m$, (middle column) $c=2\,\mu m$ and (right) $c=5\,\mu m$. \add{The displayed box spans $x\in [1670, 1800]$, $y\in [3970, 4280]$ and $z\in [1060,1760]$.}}
    \label{fig:vessel-zone3} 
\end{figure}

\paragraph{Zone I.} 
Zone I is from the cerebellum, containing $353$ nodes, of which $167$ are in region Lobules IV, $60$ correspond to fiber traits, $43$ are in Pons Sensory, $39$ are in Hindbrain, and $36$ are in Cerebellum.  
The spatial network in this zone is characterized by a main branch of high-radius edges going through the top left corner to the bottom right corner; see Fig.~\ref{fig:real-all}B. 
As the radius threshold values increase, the number of intrinsic negative edge-pairs with strong signals first becomes larger, and then becomes smaller; see Fig.~\ref{fig:real-all}D. We note that in the middle column, where the numbers of edges in the two sets are close to each other, the corresponding signed network has the largest number of intrinsic strong negative edge-pairs among the three cases. This can also be implied from the unbalance score $\lambda_{\min} = 0.43$ versus $0.41$ 
for the left column and $0.39$ 
for the right column. 
Zooming in on the zone, we can identify the bottom-right region with high complexity, because there are consistently intrinsic strong negative edge-pairs, \textit{i.e.}, where the main complexity originates, and the edges have consistently high centrality values, as the radius threshold varies; see Fig.~\ref{fig:real-all}E. 

\paragraph{Zone II.} 
Zone II is from a heterogeneous region including the midbrain, with overall $310$ nodes of which $87$ are root, $83$ are in Basal Ganglia, $41$ correspond to fiber traits, $37$ are in Midbrain, $29$ are in Hippocampal Formation, $17$ are in Midbrain Motor-related.    
The spatial network in Zone II also exhibits more heterogeneous patterns, characterized by two main branches of high-radius edges, joining at the middle and going in different directions upwards; see Fig.~\ref{fig:vessel-zone2} (top). 
In this zone, the characteristics indicate qualitative differences from the previous case: after reaching the point where the number of edges in the two sets is approximately the same, further increasing the cut value can lead to more complexity; see Fig.~\ref{fig:vessel-zone2} (middle row) where there are increasing number of intrinsic negative edge-pairs with strong signals as the cut value rises. This can also be implied from the unbalance score $\lambda_{\min} = 0.398, 0.404, 0.408$ 
(from left to right). Therefore, there is a relatively larger range of radius threshold values corresponding to high complexity in Zone II than in Zone I. 
With \name, we can further identify the middle region where the two main branches join and diverge as high complexity, with consistently intrinsic strong negative edge-pairs and edges of consistently high centrality as the radius threshold value varies; see Fig.~\ref{fig:vessel-zone2} (bottom).

\paragraph{Zone III.} 
Zone III is from the medulla, featured by $465$ nodes of which $272$ are in region Medulla Motor-related, $143$ are in Medulla Behavorial State-related, and $50$ are in Medulla.  
The spatial network in Zone III is characterized by not only a main branch of high-radius edges going from left to right, but also two relatively loosely connected parts, where the top part contains the main branch; see Fig.~\ref{fig:vessel-zone3} (top). 
In this zone, we further observe qualitative differences in the characteristics from the previous two cases: as the radius threshold increases from a small value corresponding to many more high-radius edges than the low-radius ones, the complexity may not increase as in Zones I and II, but maintain a similar level and then decreases; see Fig.~\ref{fig:vessel-zone3} (middle row).   
When the threshold is small, we observe nontrivial linking between the two sets that occur in both top and bottom parts; see Fig.~\ref{fig:vessel-zone3} (middle left). As the threshold value increases, the number of intrinsic negative edges with strong signals first maintains and then reduces. This can also be confirmed by the unbalance score $\lambda_{\min} = 0.39, 0.43$ 
for the left and middle columns and $0.35$ for the right column. 
Furthermore, we identify the top middle region as high complexity, confirmed by the consistent occurrence of intrinsic strong negative edge-pairs and edges of high centrality, as the radius threshold varies. 

\paragraph{Radius Null Model.} In our analysis, we focus on the effect of network architecture, and consider primarily null models where the network structure is randomized. Another perspective to explore is from the radius distribution. To start with, we consider the null model where the vessel radius is randomly shuffled but the network structure is maintained, which we refer to as ``Rad'' for Radius Null Model hereafter. We observe that randomizing the vessel radius also leads to behavior distinct to the real case, and the results from the Rad model usually lie between the real data and the SRG model; see Fig.~\ref{fig:vessel-null-radius}. For both the mean and the max values of the linking centrality, the results from the Rad model have the same overall trend as the SRG model, but the values are significantly lower, at a level that is very close to the real case. The results are consistent among the three zones. With respect to the unbalance score, the Rad model shows an overall trend similar to that of the SRG model, while their comparative behavior differs across zones. In Zone I, the real unbalance score is close to and slightly exceeds, the SRG baseline for small radius cuts; the Rad model shows the same qualitative behavior. In Zone II, where the real unbalance score rises significantly above the SRG model for radius cuts between $5$ and $10\, \mu m$, the Rad model again lies above the SRG model, with a stronger deviation than in Zone I. In Zone III, by contrast, the real unbalance score remains consistently below the SRG model, and the Rad model similarly falls below the SRG baseline at small radius cuts. Because most vessels have a radius below $10\, \mu m$, the behavior for radius cuts at or above this threshold is governed by the locations of the relatively few above-threshold edges, therefore, a behavior close to the SRG model is expected. Taken together, these observations suggest that the Rad model captures features of the real vascular data that are orthogonal to the other null models. While edge attributes, such as vessel radius here, provide additional and complementary sources of information, the present study focuses on spatial network organization; their systematic incorporation will be explored in future work.  

\begin{figure}[htbp]
    \centering
    \begin{tabular}{ccc}
        \includegraphics[width=.32\textwidth]{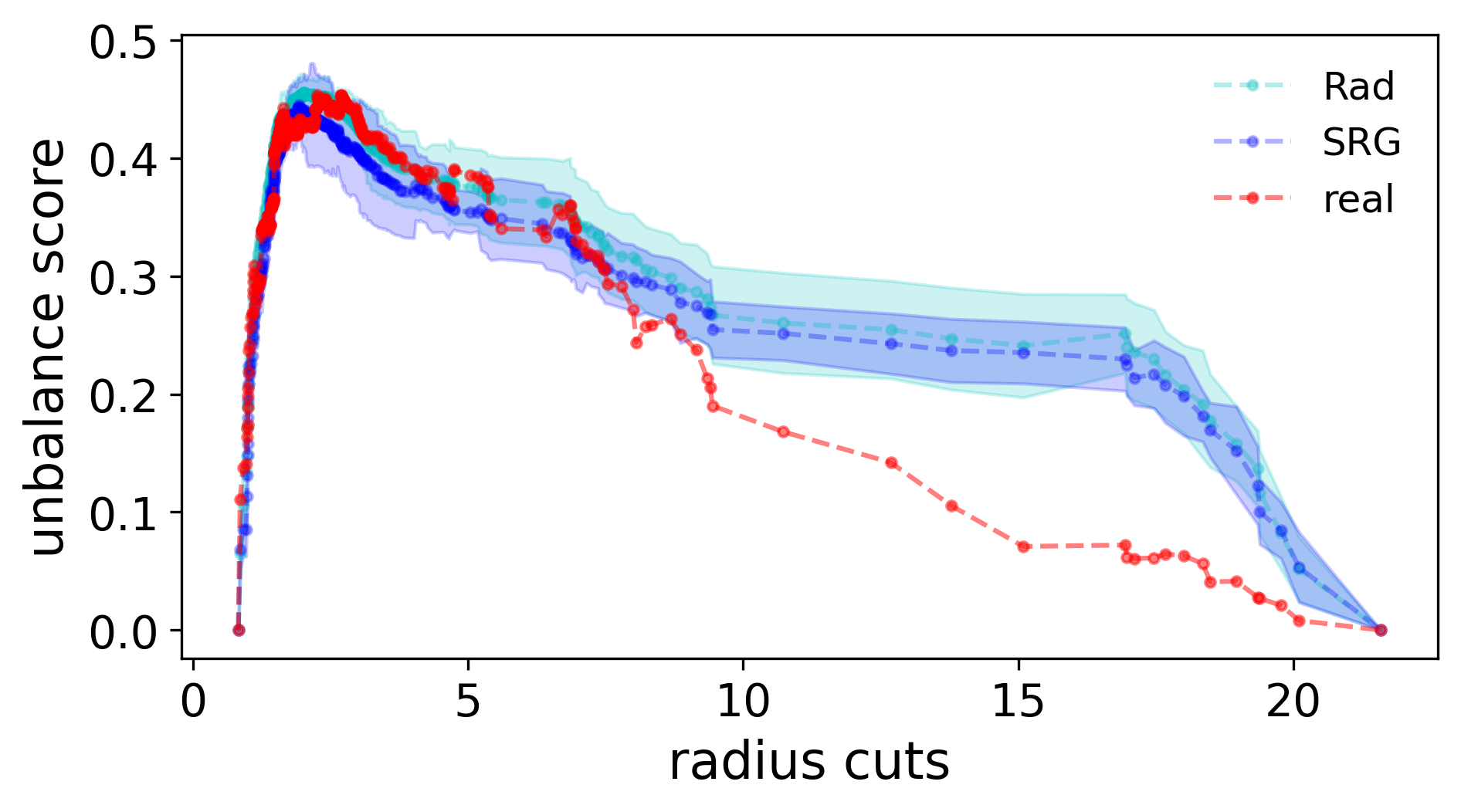} & \includegraphics[width=.32\textwidth]{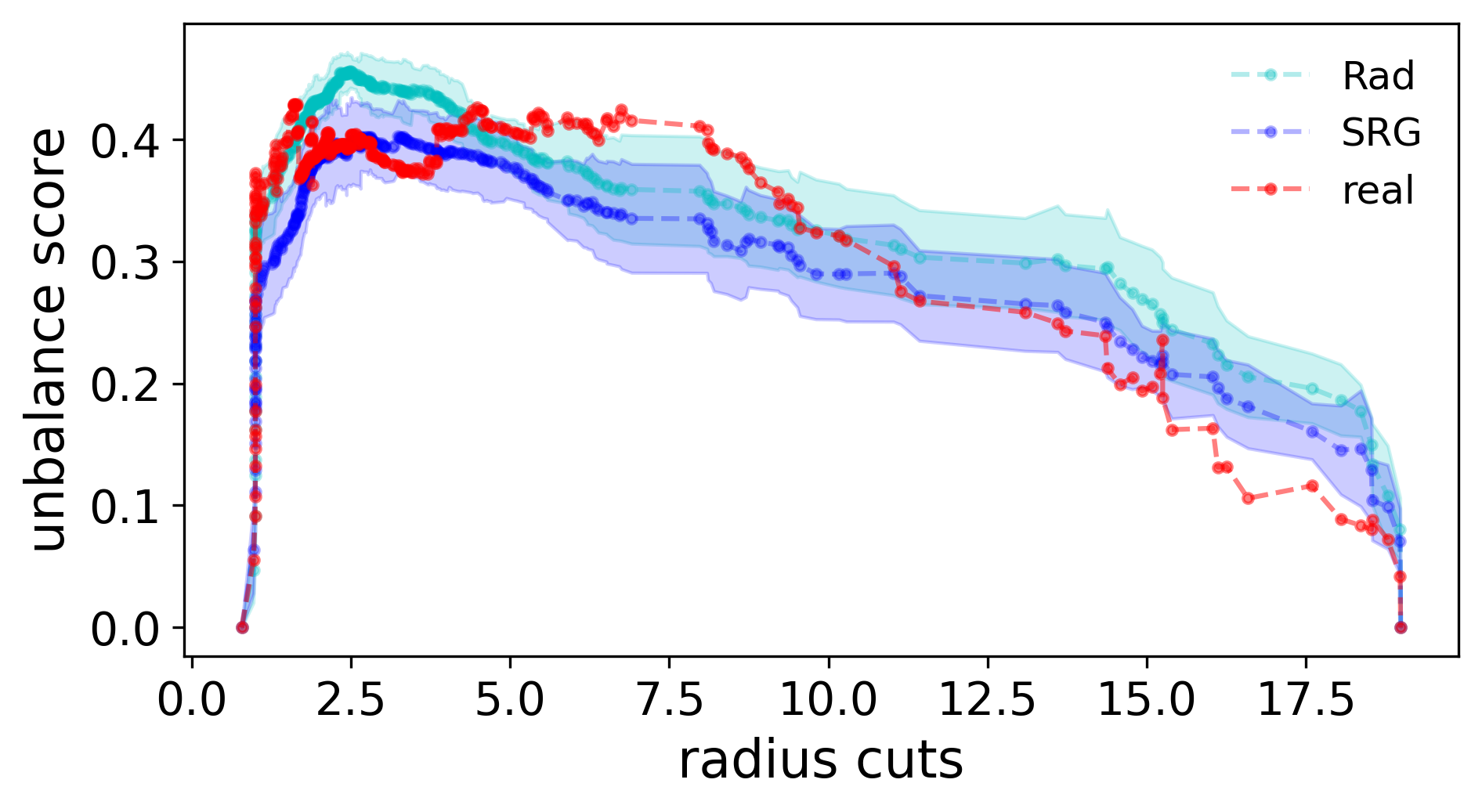} & \includegraphics[width=.32\textwidth]{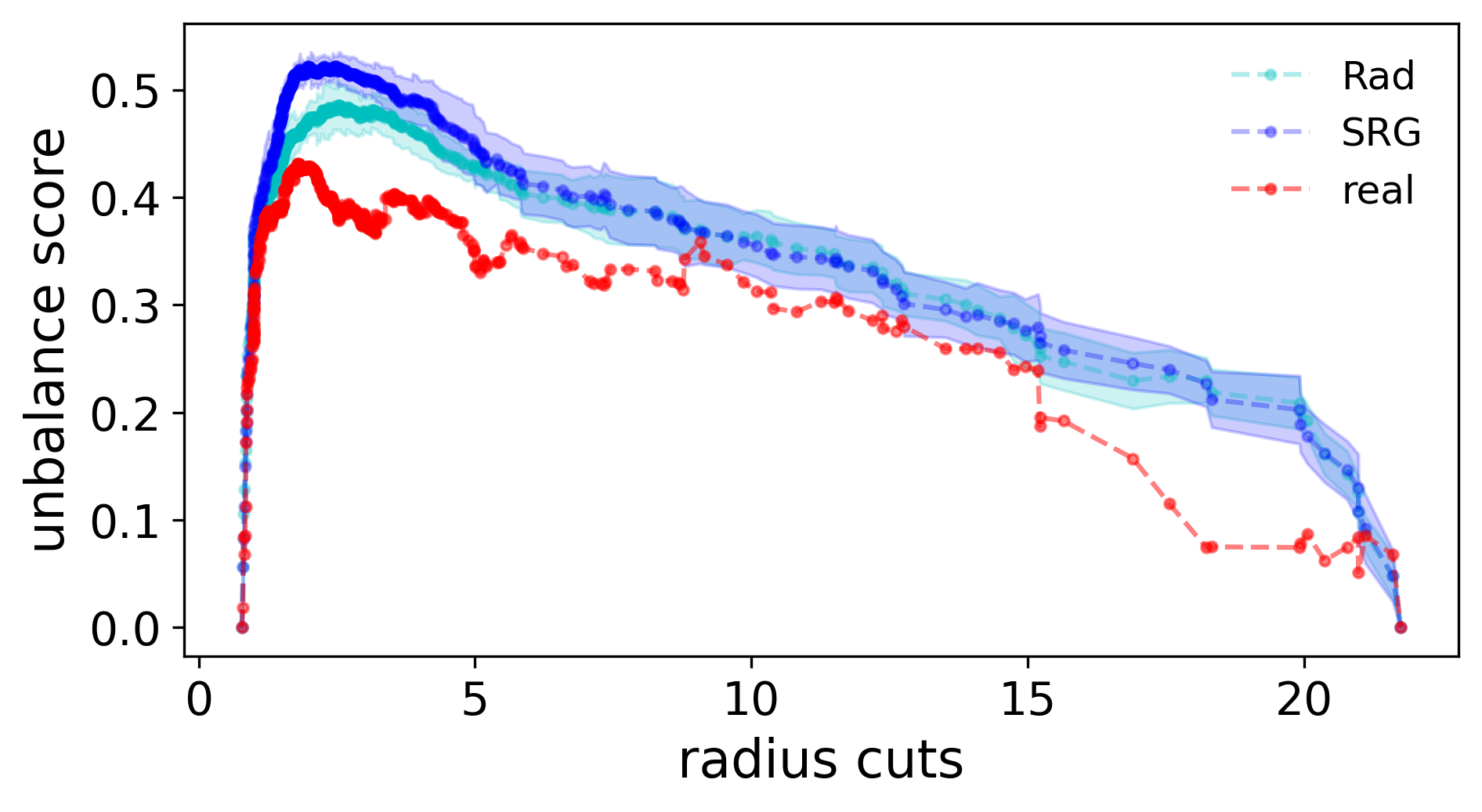} \\
        \includegraphics[width=.32\textwidth]{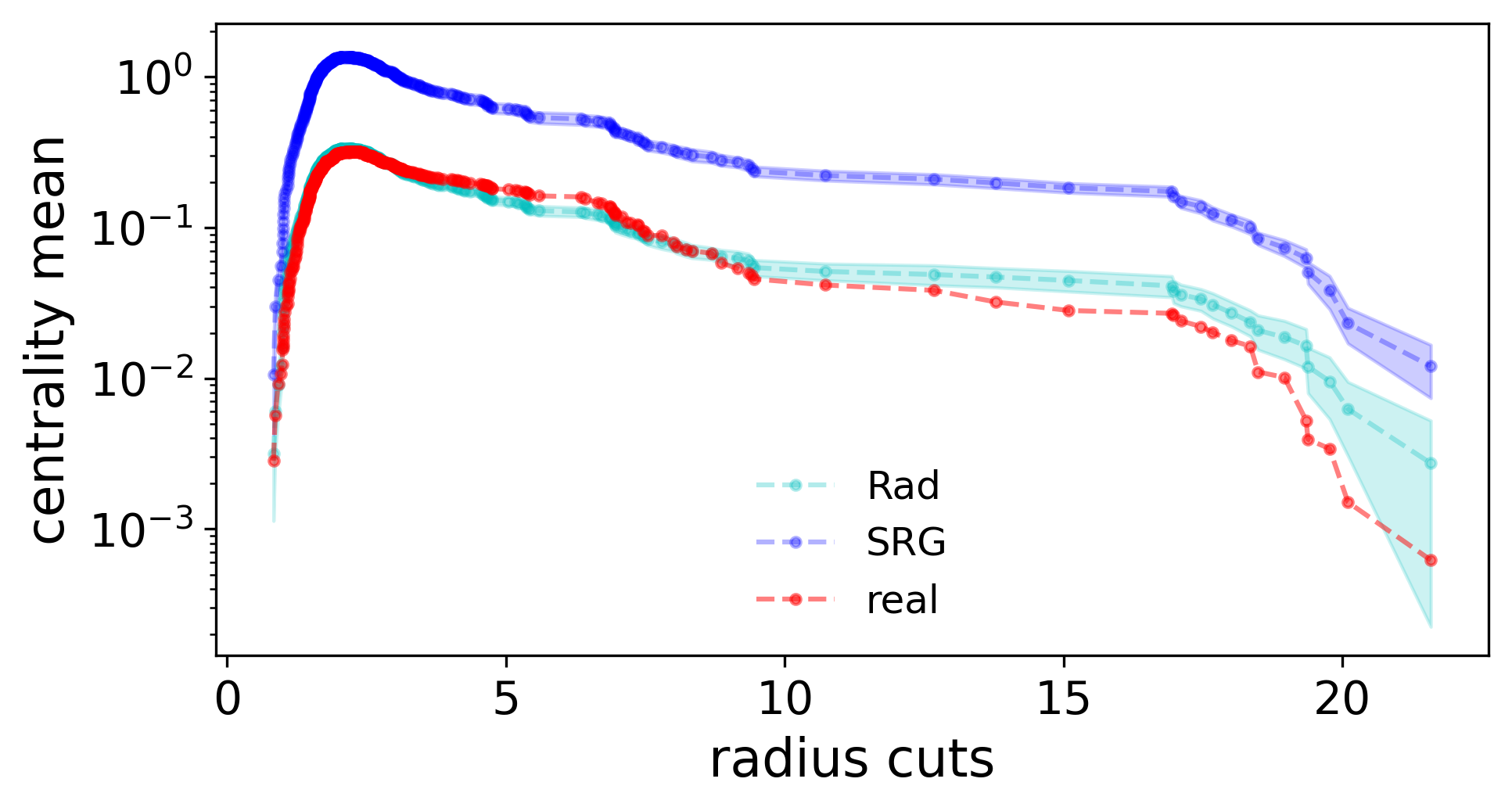} & \includegraphics[width=.32\textwidth]{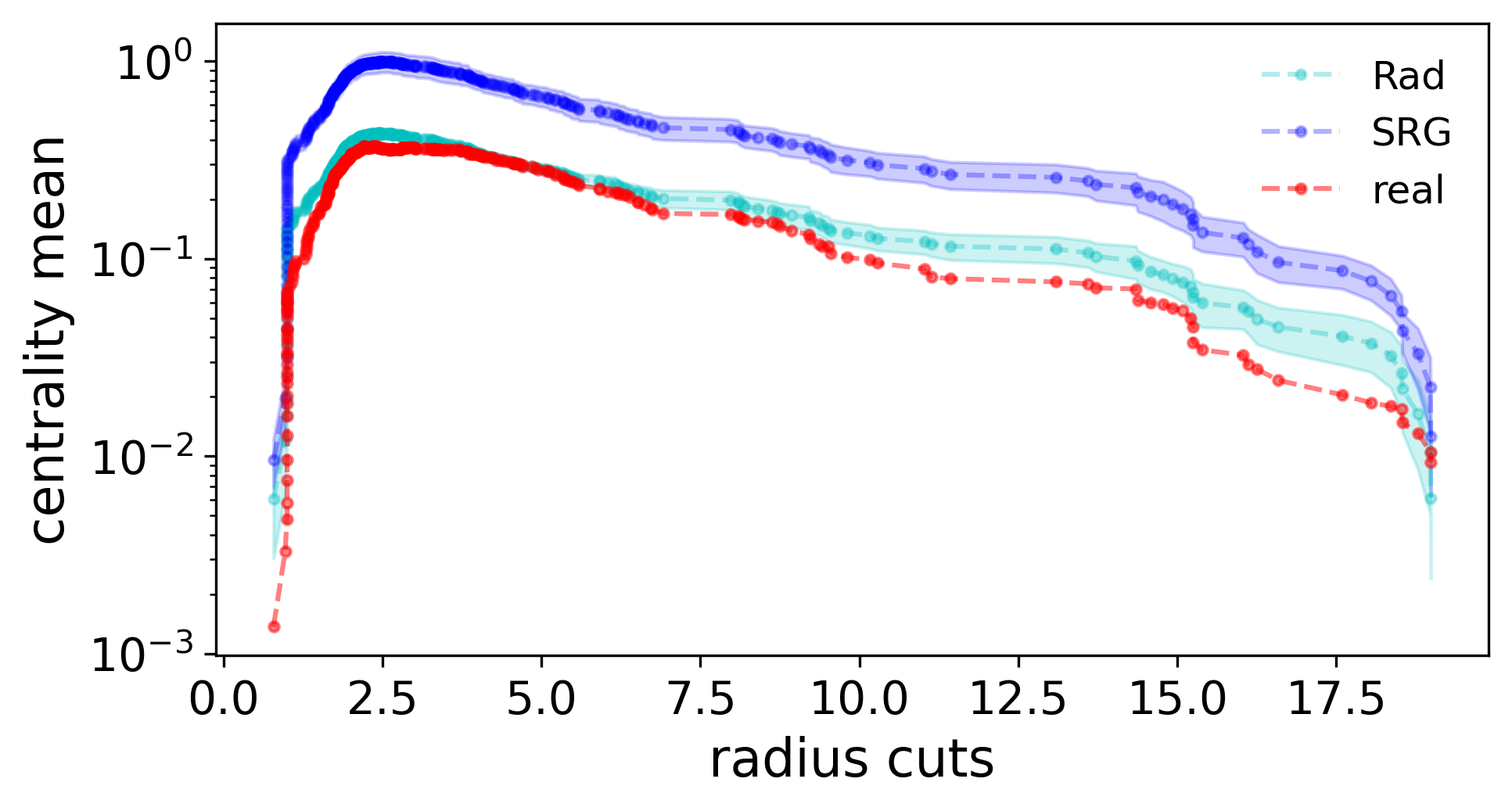} & \includegraphics[width=.32\textwidth]{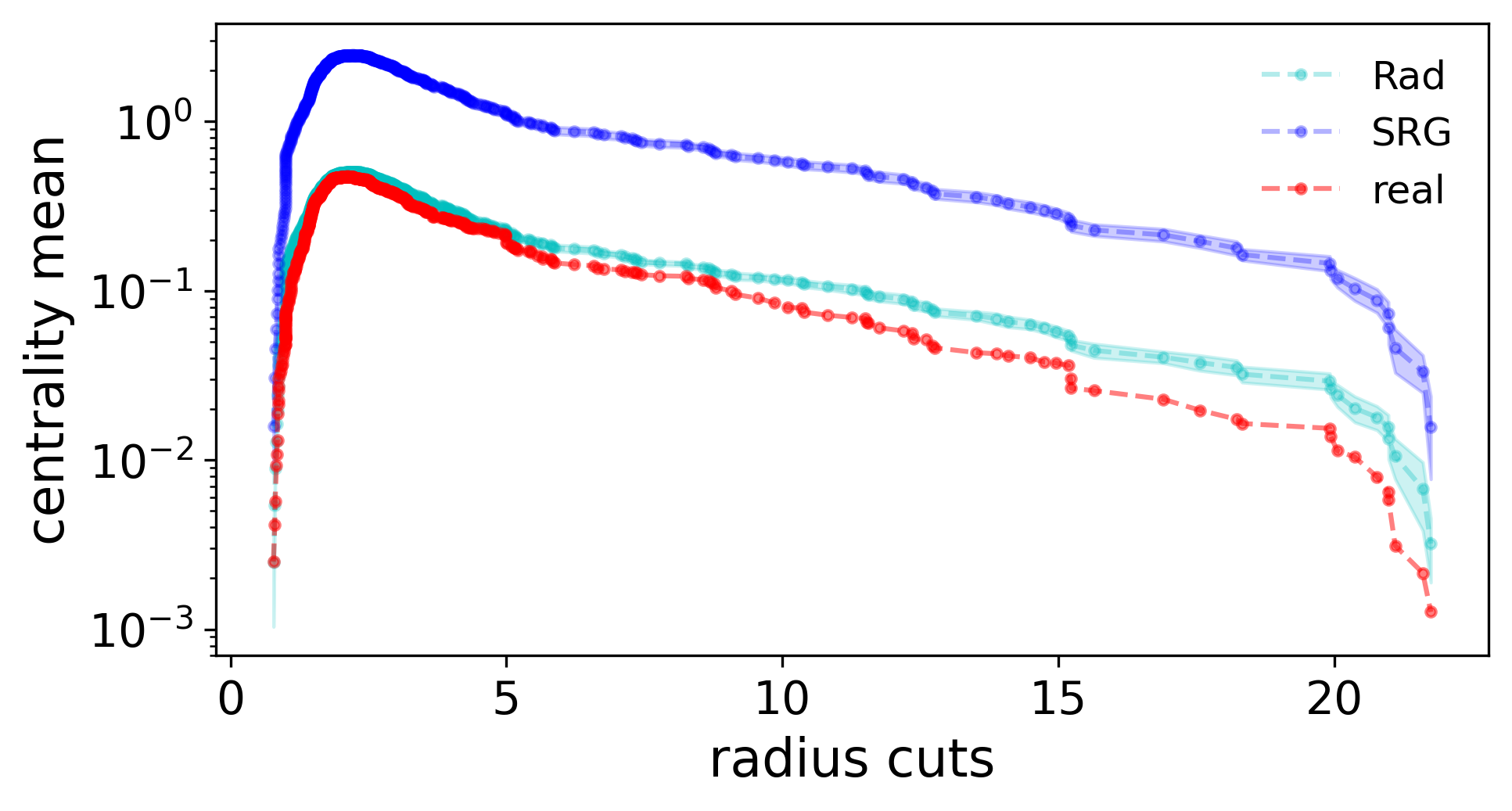}\\
        \includegraphics[width=.32\textwidth]{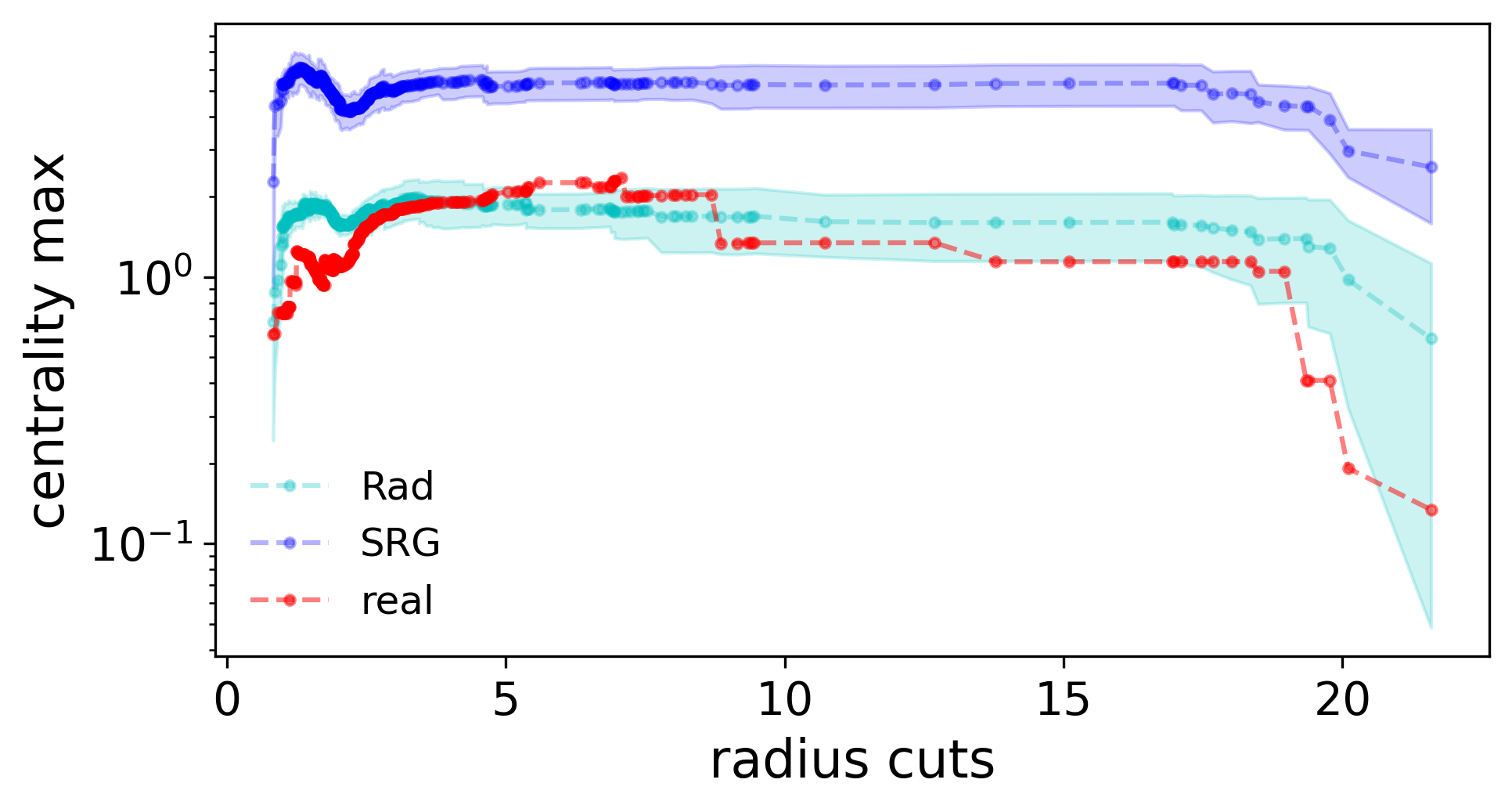} & \includegraphics[width=.32\textwidth]{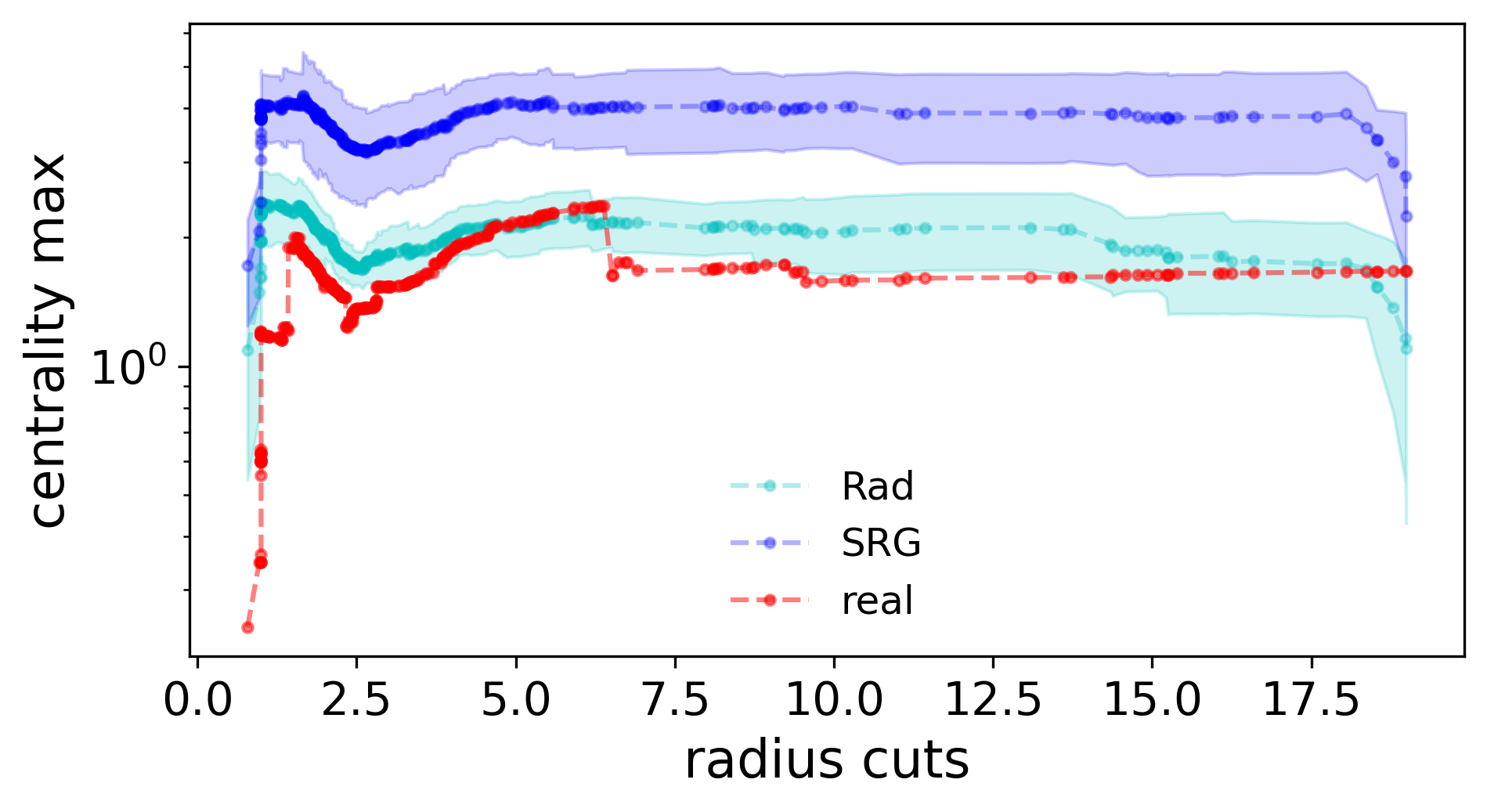} & \includegraphics[width=.32\textwidth]{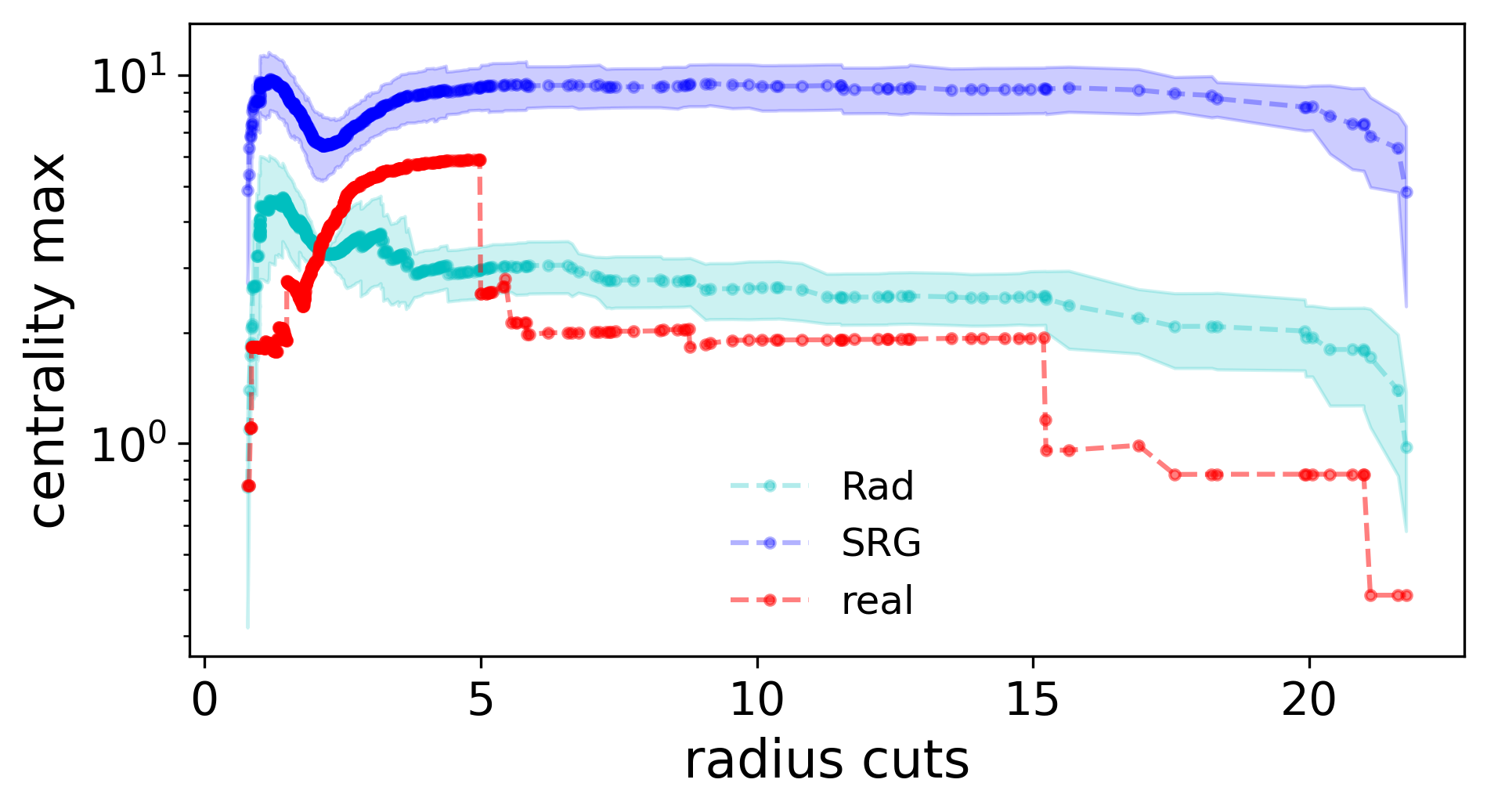}
    \end{tabular}
    \caption{Change of unbalance scores (top row), mean (middle row), and max (bottom row) linking centrality of all edges with respect to all thresholds on the radius for Zones I, II, and III from the real data, the Rad model, and the SRG model (mean: dashed line; std: shade of all samples).}
    \label{fig:vessel-null-radius}
\end{figure}

\end{document}